\pdfoutput=1
\documentclass[a4paper,onecolumn,nopdfoutputerror,accepted=2025-10-17]{quantumarticle}
\usepackage{CSTheoryToolkitCMUStyle}
\usepackage{Custom}
\usepackage{wasysym}
\usepackage{array}
\usepackage{bookmark}
\usepackage{quantikz}
\usepackage[nameinlink]{cleveref}

\makeatletter
\renewcommand{\@seccntformat}[1]{%
  \ifcsname prefix@#1\endcsname
    \csname prefix@#1\endcsname
  \else
    \csname the#1\endcsname\quad
  \fi}
\makeatother

\title{Effective Distance of Higher Dimensional HGPs and Weight-Reduced Quantum LDPC Codes}

\author[1]{Shi Jie Samuel Tan}
\author[1]{Lev Stambler}
\affiliation[1]{Joint Center for Quantum Information and Computer Science, University of Maryland, College Park, MD 20742, USA}
\affiliation[1]{Department of Computer Science, University of Maryland, MD 20742, USA}

\begin{document}
\newcommand{\code}{\mathcal{C}}
\newcommand{\parMat}{H}
\newcommand{\colWeight}{\Delta_v}
\newcommand{\rowWeight}{\Delta_c}
\newcommand{\factGraph}{\calG}

\newcommand{\bitIdx}{b}
\newcommand{\checkIdx}{s}
\newcommand{\bitDirection}{D}
\newcommand{\bitCopy}{\bitIdx_\bitDirection^i}
\newcommand{\stabGauge}{\checkIdx_a^{j}}

\newcommand{\gaugeBit}{\bar{\bitIdx}}
\newcommand{\bitsNotEnd}{B(\texttt{Body})}
\newcommand{\bitsEnd}{B(\texttt{End})}

\newcommand{\chainComplex}{\mathcal{C}}
\newcommand{\complex}{\mathcal{A}}
\newcommand{\Hx}{H_x}
\newcommand{\Hz}{H_z}
\newcommand{\DiamondComplex}{\blacklozenge}
\newcommand{\DiamondComplexGeneric}{\blacklozenge(\bitIdx, i, i + 1)}

\newcommand{\DiamondComplexEmpty}{\lozenge}
\newcommand{\Degenerator}{{\CIRCLE} \;}
\newcommand{\DegeneratorNS}{{\CIRCLE}}
\newcommand{\new}[1]{\widetilde{#1}}
\newcommand{\row}{\text{row}}

\newcommand{\updatedHx}{\widetilde{\Hx}}
\newcommand{\updatedHz}{\widetilde{\Hz}}
\newcommand{\complexNeighb}{\Gamma}

 \newcommand{\LEV}[1]{\textcolor{red}{[LEV: #1]}}
 \newcommand{\SAM}[1]{\textcolor{green}{[SAM: #1]}}

\algrenewcommand\algorithmicrequire{\textbf{Input:}}
\algrenewcommand\algorithmicensure{\textbf{Output:}}

\newcommand{\prunedT}{\widetilde{\calT}}
\newcommand{\col}{\text{col}}

\newcommand{\im}{\text{im}}

\newcommand{\padArrow}[1]{\xrightarrow{\hspace{0px}{#1}\hspace{0px}}}


\date{}
\maketitle

\begin{abstract}
  Quantum error correction plays a prominent role in the realization of quantum computation, and quantum low-density parity-check (qLDPC) codes are believed to be practically useful stabilizer codes.
  While qLDPC codes are defined to have constant weight parity-checks, the weight of these parity checks could be large constants that make implementing these codes challenging.
  Large constants can also result in long syndrome extraction times and bad error propagation that can impact error correction performance.
  Hastings recently introduced weight reduction techniques for qLDPC codes that reduce the weight of the parity checks as well as the maximum number of checks that acts on any data qubit~\cite{hastings2021quantum}.
  However, the fault tolerance of these techniques remains an open question.
  In this paper, we analyze the effective distance of the weight-reduced code when single-ancilla syndrome extraction circuits are considered for error correction.
  We prove that there exists single-ancilla syndrome extraction circuits that largely preserve the effective distance of the weight-reduced qLDPC codes.
  In addition, we also show that the distance balancing technique in Ref.~\cite{evra2022decodable} preserves effective distance.
  As a corollary, our result shows that higher-dimensional hypergraph product (HGP) codes, also known as homological product codes corresponding to the product of 1-complexes, have no troublesome hook errors when using any single-ancilla syndrome extraction circuit.
\end{abstract}
\tableofcontents

\section{Introduction}\label{sec:intro}

Quantum error correction takes a prominent role in the realization of quantum computation.
Stabilizer codes are a class of quantum error-correcting codes that are widely used in quantum error correction because its underlying theory simplifies error correction \cite{gottesman1997stabilizer}.
Calderbank-Shor-Steane (CSS) codes are a subclass of stabilizer codes that are particularly useful because they can be constructed from classical codes, giving us a way to combine our deep knowledge of classical codes with quantum error correction \cite{calderbank1996good,steane1996multiple}.
Surface codes, for example, are a useful class of CSS code candidate with a high threshold, low-density parity checks (LDPC), and geometrically local connectivity \cite{kitaev2003fault,dennis2002topological,fowler2012surface}.

While surface codes are a leading candidate, they have a vanishing rate.
Thus, researchers hope to find quantum LDPC (qLDPC) codes with good rate which can be practically implemented.
Unfortunately, most known qLDPC codes with linear distance and constant rate have a prohibitively large (though constant) stabilizer weights \cite{panteleev2022asymptotically, leverrier2022quantum}.
Considering the physical implementation of these codes, the large constant weight of the stabilizer generators not only puts them effectively out of reach of near-term physical devices but also leads to expensive error correction circuits with a large number of gates and long syndrome extraction time.
When we consider faults on the ancilla qubits used for syndrome extraction, we may have a substantial decrease in the distance of the code due to the presence of troublesome hook errors~\cite{dennis2002topological, landahl2011fault, bravyi2012subsystem, gidney2023pair}.
This distance---obtained by considering faults on data and ancilla qubits---is known as the \emph{effective distance} of the code~\cite{brown2020critical, bombin2023logical,manes2023distance,grans2024improved}.

Because reducing the weight of the stabilizer generators can lead to a more efficient error correction scheme, several ideas have been put forward to achieve this.
One such strategy involves transforming our CSS stabilizer codes into subsystem codes and then measuring the low constant weight gauge operators to effectively perform stabilizer measurement.
Notable examples include the Bacon-Shor code where weight-4 checks are replaced with weight-2 checks \cite{bacon2006operator, aliferis2007subsystem}, other topological subsystem codes~\cite{bombin2010topological,higgott2021subsystem}, and recent innovations such as Floquet codes~\cite{hastings2021dynamically}.
Other strategies that are more modern involve utilizing the space-time picture of error correction~\cite{bacon2017sparse,gottesman2022opportunities, delfosse2023spacetime} to consider ways that may (in)directly reduce the weight of the stabilizer generators.
Sabo \emph{et al.} have also considered applying weight reduction on the classical codes that are used to construct HGP codes and Lifted Product (LP) codes to achieve weight-reduced HGP and LP codes~\cite{sabo2024weight}.
However, it is not immediately obvious that their strategy would work equally well for other CSS stabilizer qLDPC codes.
In fact, it is unclear how their strategy would work for quantum codes that are not constructed from classical codes. 

In Refs.~\cite{hastings2016weight, hastings2021quantum}, Hastings introduced four techniques to reduce the weight of the stabilizer generators of a CSS stabilizer code.
These techniques are \emph{copying}, \emph{gauging}, \emph{thickening and choosing height}, and \emph{coning}.
Interestingly, Hastings' weight reduction techniques have been generalized for balancing the distance of quantum codes \cite{evra2022decodable} and are relevant in the construction and modification of quantum Locally Testable Codes (qLTCs) with implications in Hamiltonian complexity~\cite{aharonov2015quantum,eldar2017local,wills2023general,wills2023tradeoff,dinur2024expansion}.
While these techniques are successful in reducing the weight of all stabilizer generators to at most 5, they come at the cost of increasing the number of physical qubits by a constant factor that is widely considered to be prohibitive by most near-term standards.
Nonetheless, they are still interesting from a theoretical perspective and may be useful in the future as a tool when qubits are more plentiful and stabilizer weights become a bottleneck.
Most importantly, they have inspired works such as Ref.~\cite{sabo2024weight} to makes progress towards more practical weight reduction.

Remarkably, we find that Hastings's weight reduction techniques have nice fault tolerance properties as they largely preserve the pre-weight reduced codes' scaling of effective distance and sometimes \emph{improving} the effective distance. 
Our work provides evidence that the weight-reduced quantum codes remain relatively fault-tolerant when compared to their pre-weight reduced counterparts.
Moreover, the preservation and improvement of effective distances may be independently interesting from a theory perspective.

In this paper, we analyze how Hastings's weight reduction techniques interacts with the structure of the original quantum code and how these interactions affect the effective distances of the code.
In particular, we focus on \emph{copying}, \emph{gauging}, \emph{thickening and choosing heights}, and \emph{coning}.
Specifically, we study whether there are measurement schedules that can help us avoid troublesome hook errors which are errors within stabilizer measurement schedules which cause multiple data errors.
Our main results are stated informally below.  
\begin{theorem}[Informal Statement of Effective Distance Preservation for Copied and Gauged Codes]\label{thm:main_2_informal}
  Suppose we are given a CSS stabilizer qLDPC code with some single-ancilla syndrome extraction schedule that results in effective distances $d_X$ and $d_Z$.
  There exists some single-ancilla syndrome extraction schedule for the copied and gauged qLDPC code that results in effective distances at least $\Omega\left(1/w_X\right)d_X$ and $d_Z$ where $w_X$ is maximum weight of any $X$ stabilizer generator.
\end{theorem}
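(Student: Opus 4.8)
The plan is to exhibit one explicit single-ancilla schedule for the copied-and-gauged code and to show that any fault configuration of this circuit that fires no detector yet implements a nontrivial logical can be pushed forward, along the chain maps that accompany copying and gauging, to a fault configuration of the \emph{original} circuit with the same property --- losing nothing for $Z$-type logicals and a factor $O(w_X)$ for $X$-type logicals. Write $\chainComplex=(\mathbb F_2^{r_X}\xrightarrow{\Hx^{\mathsf T}}\mathbb F_2^{n}\xrightarrow{\Hz}\mathbb F_2^{r_Z})$ for the chain complex of the original code and $\new{\chainComplex}$, with parity checks $\updatedHx,\updatedHz$, for that of the copied-and-gauged code; ``effective distance'' means the least size of a circuit-level fault set (data, ancilla, and measurement faults) that produces a nontrivial logical of the indicated type while firing no detector.

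First I fix the schedule. After copying and gauging, every $X$-stabilizer of $\new{\chainComplex}$ has weight $O(1)$: gauging replaces each weight-$w$ $X$-check $s$ by $O(w)$ weight-$\le 3$ fragments $\stabGauge$ whose product is $s$, introducing $O(w)$ gauge qubits $\gaugeBit$ and one ``degenerator'' $Z$-check supported only on those gauge qubits, while copying introduces copy qubits $\bitCopy$ and weight-$2$ gluing $Z$-checks and changes no $X$-check weight. I schedule each $X$-check and each weight-$2$ copying $Z$-check with an arbitrary CNOT order, each degenerator $Z$-check with an arbitrary CNOT order, and every check inherited from $\chainComplex$ with exactly its original order. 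Because a single-ancilla circuit has no data--data gates, a single circuit fault then produces a data error whose $X$-part has weight $O(1)$ (all $X$-check hook errors are short), and whose $Z$-part, restricted to the non-gauge qubits, has weight $O(1)$ (the only $\omega(1)$-weight $Z$-hook errors come from the degenerators and lie entirely on gauge qubits).

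Next I use the maps. Copying and gauging are accompanied by a reduction chain map $\phi:\new{\chainComplex}\to\chainComplex$ and a lifting chain map $\psi:\chainComplex\to\new{\chainComplex}$ forming a chain-homotopy equivalence, so $\phi$ induces an isomorphism of $Z$-logical spaces ($H_1$) and (the transpose of the qubit-level part of) $\psi$ induces an isomorphism of $X$-logical spaces ($H^1$); call the latter map $\psi^{\mathsf T}$. Both carry nontrivial logicals to nontrivial logicals. The quantitative inputs are: (a) $\phi$ annihilates every gauge qubit $\gaugeBit$ and is support-non-increasing on all other qubits; (b) $\psi^{\mathsf T}$ fixes the original qubits, sends each copy qubit to its original, and sends a gauge qubit born from gauging a weight-$w$ $X$-check $s$ to a sub-word of $s$, hence to a vector of weight $\le w\le w_X$. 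Now let $F$ fire no detector and implement a nontrivial logical, with net data error $E$. If the $Z$-type part of $E$ is a nontrivial logical, write $E_{\mathrm{ng}}$ for its non-gauge part; then $\phi(E)=\phi(E_{\mathrm{ng}})$ is a nontrivial $Z$-logical of $\chainComplex$ of weight $\le|E_{\mathrm{ng}}|$, and as a static logical error it is itself an undetected harmful fault set of the original circuit, so $d_Z\le|E_{\mathrm{ng}}|=O(|F|)$; a per-fault analysis --- each new fault maps, via the inherited schedules and the weight-$2$ copying checks, to at most one fault of the original circuit relevant to its $Z$-distance --- sharpens this to $d_Z\le|F|$. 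If the $X$-type part of $E$ is a nontrivial logical, then $\psi^{\mathsf T}(E)$ is a nontrivial $X$-logical of $\chainComplex$ of weight $\le w_X|E|=O(w_X)|F|$, and the same reasoning gives $d_X\le O(w_X)|F|$, i.e.\ $|F|\ge\Omega(d_X/w_X)$. Minimising over $F$ yields the two stated bounds for this schedule.

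The bulk of the work --- and where I expect the real difficulty --- is making this bookkeeping rigorous: constructing $\phi$ and $\psi$ with the support behaviour in (a) and (b) (morally Hastings's distance-preservation argument, but with supports tracked carefully enough to extract the factor $\le w_X$ in (b) and the exact vanishing in (a)); correctly formulating ``net data error'' and detector-triviality in the multi-round circuit-level (spacetime) model, and checking that measurement faults on non-inherited checks of $\new{\chainComplex}$ may be discarded on passing to the original circuit; and --- the most delicate point --- confirming that the $\Theta(w_X)$-weight measurement circuits of the degenerator $Z$-checks do no harm, which is precisely the content of $\phi(\gaugeBit)=0$.
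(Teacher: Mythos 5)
Your overall push-forward strategy is workable and your $Z$-distance argument (discard gauge-qubit support, obtain a nontrivial $Z$-logical of $\chainComplex$, invoke the original circuit's fault distance) is in the same spirit as the paper's proof of Theorem~\ref{thm:distance_preserving_copied_gauged}. But there are two substantive issues. First, a factual error about the construction: gauging as described in the paper adds \emph{no} new $Z$-checks ($n'_Z=n_Z$ in the parameter lemma); the $Z$-checks are merely extended onto gauge qubits to restore commutativity. Your ``degenerator $Z$-check supported only on gauge qubits'' does not exist, so the ``most delicate point'' you flag at the end is a non-issue, and --- more worryingly --- it suggests you haven't checked the construction you are reasoning about. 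Second, for the $X$-effective-distance you take a genuinely roundabout route. The paper's argument is that after copying and gauging every $X$-stabilizer has weight at most~3, so by Corollary~\ref{cor:distance_preserving_stabilizers} a single fault in any $X$-check measurement creates at most one data error; hence the effective $X$-distance of $\calQ'$ equals its code distance $d_X'=\Omega(d_X/w_X)$ under \emph{any} schedule. No chain map $\psi^{\mathsf T}$ or factor-of-$w_X$ bookkeeping is needed, and the paper's route gives a cleaner (in fact stronger) statement.

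The remaining gap is the one you flag as ``the bulk of the work'' for $Z$: your ``per-fault analysis'' is exactly where the paper does something nontrivial, and you should not expect it to fall out of generic properties of $\phi$. In the copying step the paper's schedule $M'$ entangles the first half of each copy-group $\{q_{i,1},\dots,q_{i,\lfloor q_X/2\rfloor}\}$, then the second halves, across all groups in the support of a $Z$-check; this specific interleaving is what makes a single $Z$-hook act on entire copy-groups in a controlled way, so that a fault of the new circuit corresponds to at most one useful fault of the old. Without specifying the CNOT order on the $Z$-checks (you say ``arbitrary CNOT order'' for the copying/gluing checks and defer the modified inherited $Z$-checks to an unspecified per-fault argument) your claim $d_Z\le|F|$ is unsupported. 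So the proposal identifies the right global structure but is wrong on a detail of the construction, over-complicates the $X$ bound, and leaves the one genuinely delicate scheduling argument undone.
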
 

\begin{theorem}[Informal Statement of Effective Distance Preservation for Thickened and Height-Chosen Codes]\label{thm:main_3_informal}
  Suppose we are given a CSS stabilizer qLDPC code with some single-ancilla syndrome extraction schedule that results in effective distances $d_X$ and $d_Z$ and a classical code with distance $d_c$.
  There exists some single-ancilla syndrome extraction schedule for the thickened and height-chosen qLDPC code that results in effective distances $d_c \cdot d_X$ and $d_Z$.
\end{theorem}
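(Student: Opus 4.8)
The plan is to work with the chain-complex description of the thickened-and-height-chosen code and to transport the original code's syndrome-extraction schedule layer by layer, handling the \emph{thickening} step (tensoring with the classical code's complex) first and \emph{choosing heights} as effective-distance-preserving post-processing. Write $\mathcal{A}_\bullet$ for the chain complex of the given CSS code and $\mathcal{B}_\bullet$ for the two-term complex attached to the classical code of distance $d_c$; the thickened code is the CSS code obtained by totalizing the double complex $\mathcal{A}_\bullet\otimes\mathcal{B}_\bullet$. Each qubit and each $X$- or $Z$-check of the thickened code is then either \emph{horizontal} -- a copy of an original object sitting inside a single $\mathcal{B}_\bullet$-layer -- or \emph{vertical} -- attached to a parity check of the classical code and joining a bounded number of consecutive layers (the classical code being taken LDPC). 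For the schedule I would reuse, within each layer, the original code's CNOT ordering verbatim on every horizontal check, and assign an arbitrary fixed ordering to each vertical check. The structural payoff is that a fault on a horizontal check's ancilla produces a residual data error of exactly the shape of some original hook error of that check, living inside one layer, whereas a fault on a vertical check's ancilla produces a residual data error supported on $O(1)$ layers over a bounded set of original qubits. Since choosing heights only deletes qubits (and trims checks accordingly), I would argue it induces a chain map under which the schedule descends and which does not decrease effective distance, so it suffices to analyze the thickened code.

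Next I would fix the bookkeeping. An effective $X$-type (resp.\ $Z$-type) fault configuration of weight $t$ is a set of $t$ data errors and ancilla faults such that, after propagating each ancilla fault through the remaining CNOTs of its check, the total residual data error is a nontrivial $X$-cocycle (resp.\ $Z$-cycle) of the thickened code. I would then prove a \emph{layer-collapse lemma}: collapsing the $\mathcal{B}_\bullet$-direction -- the chain map in the construction that either sums over layers or restricts to a single layer -- carries an effective fault configuration of the thickened code to an effective fault configuration of the original code of no greater weight, since horizontal hook errors map to original hook errors and vertical hook errors map to trivial or $O(1)$-weight errors that may be absorbed as data errors. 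The $Z$-direction then follows immediately: a weight-$t$ effective $Z$-configuration of the thickened code with $t<d_Z$ would give one of the original code with $t<d_Z$, contradicting the hypothesis; the matching upper bound $d_Z$ holds because effective distance never exceeds code distance, which the distance-balancing analysis~\cite{evra2022decodable} pins at $d_Z$ in this direction.

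For the $X$-direction I would adapt the tensor-product distance lower bound of Ref.~\cite{evra2022decodable} to circuit-level noise, in two layers. First, the classical code's distance forces any nontrivial residual $X$-cocycle of the thickened code to be ``active'' on at least $d_c$ of the $\mathcal{B}_\bullet$-layers, since the pattern of layers it touches is controlled by the kernel and cokernel of $\partial^{\mathcal{B}}$ and so cannot be supported on fewer than $d_c$ layers. Second, on each active layer the restriction of the configuration -- the data errors there, the horizontal hook errors confined to that layer (which by our schedule are genuine original hook errors), and the inter-layer constraints which, following the distance-balancing analysis, reduce within a layer to the original cocycle condition -- is an effective $X$-fault configuration of the original code and hence has weight $\ge d_X$. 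Summing over the $\ge d_c$ active layers gives total effective weight $\ge d_c\cdot d_X$, vertical hook errors only adding to the count; the matching upper bound again comes from the code distance $d_c\cdot d_X$.

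I expect the technical heart, and the main obstacle, to be making the second step of the $X$-analysis precise in the presence of both hook errors and the choosing-heights step. One must show that after height selection the vertical checks still have bounded vertical span, so that no bounded family of vertical ancilla faults can ``stitch together'' active layers more cheaply than the classical distance $d_c$ allows -- a poorly chosen height assignment could create long-range vertical checks whose hook errors undermine the $d_c$ factor. One must also verify that fault propagation commutes with the layer-collapse map tightly enough that the per-layer restriction is genuinely an effective configuration of the original code, not merely a nonzero vector, so that reusing the original schedule really does inherit the original effective distance $d_X$ on each active layer rather than only a trivial weight-one bound. With those two points settled, the two-layer counting argument closes the proof.
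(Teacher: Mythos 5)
Your overall strategy is essentially the same as the paper's: reuse the original code's single-ancilla schedule verbatim within each layer (column), fix any ordering for the classical-check stabilizers, and then argue (i) within-layer hook errors are exactly the original hook errors, (ii) cross-layer hook errors have bounded span, and (iii) any nontrivial $X$-logical must be active on at least $d_c$ layers, each of which independently requires $\overline{d}^M_X$ faults. The paper implements this not via an abstract ``layer-collapse'' map but via explicit row/column-weight bounds: it first shows that $|s_A|_C = 1$ for every $X$-check row of $H_X \otimes \mathbbm{1}_{n_c}$ (so ancilla faults stay in one column of region $A$) and that every $Z[B]$ hook has $|(s_{Z[B]})_A|_R = 1$ (so cross-layer $Z$ hooks touch only one row of region $A$, hence at most one qubit in any fixed column); it then shows any $X$-logical has $|x_A|_C \geq d_c$ and $|x_A|_R \geq d_X$ and any $Z$-logical has $|z_A|_R \geq d_Z$. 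Your layer-collapse lemma, read as ``restrict the residual error to a single active column $c$,'' is morally the same statement: the $X$-hook errors are disjoint across columns, so the per-column faults add, and restricting a $Z[B]$ hook to one column gives at most one qubit. You should be explicit that ``collapse'' means restriction (not summing over layers, which can annihilate the logical), and that the reason it does not increase weight is precisely the one-row bound on $Z[B]$ hooks, not a vague ``absorbed as data errors.''

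One concern you flag is misplaced. Choosing heights never creates vertical checks, long-range or otherwise: it deletes redundant rows of $H_Z \otimes \mathbbm{1}_\ell$ (the within-layer $Z$ checks), leaving the stabilizer group, the logical operators, and all cross-layer $Z[B]$ checks untouched. Deleting checks can only shrink the set of realizable hook errors, so effective distance cannot decrease; the vertical span of the surviving $Z[B]$ checks is fixed by the classical code's LDPC check weight and is independent of the height assignment. Once you remove that worry, your two-step counting in the $X$-direction and the per-row counting in the $Z$-direction close the argument as in the paper.
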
 

\begin{theorem}[Informal Statement of Distance-Balanced Codes]\label{cor:main_1_informal}
  The distance gained from the distance balancing technique in Ref.~\cite{evra2022decodable} is preserved for any stabilizer measurement schedule.
\end{theorem}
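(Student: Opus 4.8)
The plan is to work at the level of chain complexes, using the fact that the distance-balanced code of Ref.~\cite{evra2022decodable} is the homological (tensor) product of the original code's chain complex $\mathcal{Q}\colon Q_2 \xrightarrow{\partial_2} Q_1 \xrightarrow{\partial_1} Q_0$ with a short chain complex $\mathcal{P}\colon P_1 \xrightarrow{\gamma} P_0$ assembled from the classical code and its transpose. In this description the qubits and the rows of $\updatedHx$ and $\updatedHz$ decompose along the tensor factors: through the product differential every stabilizer generator is supported on one ``slab'' $Q_i\otimes P_j$ together with its $\gamma$- or $\partial$-neighbouring slabs, and its restriction to any single neighbouring fiber is at most one qubit. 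First I would record what a hook error is in this language: for an arbitrary single-ancilla syndrome-extraction circuit, a fault during the measurement of a generator $\sigma$ creates a data error supported on a subset of the support of $\sigma$, so every hook error of the balanced code is a ``partial'' version of a product-structured generator, confined to its slab, and restricts to at most a single qubit in each neighbouring fiber.

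Next I would re-run, now keeping track of fault counts, the argument of Ref.~\cite{evra2022decodable} that the true distance in the amplified direction is multiplied by the classical-code distance $d_c$. Take a minimal fault configuration $F$ --- data errors together with hook errors --- whose net effect $E$ is a nontrivial $X$-logical of the balanced code, and slice $E$ over the basis of the classical factor exactly as in that argument; this exhibits at least $d_c$ fibers on each of which the restriction of $E$ is a nontrivial $X$-logical of $\mathcal{Q}$. The restriction of $F$ to the objects affecting such a fiber --- data errors restricting to data errors of $\mathcal{Q}$, hook errors from ``in-fiber'' generators restricting to hook errors of $\mathcal{Q}$, and hook errors from cross-fiber generators restricting to at most one data error --- is a valid fault configuration for the original code under the schedule that fiber induces, hence has weight at least the original effective distance $\hat d_X$. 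Summing over the $\ge d_c$ fibers yields $|F| \ge d_c \cdot \hat d_X$; the $Z$-direction is handled identically (with amplification factor $1$ in the direction distance balancing leaves alone) or by dualising. An alternative, possibly cleaner, route is to exhibit distance balancing as Hastings's thickening followed by a benign, schedule-independent relabelling and then invoke Theorem~\ref{thm:main_3_informal}.

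The main obstacle is the charging step inside the slice: I must ensure that each fault of $F$ is counted against essentially one fiber, so that the per-fiber bounds $\hat d_X$ add up to the \emph{full} factor $d_c$ rather than $d_c$ divided by a degree parameter. The delicate case is hook errors coming from the ``gluing'' generators, whose support spills across the $O(1)$ $\gamma$-neighbouring fibers; what makes the argument go through --- and also the reason no special schedule is required, in contrast to coning --- is that such a hook error contributes at most one qubit to each neighbouring fiber, so within a fixed fiber it can be absorbed into the original stabilizer group and never on its own turns that fiber's restriction into a nontrivial logical. Making this absorption precise, and verifying that the residual cross-fiber spilling is either confined to metachecks or rendered harmless by the one-qubit-per-fiber bound, is where the real work lies.
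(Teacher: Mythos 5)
Your proposal is correct and follows essentially the same route as the paper's proof (Theorem~\ref{thm:effective_distance_preservation_thickened_code}): both exploit the tensor-product decomposition of the balanced code, slice a minimal fault configuration over the $\geq d_c$ columns of the classical factor on which the logical operator restricts nontrivially, and charge each fault to a single column using the fact that every stabilizer's support on the qubit region $A$ touches at most one column (the paper formalizes this as Proposition~\ref{prop:elementary_faults_bounded_component_weight} on row/column weights, together with Lemma~\ref{lem:thickened_code_component_weight} on logical supports). The one thing I would drop is the suggested ``alternative, possibly cleaner, route'' via Theorem~\ref{thm:main_3_informal}: in the paper that theorem and the distance-balancing statement are both corollaries of the same formal Theorem~\ref{thm:effective_distance_preservation_thickened_code}, and Hastings's thickening uses only repetition codes, so reducing general distance balancing to it is not a mere relabelling.
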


\begin{theorem}[Informal Statement of Effective Distance Preservation for Coned Codes]\label{thm:main_4_informal}
  Suppose we are given a CSS stabilizer qLDPC code with some single-ancilla syndrome extraction schedule that results in effective distances $d_X$ and $d_Z$.
  There exists some single-ancilla syndrome extraction schedule for the coned qLDPC code that results in effective distances at least $d_X$ and $\Omega\left(1/q_X\right) d_Z \lambda \ell$ where $q_X$ is the maximum number of $X$ stabilizer generators that act on any qubit for the original qLDPC code, $\lambda$ is some soundness factor we define later, and $\ell$ is the length of the classical repetition code we use in coning.
\end{theorem}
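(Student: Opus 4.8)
\emph{Proof strategy.}
The plan is to follow the template used for the earlier weight-reduction moves: exhibit an explicit single-ancilla schedule for the coned code, adapted to the cellular geometry of the cones, and then control the induced hook errors by pushing them back through the cone complexes onto the original code, where the combinatorial guarantees of coning take over. First I recall coning in chain-complex language. Each high-weight $X$-stabilizer targeted for reduction sits, together with a bounded neighborhood, inside a small local complex in which it is a (co)cycle; coning replaces it by a co-filling realized as the mapping cone of a chain map out of the length-$\ell$ repetition/path complex into that local complex, the local complex being assumed $\lambda$-sound. This produces new qubits, new constant-weight $X$-stabilizers replacing the high-weight one, and new constant-weight companion $Z$-stabilizers (forced by $\Hx\Hz^{T}=0$), together with a chain map $\pi$ collapsing the coned complex back to the original one. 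I would begin by recording the combinatorial consequence of this construction (re-deriving it from the earlier construction section if needed): the coned code has $X$-distance $d_X$ and $Z$-distance $\Omega(\lambda\ell\,d_Z)$, which both upper-bounds the target and supplies the $\lambda\ell$ factor.

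Next I fix the schedule. Unmodified stabilizers keep whatever ordering realized $d_X$ and $d_Z$ for the original code. For every new stabilizer I order its CNOTs along the axis of the cone it belongs to, placing any original qubit in its support last; since this ordering is the only freedom a single-ancilla schedule offers, the point is that every hook error it can produce is then a partial cell boundary lying inside a single cone and pointing toward that cone's axis — exactly the chains that the $\lambda$-soundness of the cone can absorb. Throughout I use that, for a single-ancilla schedule, the $X$- (resp.\ $Z$-) effective distance is the least weight of a trivial-syndrome net $X$- (resp.\ $Z$-) error assembled from single-qubit faults and the CNOT-order-suffix hook errors.

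For the $X$-effective distance, take a weight-$t$ fault set with trivial syndrome whose net effect is a nontrivial $X$-logical of the coned code. Its net $X$-error is built from single-qubit $X$'s together with the $X$-type hook errors of the new low-weight $X$-stabilizers, and trivial syndrome forces it to commute with every stabilizer, in particular to be a coboundary of cone cells on the new qubits. Since each cone is acyclic in the relevant degree, I subtract those cells to move the whole net error onto the original qubits without increasing its weight, where it becomes a nontrivial $X$-logical of the original code and hence has weight $\ge d_X$; therefore $t\ge d_X$. The subtlety is that individual hook errors need not commute with the new stabilizers, so the argument must be run on the net error of the whole fault set rather than term by term — the cone-aligned orderings are what keep each hook confined to one cone so that the cleaning stays local.

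The $Z$-effective distance is the crux and the one place all three parameters enter. A weight-$t$ fault set with trivial syndrome and a nontrivial $Z$-logical net effect gives a net $Z$-error $E$ built from single-qubit $Z$'s and the $Z$-type hook errors of the new companion $Z$-stabilizers; $E$ is a nontrivial $Z$-logical of the coned code, so $|E|=\Omega(\lambda\ell\,d_Z)$ by the amplified distance recorded above. It remains to prove $t=\Omega(|E|/q_X)$, i.e.\ that no single fault can shortcut the amplified cone geometry by more than a $q_X$ factor: data faults contribute one qubit each, and for a hook error I charge it to the cone containing it, clean it toward that cone's axis using $\lambda$-soundness, and observe that the residue it can leave on original qubits lies inside the support of the reduced $X$-stabilizer, where at most $q_X$ of the cones meet — since $q_X$ bounds the number of $X$-stabilizers, hence cones, through any qubit — so at most $q_X$ distinct hook families can collaborate at a single original qubit. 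Turning this $\le q_X$-fold overlap into the inequality $t=\Omega(|E|/q_X)$, while simultaneously accounting for the $\lambda$-soundness cleaning chains and the $\ell$-fold stretching so that nothing is double counted, is the delicate step and the obstacle I expect to be hardest. Finally, when coning is applied to several high-weight $X$-stabilizers at once, their cones are glued along disjoint sets of new qubits and interact only through original qubits with multiplicity $\le q_X$, so the schedule pieces and the cleaning arguments compose and the stated bound is unchanged.
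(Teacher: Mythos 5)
You have the roles of $X$ and $Z$ reversed in your description of coning: Hastings's coning reduces $w_Z$, not $w_X$. Each local 3-term complex $\overline{\calB}_i$ is built from a high-weight \emph{$Z$} stabilizer $s_{Z,i}$, with 1-cells for the qubits in the support of $s_{Z,i}$, 0-cells for pairings of those qubits across a common $X$ stabilizer, and $-1$-cells to kill the zeroth homology. The induced $Z$ stabilizers (the 1-cells) then inherit weight at most $q_X + O(1)$ because $q_X$ bounds the number of $X$ stabilizers through any original qubit; it is \emph{this} weight bound, pushed through Lemma~\ref{lem:elementary_fault_propagation}, that produces the $\Omega(1/q_X)$ overhead in the paper --- not ``at most $q_X$ cones meeting at an original qubit'' as you argue, which would in any case give a $q_Z$-type bound since cones are indexed by $Z$ stabilizers, not $X$ stabilizers. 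You also place the $\ell$ factor in the wrong part of the construction: the mapping cone is taken over the chain map $f : \calB \to \calA$ from the local complexes, and the length-$\ell$ repetition code does not appear there at all. It enters only in a separate, subsequent step in which the cone code is thickened \emph{in the dual basis} (to cure the polylogarithmic blowup in $q_X$ caused by the $-1$-cells), and it is this dual thickening that supplies $\ell$, via Theorem~\ref{thm:effective_distance_preservation_thickened_code}.

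Because of these two misreadings your single-pass argument does not match the structure of the construction. The paper's proof proceeds in three stages --- the cone code, the thickened and height-chosen cone code, and the reduced (cellulated) cone code --- running a $\lambda$-soundness cleaning argument in stage one (with an intermediate soundness $\lambda_C$), invoking the generalized-thickening result in stage two to pick up the factor of $\ell$, and repeating the soundness argument in stage three with the post-cellulation soundness $\lambda$. This compartmentalization is exactly what sidesteps the difficulty you flag as hardest, namely disentangling the $\lambda$-soundness cleaning, the $\ell$-fold stretching, and the $q_X$ overlap so that nothing is double-counted. Your cleaning argument for the preserved distance --- push the logical's support off the new qubits using the cone's acyclicity --- is morally the paper's stage-one argument, but the rest of the proposal needs the $X/Z$ roles, the source of $\ell$, and the source of $q_X$ all corrected, and the three-stage decomposition restored, before it can close.
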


Surprisingly, we find that as a result of Theorem~\ref{cor:main_1_informal}, higher-dimensional hypergraph product (HGP) codes \cite{zeng2019higher} have an effective distance equal to their code-theoretic distance under any single ancilla qubit measurement schedule.
Our result generalizes the main theorem in Ref.~\cite{manes2023distance} which showed that any stabilizer measurement schedule preserves the distance of HGP codes. 
Given that higher dimensional HGP codes have been shown to afford efficient logical gates~\cite{xu2024fast} and single-shot decoding~\cite{campbell2019theory}, our results strengthen the case for higher dimensional HGP codes to be one of the leading candidates for fault-tolerant quantum computation.
Because higher dimensional HGP codes are a superset of $n$-dimensional toric codes \cite{dennis2002topological, ThreeDToric2005, Jochym_O_Connor_2021} and contain most homological product codes seen in existing literature \cite{freedman2013quantum, bravyi2014homological, quintavalle2021single}, this result may also be of independent interest.
\begin{theorem}[Informal Statement of Hook Error Absence for Higher Dimensional HGP Codes]\label{cor:main_2_informal}
  Higher-dimensional hypergraph product (HGP) codes, also known as homological product codes corresponding to products of 1-complexes, have no troublesome hook errors when using any single-ancilla syndrome extraction circuit.
\end{theorem}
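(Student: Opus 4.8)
I would prove Theorem~\ref{cor:main_2_informal} as an honest corollary of Theorem~\ref{cor:main_1_informal}, observing first that ``no troublesome hook errors under a single-ancilla circuit'' is the statement that the effective distance of the code under that circuit equals its code-theoretic distance. The plan is to realize every higher-dimensional HGP code as the end point of a chain of distance-balancing steps and then induct on the number of tensor factors. Concretely, a $D$-dimensional HGP code is the homological product $\mathcal{A}_1 \otimes \mathcal{A}_2 \otimes \cdots \otimes \mathcal{A}_D$ of $D$ one-complexes $\mathcal{A}_i = (\mathbb{F}_2^{n_i} \xrightarrow{H_i} \mathbb{F}_2^{m_i})$, with the CSS code read off from three consecutive degrees of the product complex. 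By associativity of $\otimes$ I would write this as $\bigl(\mathcal{A}_1 \otimes \cdots \otimes \mathcal{A}_{D-1}\bigr) \otimes \mathcal{A}_D$, so that $\mathcal{C}^{(D)} := \mathcal{A}_1 \otimes \cdots \otimes \mathcal{A}_D$ is obtained from $\mathcal{C}^{(D-1)}$ by tensoring with the single one-complex $\mathcal{A}_D$ --- exactly one application of the distance-balancing construction of Ref.~\cite{evra2022decodable} with balancing code $\mathcal{A}_D$ (or its transpose, depending on which degree the logical qubits sit in). Iterating, $\mathcal{C}^{(2)} \to \mathcal{C}^{(3)} \to \cdots \to \mathcal{C}^{(D)}$ is a sequence of distance-balancing operations starting from the ordinary two-dimensional HGP code $\mathcal{C}^{(2)} = \mathcal{A}_1 \otimes \mathcal{A}_2$.

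For the induction I would use two auxiliary facts. First, the base case: for a two-dimensional HGP code the effective distance under any single-ancilla syndrome extraction circuit already equals the code distance, which is the main theorem of Ref.~\cite{manes2023distance}. Second, the trivial upper bound: for any CSS code and any single-ancilla circuit the effective distance is at most the code distance, since a data-only fault pattern supported on a minimum-weight logical representative is itself a valid fault configuration. Now suppose $\mathcal{C}^{(k)}$ has effective distances equal to $d(\mathcal{C}^{(k)})_X$ and $d(\mathcal{C}^{(k)})_Z$ under every single-ancilla schedule. Applying distance balancing with $\mathcal{A}_{k+1}$, Theorem~\ref{cor:main_1_informal} guarantees that the gain is preserved, i.e.\ every single-ancilla schedule for $\mathcal{C}^{(k+1)}$ has effective distances at least the distance-balanced code-theoretic distances, which by the distance analysis of Refs.~\cite{evra2022decodable, zeng2019higher} are precisely $d(\mathcal{C}^{(k+1)})_X$ and $d(\mathcal{C}^{(k+1)})_Z$. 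Combining with the upper bound forces equality, closing the induction; taking $k = D$ yields the claim, with the final distances being the multiplicative product determined by the $d(H_i)$ and $d(H_i^{\mathsf{T}})$ of the factors. Since this covers products of arbitrary one-complexes, it in particular covers all $n$-dimensional toric codes and the homological product codes appearing in the literature.

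I expect the main obstacle to be making Step~1 fully rigorous: matching ``iterated single-factor distance balancing'' to ``homological product of one-complexes'' requires careful bookkeeping of which chain-complex degree the logical qubits occupy, and of the transpose/co-distance subtleties (each distance-balancing step multiplies one of $d_X, d_Z$ by the classical distance of the balancing code while fixing the other, so reproducing the full product $\prod_i \min\{d(H_i), d(H_i^{\mathsf{T}})\}$ means tracking both directions consistently across all $D-1$ steps, as in Ref.~\cite{zeng2019higher}). One must also verify that an arbitrary single-ancilla syndrome extraction circuit for $\mathcal{C}^{(k+1)}$ genuinely falls under the hypothesis of Theorem~\ref{cor:main_1_informal} --- that is, that the distance-balancing effective-distance result is proved for a general CSS code as input and a general one-complex as balancing code, not merely for HGP inputs, so that the inductive step is legitimate at every stage. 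The remaining ingredients --- associativity of $\otimes$, the base case from Ref.~\cite{manes2023distance}, and the elementary bound $d_{\mathrm{eff}} \le d$ --- are routine.
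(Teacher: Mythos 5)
Your proposal is essentially the paper's own proof: decompose the $D$-fold product associatively into iterated single-factor tensorings, take the two-dimensional Manes--Claes result (Theorem~\ref{thm:distance_preserving_HGP}) as the base case, and induct through the effective-distance preservation of the distance-balancing theorem (Theorem~\ref{thm:effective_distance_preservation_thickened_code}), with the trivial bound $d_{\mathrm{eff}} \le d$ forcing equality at each step. The two bookkeeping concerns you flag are exactly where the paper spends its care: it decomposes $\calA = \calB_L \otimes \calB_R$ about the degree $a$ where the code sits so that tensoring on the $\calB_R$ side thickens $d_X$ while tensoring in the $\calB_L$ factors thickens $d_Z$ via the dual (``adapted'') form of Theorem~\ref{thm:effective_distance_preservation_thickened_code}, and although that theorem is stated existentially (``there exists $M'$''), its ingredients (Proposition~\ref{prop:elementary_faults_bounded_component_weight} and Lemma~\ref{lem:thickened_code_component_weight}) are schedule-agnostic, so the within-column argument applies to an arbitrary single-ancilla circuit once the inductive hypothesis already grants the all-schedules property to the smaller code --- which is precisely what makes the induction close.
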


The rest of the paper is organized as follows.
In Section~\ref{sec:notation}, we introduce the notation and definitions that we will use throughout the paper.
In Section~\ref{sec:quantum_weight_reduction}, we introduce the different weight reduction techniques.
In Section~\ref{sec:distance_preservation}, we analyze the effective distance preservation of the weight reduced codes and present our main results.
Lastly, in Section~\ref{sec:discussion}, we discuss the implications of our results and suggest directions for future work.

\section{Background}\label{sec:notation}
\subsection{Classical and Quantum Codes}
An $[n, k, d]$ code is a linear code with $n$ bits, a $k$-dimensional logical space, and distance $d$.
The vector space of codewords, $\calC$, equals $\set{v \in \F_2^n \mid Hv  = 0}$ for a full-rank parity check matrix $H \in \F_2^{(n - k) \times n}$.
In other words, all codewords are orthogonal to the rows of $H$ and lie in the kernel of $H$.
Note that we assume that the parity check matrix $H$ is always full-rank.
We often use $\calC$ to denote the code itself and sometimes include $H$ in the notation, i.e., $\calC(H)$ to refer to the classical code.

An $\left[\left[n, k, d\right]\right]$ (or $\left[\left[n, k, d_X, d_Z\right]\right]$) code is a quantum stabilizer code with $n$ qubits that encode $k$ logical qubits and have distance $d = \min\left(d_X, d_Z\right)$ where $d_X$ and $d_Z$ refer to the minimum weight of non-trivial $X$ and $Z$ logical operators. 
In our work, we focus on CSS codes which can be parameterized by two parity check matrices representing the stabilizer
generators of the CSS code, $H_X$ for the $X$ stabilizer generators and $H_Z$ for the $Z$ stabilizer generators.
Throughout this work, we will provide procedures which modify a CSS code. Unless otherwise stated, the input 
CSS code to each procedure appears as $H_X$ and $H_Z$ and the output appear as $H'_X$ and $H'_Z$.
Let $n_X$ and $n_Z$ be the number of $X$ and $Z$ stabilizers, respectively, $w_X$ and $w_Z$ be the maximum Hamming weight of the $X$ and $Z$ stabilizer generators, respectively, and $q_X$ and $q_Z$ be the maximum Hamming weight of the columns of $H_X$ and $H_Z$, respectively. 
In other words, $q_X$ and $q_Z$ are the maximum number of $X$ and $Z$ stabilizers that any qubit participates in.
Likewise, $w_X$ and $w_Z$ are the maximum number of qubits that any $X$ and $Z$ stabilizer generator acts on respectively.
Because we are mainly interested in qLDPC codes, we assume that $q_X$ and $q_Z$ as well as $w_X$ and $w_Z$ are constants that do not scale with the code size\footnote{We note that the results do generalize to non-constant weights but would result in overhead that scales with the code size.}.
We often use $\calQ$ or $\calQ\left(H_X, H_Z\right)$ to denote the quantum code and $\calQ'$ or $\calQ'\left(H'_X, H'_Z\right)$ to denote the weight-reduced quantum code.
The number of logical qubits $k$ in a quantum code $\calQ$ can be obtained by evaluating the dimension of the quotient group $\ker H_Z / \rs H_X$ or $\ker H_X / \rs H_Z$.
The coset representatives for the two quotient groups are also the $X$ and $Z$ logical operators of $\calQ$ respectively.  

\subsection{Homology and Chain Complexes}
We define an $\ell$-term chain complex $\calA$ (which we will refer to as a chain complex), to be a sequence of vector spaces $\set{A_i}$ over $\F_2$ and $i \in \set{1, \dots, \ell}$.
We define the $i$-th boundary operator as a map $\partial_i: A_{i} \rightarrow A_{i - 1}$ such that $\partial_{i-1} \partial_{i } = 0$.
We can also illustrate a chain complex as:
\[
	\F_2^{\dim A_{i + 1}} \xrightarrow{\partial_{i + 1}} \F_2^{\dim A_i} \xrightarrow{\partial_{i}} \F_2^{\dim A_{i - 1}}.
\]

Moreover, we can define a coboundary operator as the dual of the boundary operator: 
\[\delta_i:~A_{i} \rightarrow A_{i+1}.\]
If we define $\partial_i = L$ where $L$ is a linear operator in some basis (a matrix), then $\delta_{i-1} = L^\top$.
Because $\partial_{i} \partial_{i+1} = 0$ and $\delta_{i} \delta_{i-1} = 0$, we also have that $\im(\partial_{i + 1}) \subseteq \ker(\partial_i)$.
Finally, we define the $i$-th homology group to be $\calH_i(\calA) = \ker(\partial_i) / \im(\partial_{i + 1})$ and the cohomology group be $\calH^i(\calA) = \ker(\delta_{i}) / \im(\delta_{i-1})$.

In general, we denote chain complexes with calligraphic font (i.e.\ $\calA, \calB$) and refrain from using the characters $\calC$ and $\calQ$ to avoid confusion.
For a specific chain complex $\calA$, we will use superscripts ($\partial^\calA$, $\delta^\calA$) to refer to the (co)boundary operators of $\calA$.

Given that $\im(\partial_{i + 1}) \subseteq \ker(\partial_i)$, we have a nice correspondence between CSS codes and chain complexes.
Specifically, if we associate the linear map $H_Z$ to $\delta_1$ and $H_X$ to $\partial_1$, then  $H_X H_Z^\top = \partial_1 \partial_2 = 0$ as necessary.
So, for a quantum code $\calQ(H_X, H_Z)$, we have that $2$-cells are $Z$ stabilizers, $1$-cells are qubits, and $0$-cells are $X$ stabilizers, illustrated by:
\[
	\F_2^{n_Z}	\xrightarrow{H_Z^\top} \F_2^n \xrightarrow{H_X} \F_2^{n_X}.
\]
Similarly, we have the dual
\[
	\F_2^{n_Z} \xleftarrow{H_Z} \F_2^n \xleftarrow{H_X^\top} \F_2^{n_X}.
\]

The $Z$-logicals correspond to $\calH_1(\calQ) = \ker H_X / \im H_Z^\top = \ker H_X / \rs H_Z$ and $X$-logicals to $\calH^1(\calQ) = \ker H_Z / \im H_X^\top = \ker H_Z / \rs H_X$.

We define $\row\left(H \right)$ as the set of rows of some matrix $H$.
In addition, let $\hat{i}_{v}$ be a unit vector in $\F_2^{v}$ with a single 1 in the $i$\textsuperscript{th} index and $\vec{1}_{v}$ be a vector of ones in $\F_2^{v}$.
Let $[n]$ denote the following set of positive integers, i.e., $[n] = \set{1, \ldots, n}$.

\section{Quantum Weight Reduction}\label{sec:quantum_weight_reduction}
In this section, we revisit Hastings's quantum weight reduction techniques proposed in Ref.~\cite{hastings2021quantum} to provide additional helpful exposition on the techniques and standardize notation. 
For a more detailed retelling of Hastings's techniques, we highly recommend looking at Section~3 of Ref.~\cite{sabo2024weight} from which we heavily draw in this section.

The four steps of Hastings's quantum weight reduction method are:
\begin{enumerate}
	\item Copying -- reduces $q_X$
	\item Gauging -- reduces $w_X$
	\item Thickening and choosing heights --reduces $q_Z$
	\item Coning -- reduces $w_Z$
\end{enumerate}
It is easy to see that the combination of these four different techniques would allow us to reduce the four important parameters that quantifies the relevant weights of a quantum code.

\subsection{Copying}\label{sec:copying}

In Ref.~\cite{hastings2021quantum}, copying is introduced as a technique to reduce $q_X$ to at most three.  
In this section, we begin by providing an intuitive picture for how copying transforms the quantum code before providing the details using linear algebraic language and substantiating the exposition with examples from Ref.~\cite{sabo2024weight}.

A copied quantum code $\calQ'$ can be understood as the concatenation of each original qubit of the quantum code $\calQ$ with a classical repetition code $\left[q_X, 1, q_X\right]$.
In other words, we make $q_X$ copies of each original qubit. Then the $q_X$ $X$ stabilizer generators that act on a single original qubit are rearranged to act on one of the $q_X$ copied qubits.
However, notice that the number of logical qubits would increase if we only add these copied qubits without adding new linearly independent stabilizer generators.
Thus, we add $q_X - 1$ new $X$ stabilizer generators for each original qubit that function like the checks for a $\left[q_X, 1, q_X\right]$ classical repetition code as they act on two adjacent qubits in the set of $q_X$ copied qubits that correspond to a particular original qubit.
To protect these copied qubits from Pauli $X$ errors, we then extend the support of every $Z$ stabilizer generator that acts on a particular original qubit to all $q_X$ qubits that correspond to the particular original qubit.
It is easy to see that the stabilizer generators commute with each other and each copied qubit is acted upon by at most 1 original $X$ stabilizer generator and at most 2 new $X$ stabilizer generators.
We provide a diagrammatic representation of the copying technique in Figs.~\ref{fig:bef_copying}~and~\ref{fig:post_copying}.

\pgfdeclarelayer{bg}    
\pgfsetlayers{bg,main}  
\usetikzlibrary{backgrounds}

\begin{figure}[ht]
    \centering
    \begin{subfigure}[b]{0.35\textwidth}
        \centering
        \begin{tikzpicture}[scale=1, every node/.style={draw, inner sep=2pt}]
    
    \foreach \i in {1,2,3,4} {
        \node[blue] (v\i) at (0, -\i) {$s_{X_\i}$};  
    }
    
    \foreach \i in {1} {
        \node[red, circle] (h\i) at (3, -2.5) {$q_{\i}$};  
    }
    
    \foreach \i in {1,2} {
        \node[black] (c\i) at (6, -\i-1) {$s_{Z_\i}$};  
    }
    
    \draw (v1) -- (h1);
    \draw (v2) -- (h1);
    \draw (v3) -- (h1);
    \draw (v4) -- (h1);

    \draw (h1) -- (c1);
    \draw (h1) -- (c2);

\end{tikzpicture}
\caption{Before copying.\label{fig:bef_copying}}
\end{subfigure}
    \hfill
\begin{subfigure}[b]{0.35\textwidth}
        \centering
        \begin{tikzpicture}[scale=1, every node/.style={draw, inner sep=2pt}]
    
    \foreach \i in {1,2,3,4} {
        \node[blue] (v\i) at (0, -\i) {$s_{X_\i}$};  
    }

    \foreach \i in {5,6,7} {
        \node[blue] (v\i) at (1.5, -\i+4-0.5) {$s_{X_\i}$};  
    }
    
    \foreach \i in {1, 2, 3, 4} {
        \node[red, circle] (h\i) at (3, -\i) {$q_{1,\i}$};  
    }
    
    \foreach \i in {1,2} {
        \node[black] (c\i) at (6, -\i-1) {$s_{Z_\i}$};  
    }
    
    \draw (v1) -- (h1);
    \draw (v2) -- (h2);
    \draw (v3) -- (h3);
    \draw (v4) -- (h4);
    \draw (v5) -- (h1);
    \draw (v5) -- (h2);
    \draw (v6) -- (h2);
    \draw (v6) -- (h3);
    \draw (v7) -- (h3);
    \draw (v7) -- (h4);

    \draw (h1) -- (c1);
    \draw (h2) -- (c1);
    \draw (h3) -- (c1);
    \draw (h4) -- (c1);
    \draw (h1) -- (c2);
    \draw (h2) -- (c2);
    \draw (h3) -- (c2);
    \draw (h4) -- (c2);

\end{tikzpicture}
\caption{After copying.\label{fig:post_copying}}
\end{subfigure}
\\
\begin{subfigure}[b]{0.35\textwidth}
        \centering
        \begin{tikzpicture}[scale=1, every node/.style={draw, inner sep=2pt}]
    
    \foreach \i in {1} {
        \node[blue] (v\i) at (0, -2.5) {$s_{X_\i}$};  
    }
    
    \foreach \i in {1, 2, 3, 4} {
        \node[red, circle] (h\i) at (3, -\i) {$q_{\i}$};  
    }
    
    \foreach \i in {1,2} {
        \node[black] (c\i) at (6, -\i-1) {$s_{Z_\i}$};  
    }
    
    \draw (v1) -- (h1);
    \draw (v1) -- (h2);
    \draw (v1) -- (h3);
    \draw (v1) -- (h4);

    \draw (h1) -- (c1);
    \draw (h2) -- (c1);
    \draw (h3) -- (c1);
    \draw (h4) -- (c1);
    \draw (h1) -- (c2);
    \draw (h2) -- (c2);
    \draw (h3) -- (c2);
    \draw (h4) -- (c2);

\end{tikzpicture}
\caption{Before gauging.\label{fig:bef_gauging}}
\end{subfigure}
\hfill
\begin{subfigure}[b]{0.35\textwidth}
        \centering
        \begin{tikzpicture}[scale=1, every node/.style={draw, inner sep=2pt}]
    
    \foreach \i in {1,2,3,4} {
        \node[blue] (v\i) at (0, -\i) {$s_{X_{1,\i}}$};  
    }

    \foreach \i in {1,2} {
        \node[black] (c\i) at (6, -\i-1) {$s_{Z_\i}$};  
    }

    \foreach \i in {5,6,7} {
        \node[red, circle] (h\i) at (2.2, -\i+4-0.5) {$q_{\i}$};  
    }

    \foreach \i in {1, 2, 3, 4} {
        \node[red, circle] (h\i) at (3, -\i) {$q_{\i}$};  
    }
    
    \draw (v1) -- (h1);
    \draw (v2) -- (h2);
    \draw (v3) -- (h3);
    \draw (v4) -- (h4);

    \draw (v1) -- (h5);
    \draw (v2) -- (h5);
    \draw (v2) -- (h6);
    \draw (v3) -- (h6);
    \draw (v3) -- (h7);
    \draw (v4) -- (h7);

    \draw (h1) -- (c1);
    \draw (h2) -- (c1);
    \draw (h3) -- (c1);
    \draw (h4) -- (c1);
  
    \draw (h1) -- (c2);
    \draw (h2) -- (c2);
    \draw (h3) -- (c2);
    \draw (h4) -- (c2);

\begin{scope}[on background layer]
    \draw (h5) -- (c1);
    \draw (h7) -- (c1);
    \draw (h5) -- (c2);
    \draw (h7) -- (c2);
\end{scope}
\end{tikzpicture}
\caption{After gauging.\label{fig:post_gauging}}
\end{subfigure}
\caption{Examples for copying and gauging. (a) The qubits $q_1$ lies in the support of 4 different $X$ stabilizer generators $s_{X_1}, \ldots, s_{X_4}$.
(b) After performing copying, we obtain 4 copies of $q_1$ which are denoted as $q_{1, 1} \ldots, q_{1,4}$. These copied qubits are connected by 3 new $X$ stabilizer generators in a repetition code-like structure. These copied qubits also lie in the same support of the $Z$ stabilizer generators $s_{Z_1}$ and $s_{Z_2}$.
(c) Before gauging, the $s_{X_1}$ has support on 4 different qubits $q_1, \ldots, q_4$.
(d) After gauging, we split the $X$ stabilizer generator into 4 copied $X$ stabilizer generators $s_{X_{1,1}}, \ldots, s_{X_{1,4}}$ which are connected by 3 new qubits $q_5, q_6, q_7$ in a repetition code-like structure. The $Z$ stabilizer generators $s_{Z_1}$ and $s_{Z_2}$ are also updated to have support on the new qubits to ensure that they commute with the copied $X$ stabilizer generators.}
\end{figure}

\subsubsection*{Formal Description}
Firstly, we make $q_X$ copies of each qubit:
\[
	(q_1 \ q_2 \ \dots q_n) \rightarrow (q_{1,1} \dots q_{1,q_X} \ |\  q_{2,1} \dots q_{2,q_X} \ | \ \dots \  |\ q_{n,1} \dots q_{n,q_X})
\]
where $q_i$ represents one of the $n$ original qubits and the partitions are meant to demarcate sets of $q_X$ copied qubits that correspond to some original qubit.
Because we now have $n \cdot q_X$ qubits in our copied code, we have to reshape $H_X$ such that each row that corresponds to an $X$ stabilizer generator now has $n \cdot q_X$ entries instead of $n$. 
In particular, our new $H'_X$ should be constructed such that for every $q_i$ in the support of some row of $H_X$, one of $\{q_{i,1},\cdots,q_{i,q_X} \}$ receives the value of $q_i$ in the corresponding row in $H'_X$.

In order to have effective weight reduction, we must be careful in our choice of $\set{q_{i, 1}, \dots, q_{i, q_X}}$ for every stabilizer. 
Specifically, for two different stabilizer $s_x$ and $s_x'$ from the original code, we require that they do not \emph{share} a qubit $q_{i, j}$ in the copied code.
So,  if \[(1\ 0 \ 0\ | \ 1\ 0 \ 0\ | \ 1\ 0 \ 0)\] is chosen to be the modified version of a stabilizer generator $s_x = (1, 1, 1)$ (a row in $H_X$), then another stabilizer generator $s_x' = (1, 0, 1)$ cannot be set to 
\[
	(0\ 1 \ 0\ | \ 0\ 0 \ 0\ | \ 1\ 0 \ 0)
\]
because $q_{3,1}$ would be acted upon by two different $X$ stabilizer generators.
Fortunately, there is a simple greedy algorithm for updating each row without any collisions. We simply update each $X$ stabilizer one by one and choose the $q_{i, j}$ without support from another $X$ stabilizer in the original code.
This method of construction ensures that each copied qubit is acted upon by at most 1 modified $X$ stabilizer generator.

In addition to the modified $X$ stabilizer generators,  $\left(q_X-1\right)$ new $X$ stabilizer generators are added for each set of $q_X$ qubits to link the copies of $q_i$ such that they collectively ``behave'' like the original single qubit.  
These new $X$ stabilizer generators act on copied qubits $q_{i,j}$ and $q_{i,j+1}$ for $1\leq j \leq q_X-1$. 

In order to make the $Z$ stabilizer generators commutes with all the modified and new $X$ stabilizer generators,  the value at $q_i$ is replicated for $q_{i,j}$ for all $1\leq j \leq q_X$ when we expand the each row in the $H_Z$ to now have $n\cdot q_X$ entries instead of $n$ entries. 
Note that this increase $w_Z$ but it will be addressed with the coning technique.

\begin{lemma}[{\cite[Code Parameters of Copied Quantum Code]{hastings2021quantum}}]
	Suppose we have a quantum CSS code $\calQ$ that encodes $k$ logical qubits with $n$ physical qubits.
	After applying copying on $\calQ$ to obtain $\calQ'$, the parameters of $\calQ'$ are as follows:
	\begin{align*}
  &n' = q_X \cdot n,\\
  &k' = k,\\
  &n'_X = n_X + \left(q_X - 1\right) \cdot n, \\
  &n'_Z = n_Z,\\
  &w'_X = w_X,\\
  &q'_X = \min\left(q_X, 3\right),\\
  &w'_Z = q_X \cdot w_Z,\\
  &q'_Z = q_Z,\\
  &d'_X = d_X,\\
  &d'_Z = q_X \cdot d_Z.
	\end{align*}
\end{lemma}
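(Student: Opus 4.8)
The plan is to express copying as a tensor construction with the classical repetition code and reduce every claim to two clean identities. Write $R\in\F_2^{(q_X-1)\times q_X}$ for the repetition-code parity check whose $j$-th row is $\hat j_{q_X}+\widehat{(j+1)}_{q_X}$, and order the $q_X n$ copied qubits block by block. The new $Z$ checks are exactly $H'_Z=H_Z\otimes\vec 1_{q_X}^\top$ (the support of $s_z$ on $q_i$ becomes the whole $i$-th block), and the new $X$ checks consist of the $n_X$ ``spread'' original checks together with the block-diagonal repetition checks $I_n\otimes R$. The only nuisance is that the greedy collision-free choice of a representative copy makes a spread check differ from the clean tensor $H_X\otimes\hat 1_{q_X}^\top$; but $\hat j_{q_X}\equiv\hat 1_{q_X}\pmod{\rs R}$ for every $j$, so each spread check is congruent to the matching row of $H_X\otimes\hat 1_{q_X}^\top$ modulo $\rs(I_n\otimes R)$. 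Hence, independently of the greedy choices, $\rs(H'_X)=\rs(H_X\otimes\hat 1_{q_X}^\top)+\rs(I_n\otimes R)$. Commutativity $H'_X (H'_Z)^\top=0$ then follows because a spread $X$ check meets a spread $Z$ check in a full block for each common qubit (reproducing the original inner product mod $2$) and a repetition check meets any spread $Z$ check in $0$ or $2$ positions.

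First I would read off the combinatorial parameters. We have $n'=q_X n$, $n'_Z=n_Z$, $w'_X=\max(w_X,2)=w_X$ (a spread check keeps its weight; a repetition check has weight $2$), $w'_Z=q_X w_Z$, $q'_Z=q_Z$ (copy $q_{i,j}$ lies in exactly the $Z$ checks that contained $q_i$), and $q'_X=\min(q_X,3)$ from the collision-free property (at most one spread check per copy) plus the at-most-two incident repetition checks, checking $q_X\in\{1,2\}$ separately. For ranks: deleting all columns except the first in each block from $H'_Z$ returns $H_Z$, so $\operatorname{rank}H'_Z=n_Z$; and $\rs(H_X\otimes\hat 1_{q_X}^\top)$ (weight $\le 1$ per block) meets $\rs(I_n\otimes R)$ (even weight per block) only in $0$, so $\operatorname{rank}H'_X=n_X+(q_X-1)n$, matching the stated $n'_X$ and confirming both output check matrices are full rank. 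Then $k'=q_X n-\operatorname{rank}H'_X-\operatorname{rank}H'_Z=n-n_X-n_Z=k$.

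For $d'_X$ I would use the block-parity map $v\mapsto w$, where $w_i$ is the parity of $v$ on block $i$. It sends $\ker H'_Z$ onto $\ker H_Z$ and $\rs H'_X$ onto $\rs H_X$, each with kernel exactly $\rs(I_n\otimes R)$, so it descends to an isomorphism $\calH^1(\calQ')\cong\calH^1(\calQ)$. Since $|w|\le|v|$ always (a block on which $v$ has odd parity is nonempty) and $w\mapsto w\otimes\hat 1_{q_X}$ is a weight-preserving lift of any representative, a minimum-weight nontrivial $X$-coset of $\calQ'$ has the same weight as one of $\calQ$, i.e.\ $d'_X=d_X$. For $d'_Z$ I would instead characterize $\ker H'_X$ directly: from $\ker R=\langle\vec 1_{q_X}\rangle$ and orthogonality to the spread $X$ checks one gets $\ker H'_X=\{\,w\otimes\vec 1_{q_X}:w\in\ker H_X\,\}$, while $\rs H'_Z=\{\,w\otimes\vec 1_{q_X}:w\in\rs H_Z\,\}$; thus the nontrivial $Z$-coset representatives of $\calQ'$ are precisely the vectors $w\otimes\vec 1_{q_X}$ with $w\in\ker H_X\setminus\rs H_Z$, each of weight $q_X|w|$, so $d'_Z=q_X d_Z$.

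The computation is essentially routine once this setup is in place. The one place that needs genuine care is making the row-space identity for $H'_X$ insensitive to the collision-free greedy choices, which is handled by the congruence $\hat j_{q_X}\equiv\hat 1_{q_X}\pmod{\rs R}$; and for the distance lower bounds, the two small but essential facts are that the block-parity map cannot increase weight and that $\ker H'_X$ collapses onto block-constant vectors. I expect the write-up to spend most of its length verifying these, with the remaining parameter claims following by inspection of the construction.
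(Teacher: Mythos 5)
The paper states this lemma with a citation to Hastings (2021) and offers no in-text proof, so there is no internal argument to compare against; your write-up is a correct, self-contained derivation. The three nontrivial ingredients all check out: (i) the row-space identity $\rs(H'_X)=\rs(H_X\otimes\hat 1_{q_X}^\top)+\rs(\mathbbm{1}_n\otimes H_R)$, obtained via $\hat j_{q_X}\equiv\hat 1_{q_X}\pmod{\rs H_R}$, which makes all the rank and kernel computations independent of the greedy collision-free choices; (ii) the block-parity map, which you correctly verify descends to an isomorphism on $X$-cohomology with a weight-nonincreasing representative in one direction and a weight-preserving lift $w\mapsto w\otimes\hat 1_{q_X}$ in the other, yielding $d'_X=d_X$; and (iii) the collapse $\ker H'_X=\{w\otimes\vec 1_{q_X}:w\in\ker H_X\}$ together with $\rs H'_Z=\{u\otimes\vec 1_{q_X}:u\in\rs H_Z\}$, yielding $d'_Z=q_X d_Z$ by the uniform weight blow-up. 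The rank argument (weight-$\le 1$-per-block vectors versus even-per-block vectors having trivial intersection) cleanly gives $n'_X$ and hence $k'=k$. Worth noting: the row-space observation in (i) is essentially what the paper later proves as its lemma on equivalence of $X$ stabilizer groups ($\calS_X=\calS_X'$), though the paper uses it for a different purpose (adapting measurement schedules rather than deriving code parameters); your one-line congruence phrasing is, if anything, tighter. The only cosmetic caveat is $w'_X=\max(w_X,2)$, which matches the stated $w_X$ under the tacit standing assumption $w_X\ge 2$, and you flag this correctly.
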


\subsection{Gauging}\label{sec:gauging}
Hastings proposed gauging to reduce $w_X$ to less than or equal to three without increasing $q_X$.  
Again, we begin by providing a high-level description of gauging to provide the reader with some intuition before providing the details with linear algebraic formalism.

A gauged quantum code $\calQ'$ is very similar to a copied quantum code. Now, instead of ``concatenating'' with a repetition code, we ``concatenate'' within $X$ stabilizer generators in a sort of ``dual'' concatenation. 

Since the goal for gauging is to reduce $w_X$, we make $w_X$ copies of the original $X$ stabilizer generators so that each copy only acts on 1 qubit instead of $w_X$ qubits.
Because we are introducing $w_X - 1$ additional $X$ stabilizers per original $X$ stabilizer generator, we have to introduce $w_X - 1$ new qubits for each original $X$ stabilizer generator to ensure that the number of logical qubits does not change.
We want to assemble each set of $w_X$ copied stabilizers and $w_X - 1$ new qubits such that each of the $w_X - 1$ qubits lies in the support of only two copied $X$ stabilizer generators.
This construction gives us some sort of classical repetition code $\left[w_X, 1, w_X\right]$ where we have the $w_X$ copied $X$ stabilizer generators acting like the bits of the classical repetition code and the $w_X - 1$ new qubits acting like the checks of the classical repetition code.
Notice that our copied $X$ stabilizer generators might no longer commute with the $Z$ stabilizer generators since their supports overlap on at most 1 qubit.
To resolve this, we have to modify our $Z$ stabilizer generators such that they act on some subset of the new qubits introduced during gauging so that the stabilizer generators commute with each other.
With this construction, it is easy to see that each copied $X$ stabilizer generator acts on at most 3 qubits.
We provide a diagrammatic representation of the gauging technique in Figs.~\ref{fig:bef_gauging}~and~\ref{fig:post_gauging}.

\subsubsection*{Formal Description}
For an $X$ stabilizer generator with weight $w > 3$ with supports on qubits $\set{q_1, \dots, q_w}$, gauging adds $w-1$ new qubits labeled as $\set{q_1', \dots, q_{w - 1}'}$. The $X$ stabilizer generator is then updated to a collection of $w$ stabilizer generators:
\[
	\{q_1,q_1' \}, \{q_1', q_2, q_2'\}, \{q_2', q_3, q_3'\}, \cdots, \{q_{w-2}',q_{w-1},q_{w-1}' \}, \{q_{w - 1}', q_w\}.
\]

As pointed out by \cite{sabo2024weight}, we can view the above in matrix form. Say that we have a simple stabilizer, $(1, 1, 1, 1)$, then

\[
	\begin{blockarray}{cccc}
		q_1 & q_2 & q_3 & q_4 \\
		\begin{block}{(cccc)}
			1 & 1 & 1 & 1 \\	
		\end{block}
	\end{blockarray}
	\mapsto
	\begin{blockarray}{cccccccccc}
		& q_1 & q_2 & q_3 & q_4 & & q_1' & q_2' & q_3' & \\
		\begin{block}{(cccccccccc)}
			& 1 &   &   &   & \dots & 1 &   &   & \\
			&   & 1 &   &   & \dots & 1 & 1 &   & \\
			&   &   & 1 &   & \dots &   & 1 & 1 & \\
			&   &   &   & 1 & \dots &   &   & 1 &  \\
		\end{block}
	\end{blockarray}
\]

The commutativity with the $Z$ stabilizers is maintained with a slightly more complicated procedure:
consider the $w$ new rows of a reduced $X$ stabilizers and label them $s_1, \dots, s_w$, giving them an arbitrary order.
If the $j$-th $Z$ stabilizer generator anti-commutes with a string of stabilizers, $s_1 \dots s_i$ for $i \in [w]$, then we add qubit $q_w'$ to the stabilizer generator. Repeat this process until the $Z$ stabilizer generator commutes.

\begin{lemma}[{\cite[Code Parameters of Copied and Gauged Quantum Code]{hastings2021quantum}}]
	Suppose we have a quantum CSS code $\calQ$ that encodes $k$ logical qubits with $n$ physical qubits.
	After applying copying and gauging on $\calQ$ to obtain $\tilde{\calQ}$, the parameters of $\tilde{\calQ}$ are as follows:
	\begin{align*}
  &n' = q_X \cdot n + n_X \left(w_X - 1\right),\\
  &k' = k,\\
  &n'_X = n_X \cdot w_X + \left(q_X - 1\right)\cdot n,\\
  &n'_Z = n_Z,\\
  &w'_X = \min\left(w_X, 3\right),\\
  &q'_X = \min\left(q_X, 3\right),\\
  &w'_Z \leq w_Z \cdot q_X \cdot \left(w_X + 1\right) ,\\
  &q'_Z = q_Z \cdot w_X,\\
  &d'_X \geq \frac{d_X}{w_X/2 + 1},\\
  &d'_Z \geq  q_X\cdot d_Z.
	\end{align*}
\end{lemma}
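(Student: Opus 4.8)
The plan is to factor the task through the two steps separately. By the copying lemma above it suffices to analyse \emph{gauging} applied to the already-copied code, which I will call $\widehat\calQ$ (with $X$- and $Z$-check matrices $\widehat H_X,\widehat H_Z$), and for which $\widehat d_X = d_X$ and $\widehat d_Z = q_X d_Z$. Most parameters are then bookkeeping: copying yields $q_X n$ qubits and leaves each of the $n_X$ original $X$-stabilizers at weight $w_X$, while gauging replaces each such stabilizer by $w_X$ pieces and introduces $w_X-1$ fresh qubits, so $n' = q_X n + n_X(w_X-1)$ and $n'_X = \bigl(n_X + (q_X-1)n\bigr) + n_X(w_X-1) = n_X w_X + (q_X-1)n$, whereas no $Z$-stabilizer is ever created, so $n'_Z = n_Z$. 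Each gauged piece is $\{q_1,q_1'\}$, $\{q_{j-1}',q_j,q_j'\}$, or $\{q_{w-1}',q_w\}$ and hence has weight at most $3$, and the weight-$2$ repetition stabilizers from copying survive untouched, so $w'_X = \min(w_X,3)$; each fresh qubit meets exactly two pieces and each copied qubit still meets at most one gauged original $X$-stabilizer and at most two repetition stabilizers, so $q'_X = \min(q_X,3)$. On the $Z$-side, a $Z$-stabilizer of $\widehat\calQ$ has weight at most $q_X w_Z$, hence overlaps at most $q_X w_Z$ gauged $X$-stabilizers; the commutativity-repair step appends at most $w_X-1$ fresh qubits per overlap, giving $w'_Z \le q_X w_Z + q_X w_Z(w_X-1) \le q_X w_Z(w_X+1)$, and dually a fresh qubit of a gauged $X$-stabilizer is touched only by $Z$-stabilizers overlapping that stabilizer, of which there are at most $w_X q_Z$, so $q'_Z \le w_X q_Z$. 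For $k'=k$ I would use $k = n - \mathrm{rank}(H_X) - \mathrm{rank}(H_Z)$ and check that gauging one weight-$w$ stabilizer keeps this fixed: it adds $w-1$ qubits and swaps one row for $w$ rows $s_1,\dots,s_w$ whose restriction to the new columns is a bidiagonal (path-incidence) matrix of full column rank $w-1$ and whose sum equals the old row on the old columns, so the $X$-row span grows by exactly $w-1$; the $Z$-check matrix is only widened on new columns, so its rank is unchanged; summing over the $n_X$ stabilizers leaves $k$ invariant.

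The substance lies in the two distance bounds, and the recurring device is to restrict a logical operator of $\widetilde\calQ$ to the old (copied) qubits and exploit the path-incidence shape of the fresh columns. For $d'_Z \ge q_X d_Z$, take a minimum-weight $Z$-logical $\bar z$ of $\widetilde\calQ$ and consider $\bar z|_{\mathrm{old}}$. The $H'_X$-row from the $j$-th piece of a gauged stabilizer reads $\bar z_{q_{j-1}'} + \bar z_{q_j} + \bar z_{q_j'} = 0$; summing over $j$ telescopes (the two end pieces enforce $\bar z_{q_0'} = \bar z_{q_w'} = 0$) to the original weight-$w_X$ parity check, and together with the untouched repetition rows this shows $\bar z|_{\mathrm{old}} \in \ker\widehat H_X$. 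It is nontrivial: if $\bar z|_{\mathrm{old}} \in \rs\widehat H_Z$, subtract the corresponding rows of $H'_Z$ (each restricts to the corresponding row of $\widehat H_Z$ on the old qubits); what remains is supported only on fresh qubits and still lies in $\ker H'_X$, hence is zero because the fresh columns of $H'_X$ are linearly independent, forcing $\bar z \in \rs H'_Z$ --- a contradiction. Since $|\bar z|_{\mathrm{old}}| \le |\bar z| = d'_Z$, we get $q_X d_Z = \widehat d_Z \le d'_Z$.

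For $d'_X \ge d_X/(w_X/2+1)$, take a minimum-weight $X$-logical $\bar x$ of $\widetilde\calQ$. Inside the block of a gauged $X$-stabilizer of weight $w$, the map sending a choice $\sum_j a_j s_j$ of pieces to its fresh-qubit coordinates is the path boundary map $\F_2^{w} \to \F_2^{w-1}$, which is surjective with kernel $\{0,\vec 1\}$; so by adding pieces I can zero out $\bar x$ on every fresh qubit of that block, and between the two valid choices (differing by $\vec 1$) I keep the one leaving residual weight at most $w_X/2$ on that block's at most $w_X$ original qubits. Because the greedy collision-free assignment from copying puts each copied qubit in at most one original $X$-stabilizer, each unit of weight of $\bar x$ lies in at most one block, so at most $|\bar x|$ blocks are touched and the resulting representative $\bar x'$ satisfies $|\bar x'| \le |\bar x| + (w_X/2)|\bar x| = (w_X/2+1)\,d'_X$. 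Now $\bar x'$ is supported on old qubits only, so, being in $\ker H'_Z$, it lies in $\ker\widehat H_Z$, and it is nontrivial: each original $X$-row equals the sum of its pieces on the old columns, so $\rs\widehat H_X$ embeds into $\rs H'_X$ and $\bar x' \in \rs\widehat H_X$ would give $\bar x \in \rs H'_X$. Hence $d_X = \widehat d_X \le (w_X/2+1)\,d'_X$.

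The main obstacle is not the weight inequalities but certifying that none of these restricted or reduced operators silently becomes a stabilizer; both nontriviality arguments rest on two structural facts I would isolate first: a vector supported only on the fresh qubits and annihilated by $H'_X$ must vanish (equivalently, the fresh columns of $H'_X$ are linearly independent, which follows from their bidiagonal structure within each gauged block), and $\rs\widehat H_X$ embeds into $\rs H'_X$ via ``sum of the pieces''. I would take care to carry the greedy collision-free rule from the copying construction all the way through, since the ``one block per qubit'' count driving the $d'_X$ bound relies on it, and I would double-check that the telescoping identity handles the two end pieces of each gauged stabilizer correctly and that the constants obtained match (or slightly improve) the stated $w'_Z \le q_X w_Z(w_X+1)$ and $d'_X \ge d_X/(w_X/2+1)$, so that no overlap is undercounted.
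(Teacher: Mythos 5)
The paper does not actually prove this lemma: it is stated as a citation to Hastings's original work, with no argument supplied. Your reconstruction is nonetheless essentially correct and follows the standard route (restrict logicals to the copied qubits, exploit the path/bidiagonal structure of the fresh columns, and carry nontriviality via the embedding $\rs\widehat H_X\hookrightarrow\rs H'_X$ for $X$-logicals and fresh-column linear independence for $Z$-logicals). A few small comments are worth flagging.

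First, the bookkeeping is right, including the slightly-tighter $w'_Z\le q_X w_Z\,w_X$ you obtain before relaxing to match the stated $q_X w_Z(w_X+1)$, and the $q'_Z\le q_Z w_X$ bound, which relies correctly on the fact that the commutativity repair only appends fresh qubits from blocks a $Z$-row already overlaps.

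Second, for $d'_Z\ge q_X d_Z$, the telescoping argument is correct but you should phrase the endpoints more carefully: there are no $q_0'$ or $q_{w}'$; the end pieces are $\{q_1,q_1'\}$ and $\{q_{w-1}',q_w\}$, and summing all $w$ pieces cancels every fresh coordinate in pairs, leaving $\sum_j \bar z_{q_j}=0$, which is the original (copied) parity check. The rest, including the fresh-column linear-independence step for nontriviality, is sound; note this step also covers the case $\bar z|_{\mathrm{old}}=0$, which you would otherwise need to rule out separately.

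Third, for $d'_X$, the cleaning-by-boundary argument is correct: the fresh-coordinate map from a block's pieces is the path boundary $\F_2^w\to\F_2^{w-1}$, surjective with kernel $\{0,\vec 1\}$, and the two preimages yield complementary residuals on the block's $w$ copied qubits, so one has weight at most $\lfloor w/2\rfloor$. Your final count $|\bar x'|\le(w_X/2+1)|\bar x|$ is a deliberately crude bound; a slightly more careful accounting (each touched block already carries at least one unit of fresh weight that is removed) actually yields $|\bar x'|\le(w_X/2)|\bar x|$, i.e. $d'_X\ge 2d_X/w_X$, which is stronger than the stated $d_X/(w_X/2+1)$ — so both are compatible with the lemma, and your version suffices. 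The ``one block per copied qubit'' count does indeed hinge on the greedy collision-free rule from copying, as you note; you are right to isolate it.

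In short, since the paper defers entirely to Hastings for this lemma, there is no in-paper proof to compare against; your blind proof is a correct reconstruction of the standard argument, with the $d'_X$ bound even admitting a small sharpening.
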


\begin{remark}
	We have presented the simplified version of \emph{gauging} as outlined in Ref.~\cite{hastings2021quantum}. Ref.~\cite{sabo2024weight} modified \emph{gauging} to have an overhead of $w - 2$ new stabilizer generators and $w - 3$ new qubits.
\end{remark}

\subsection{Thickening and Choosing Heights}\label{sec:thickening_choosing_heights}

In Hastings's original construction, thickening is a technique intended to increase $d_X$ , i.e., the minimum weight of an $X$ logical operator, and choosing heights is meant to reduce $q_Z$, i.e., the maximum number of $Z$ stabilizers that act on any qubit. 
Before we revisit the details of the construction of a thickened chain complex, we first provide the high-level intuition behind thickening.

\tdplotsetmaincoords{80}{120} 
\tdplotsetrotatedcoords{0}{0}{0} 

\begin{figure}[ht]
    \centering
    \begin{subfigure}[b]{0.45\textwidth}
        \centering
        \begin{tikzpicture}[scale=1,tdplot_rotated_coords,
                            rotated axis/.style={->,purple,ultra thick},
                            axis/.style={->,blue,ultra thick},
                            blackBall/.style={ball color = black!80},
                            greyBall/.style={ball color = black!20},
                            radiationAlignment/.style={ultra thin, dashed, opacity = 0.6},
                            cube/.style={dashed, ultra thin,fill=red, opacity = 0.4},
                            borderBall/.style={ball color = white,opacity=.25},
                            wave/.style={
                            decorate,decoration={snake,post length=1.4mm,amplitude=1mm,
                            segment length=2mm},red, thick},
                            very thick]
        
        \foreach \x in {1,2} {
            \foreach \y in {-1,0,1} {
                \ifthenelse{\lengthtest{\x pt < 2pt}} {
                    \fill[red, opacity=0.5] (\x,\y,0) -- (\x+1,\y,0) -- (\x+1,\y+1,0) -- (\x,\y+1,0) -- cycle;
                }{}

                \ifthenelse{\lengthtest{\x pt < 2pt} \and \lengthtest{-1pt < \y pt} \and \lengthtest{\y pt < 3pt}} {
                    \draw[thick] (\x,\y,0) -- (\x+1,\y,0);
                }{}
                \draw[thick] (\x,\y,0) -- (\x,\y + 1,0);
                \ifthenelse{\lengthtest{\y pt > -1pt} \and \lengthtest{-1pt < \y pt} \and \lengthtest{\y pt < 3pt}} {
                    \shade[rotated axis,greyBall] (\x,\y,0) circle (0.05cm);
                }{}
            }
        }
        \end{tikzpicture}
        \caption{Regular surface code}
	\label{fig:thicken:a}
    \end{subfigure}
    \hfill
    \begin{subfigure}[b]{0.45\textwidth}
        \centering
        \begin{tikzpicture}[scale=1,tdplot_rotated_coords,
                            rotated axis/.style={->,purple,ultra thick},
                            axis/.style={->,blue,ultra thick},
                            blackBall/.style={ball color = black!80},
                            greyBall/.style={ball color = black!20},
                            radiationAlignment/.style={ultra thin, dashed, opacity = 0.6},
                            cube/.style={dashed, ultra thin,fill=red, opacity = 0.4},
                            borderBall/.style={ball color = white,opacity=.25},
                            wave/.style={
                            decorate,decoration={snake,post length=1.4mm,amplitude=1mm,
                            segment length=2mm},red, thick},
                            very thick]
        
        \foreach \x in {2, 3} {
            \foreach \y in {-1,0,1} {
                \foreach \z in {0,1,2} {
                    \ifthenelse{\lengthtest{\x pt < 3pt}} {
                        \fill[red, opacity=0.5] (\x,\y,\z) -- (\x+1,\y,\z) -- (\x+1,\y+1,\z) -- (\x,\y+1,\z) -- cycle;
                    }{}
                    \ifthenelse{\lengthtest{\z pt < 2pt} \and \lengthtest{\x pt < 3pt}} {
                        \fill[yellow, opacity=0.5] (\x,\y,\z) -- (\x,\y,\z + 1) -- (\x+1,\y,\z + 1) -- (\x + 1,\y,\z) -- cycle;
                    }{}
                    \ifthenelse{\lengthtest{\z pt < 2pt}} {
                        \fill[yellow, opacity=0.5] (\x,\y,\z) -- (\x,\y,\z + 1) -- (\x,\y + 1,\z + 1) -- (\x,\y + 1,\z) -- cycle;
                    }{}

                    \ifthenelse{\lengthtest{\x pt < 3pt} \and \lengthtest{-1pt < \y pt} \and \lengthtest{\y pt < 3pt}} {
                        \draw[thick] (\x,\y,\z) -- (\x + 1,\y,\z);
                    }{}
                    \draw[thick] (\x,\y,\z) -- (\x,\y + 1,\z);
                    \ifthenelse{\lengthtest{\y pt > -1pt} \and \lengthtest{-1pt < \y pt} \and \lengthtest{\y pt < 3pt}} {
                        \shade[rotated axis,greyBall] (\x,\y,\z) circle (0.05cm);
                        \ifthenelse{\lengthtest{\z pt < 2pt} \and \lengthtest{-1pt < \y pt} \and \lengthtest{\y pt < 3pt}} {
                            \draw[thick] (\x,\y,\z) -- (\x,\y, \z + 1);
                        }{}
                    }{}
                }
            }
        }
        \end{tikzpicture}
        \caption{Thickened surface code with $\ell = 3$}
	\label{fig:thicken:b}
    \end{subfigure}
    \vfill
    \begin{subfigure}[b]{0.45\textwidth}
        \centering
        \begin{tikzpicture}[scale=1,tdplot_rotated_coords,
                            rotated axis/.style={->,purple,ultra thick},
                            axis/.style={->,blue,ultra thick},
                            blackBall/.style={ball color = black!80},
                            greyBall/.style={ball color = black!20},
                            radiationAlignment/.style={ultra thin, dashed, opacity = 0.6},
                            cube/.style={dashed, ultra thin,fill=red, opacity = 0.4},
                            borderBall/.style={ball color = white,opacity=.25},
                            wave/.style={
                            decorate,decoration={snake,post length=1.4mm,amplitude=1mm,
                            segment length=2mm},red, thick},
                            very thick]
        
        \foreach \x in {2,3} {
            \foreach \y in {-1,0,1} {
                \foreach \z in {0,1,2} {
                    \ifthenelse{\y=-1 \and \x=2 \and \z=0} {
                        \fill[red, opacity=0.5] (\x,\y,\z) -- (\x+1,\y,\z) -- (\x+1,\y+1,\z) -- (\x,\y+1,\z) -- cycle;
                    }{}
                    \ifthenelse{\y=0 \and \x=2 \and \z=1} {
                        \fill[red, opacity=0.5] (\x,\y,\z) -- (\x+1,\y,\z) -- (\x+1,\y+1,\z) -- (\x,\y+1,\z) -- cycle;
                    }{}
                    \ifthenelse{\y=1 \and \x=2 \and \z=2} {
                        \fill[red, opacity=0.5] (\x,\y,\z) -- (\x+1,\y,\z) -- (\x+1,\y+1,\z) -- (\x,\y+1,\z) -- cycle;
                    }{}

                    \ifthenelse{\lengthtest{\x pt < 3pt} \and \lengthtest{-1pt < \y pt} \and \lengthtest{\y pt < 3pt}} {
                        \draw[thick] (\x,\y,\z) -- (\x+1,\y,\z);
                    }{}
                    \draw[thick] (\x,\y,\z) -- (\x,\y + 1,\z);
                    \ifthenelse{\lengthtest{\y pt > -1pt} \and \lengthtest{-1pt < \y pt} \and \lengthtest{\y pt < 3pt}} {
                        \shade[rotated axis,greyBall] (\x,\y,\z) circle (0.05cm);
                        \ifthenelse{\lengthtest{\z pt < 2pt} \and \lengthtest{-1pt < \y pt} \and \lengthtest{\y pt < 3pt}} {
                            \draw[thick] (\x,\y,\z) -- (\x,\y, \z + 1);
                        }{}
                    }{}
                }
            }
        }
        \end{tikzpicture}
        \caption{Thickened and height chosen surface}
	\label{fig:thicken:c}
    \end{subfigure}
    \hfill
    \caption{As per convention, vertices represent $X$ stabilizer generators, edges are qubits, and highlighted faces are $Z$ stabilizers. In (a) we have a surface code with $d_X = 2$ and $d_Z = 3$. (b) is the surface code but after thickening, i.e., $d_X' = 6$ and $d_Z' = 3$. The red highlighted faces correspond to $Z$ stabilizer generators of the original code ($A_2 \otimes B_0$) and the yellow highlighted faces correspond to $Z$ stabilizer generators in $A_1 \otimes B_1$. (c) is the thickened code after choosing heights. For visibility, we do not highlight the stabilizers in $A_1 \otimes B_1$ as they remain unchanged from thickening. 
    We also note that $q_Z$ does not decrease in this particular example because it was already less than or equal to 3.
    This example is mainly chosen to illustrate the thickening and height-choosing process.
    }
    \label{fig:thickening}
\end{figure}

A thickened code can be understood as the quantum code resulting from the tensor product between the 2-complex of a quantum code with the 1-complex that corresponds to a classical repetition code.
To be more precise, we are actually using the 1-co-complex of the classical repetition code but will refer to it as a 1-complex for simplicity.
For the sake of clarity in our initial exposition, we assume for now that the 2-complex corresponding to the original quantum code can be represented as a 2-dimensional surface code as shown in Figure~\ref{fig:thicken:a}.

After thickening, the 2-complex of the thickened code can be represented in a 3-dimensional cuboid: the original 2-complex is laid down on the bottom layer of the cube and the 1-complex corresponding to the classical repetition code extends the surface in the vertical direction corresponding to the third dimension.
The new $Z$ stabilizers that are introduced in the thickened code are oriented vertically and their addition means that the logical $X$ operators of the original quantum code no longer commute with the new $Z$ stabilizer group.
In order to satisfy the commutation relations with the new $Z$ stabilizers, the logical $X$ operators of the original quantum code must be extended to act on the newly introduced qubits in a way that is reminiscent of that of the 3D toric code, resulting in an increase in $d_X$. We provide a visualization in Figure~\ref{fig:thicken:b}.

Choosing heights is a technique used to resolve the redundancies that arise from thickening.
By only choosing to keep a select few of the $Z$ stabilizers from each of the layers in the third dimension that are created from thickening, we retain the increase in $d_X$ while reducing the number of $Z$ stabilizer generators that act on each qubit as in Figure~\ref{fig:thicken:c}.

\subsubsection*{Formal Description}
Now that we have provided an intuitive understanding of thickening and choosing heights, we now provide a more formal description of the thickening process. 
Suppose we are given a quantum code $\calQ$ that has a 2-complex $\calA = \left(A_2, A_1, A_0\right) \cong \left(\F_2^{n_Z}, \F_2^n, \F_2^{n_X}\right)$ with the following boundary maps:
\begin{equation}
	\begin{aligned}
		\partial_2: A_2 &\xrightarrow{H_Z^\top} A_1,\\
		\partial_1: A_1 &\xrightarrow{H_X} A_0,
	\end{aligned}
\end{equation}
where $n_Z, n, n_X$ are the number of $Z$ stabilizers, qubits, and $X$ stabilizers respectively. 

We also have a 1-complex $\calB = \left(B_1, B_0\right) \cong \left(\F_2^{\ell - 1}, \F_2^{\ell}\right)$ that corresponds to a classical $\left[\ell, 1, \ell\right]$ repetition code with the following boundary map:
\begin{equation}
	\partial_1: B_1 \xrightarrow{H_R^\top} B_0,
	\label{eq:boundaryChain}
\end{equation}
where $\ell$ is the length of the repetition code and $H_R$ is the parity check matrix of the repetition code.

The 3-complex $\calT$ of the thickened code is then given by the tensor product of $\calA$, the 2-complex of the original quantum code, with $\calB$, the 1-complex of the classical repetition code:
\begin{align}
	\calT &= \calA \otimes \calB \\
	&= \left(T_3, T_2 , T_1, T_0\right) \\
	&\cong \left(A_2 \otimes B_1, \left(A_2 \otimes B_0\right) \oplus \left(A_1 \otimes B_1\right), \left(A_1 \otimes B_0\right) \oplus \left(A_0 \otimes B_1\right), \left(A_0 \otimes B_0\right)\right) \\
	&\cong \left(\F_2^{n_Z \ell}, \F_2^{n_Z\ell + n(\ell-1)}, \F_2^{n\ell + n_X(\ell-1)}, \F_2^{n_X \ell}\right)
\end{align}
where $T_2$ corresponds to the vector space spanned by the new set of $Z$ stabilizer generators.
We can similarly interpret $T_1$ and $T_0$ in the same way but for the new set of qubits and the new set of $X$ stabilizer generators respectively.
Note that $T_3$ corresponds to new $Z$ stabilizers that are clearly redundant and can be removed from the code because they do not have any homological meaning with respect to error correction.
It should be clear that the thickened code $\calQ'$ has $n_Z\ell + n(\ell-1)$ $Z$ stabilizers, $n\ell + n_X(\ell-1)$ qubits, and $n_X\ell$ $X$ stabilizers.
We state the thickened code $\calQ'$'s boundary maps below:
\begin{equation}
	\begin{aligned}
		\partial_2: T_2 &\xrightarrow{H_Z^{\prime \top}} T_1,\\
		\partial_1: T_1 &\xrightarrow{H'_X} T_0,
	\end{aligned}
\end{equation}
where $H^{\prime\top}_Z$ and $H'_X$ are the new parity check matrices of the $Z$ and $X$ stabilizers of the thickened code respectively.
In Ref.~\cite{hastings2021quantum, sabo2024weight}, the authors used K\"unneth's formula to provide an explicit construction for the new parity check matrices of the thickened code which is given by
\begin{align}
	H'_Z &= \left(\begin{array}{c|c}
		H_Z \otimes \mathbbm{1}_\ell & \textbf{0} \\
		\hline
		\rule{0pt}{4mm} 
		\mathbbm{1}_n \otimes H_R & H_X^{\top} \otimes \mathbbm{1}_{\ell - 1}
\end{array}\right),\\
			H'_X &= \left(\begin{array}{c|c} H_X \otimes \mathbbm{1}_{\ell} & \mathbbm{1}_{n_X} \otimes H_R^{\top}\end{array}\right)
		\end{align}
		where the left partition of the parity matrices acts on the $\left(A_1 \otimes B_0\right)$ qubits and the right partition acts on the $\left(A_0 \otimes B_1\right)$ qubits.
		For the $H'_Z$ parity check matrix, the top row corresponds to the $Z$ stabilizers from the $\left(A_2 \otimes B_0\right)$ vector space and the bottom row corresponds to the $Z$ stabilizers from the $\left(A_1 \otimes B_1\right)$ vector space.

		We now state a useful theorem from Ref.~\cite{zeng2019higher} that gives us the distance of the thickened code.
		In the theorem, the authors define the distance of a homology group as the minimum Hamming weight of a nontrivial element (any representative) in the homology group.
		\begin{theorem}[{\cite[Restatement of Theorem 1]{zeng2019higher}}]\label{thm:thickening_distance}
			Consider $m$-complex $\calA$ and assume that homological groups $H_j(\calA)$ and cohomological groups $H^j(\calA)$ have distances $d_j$ and $d^j$ respectively for $0 \leq j \leq m$.
			Also, consider 1-complex $\calB$ with an $r \times c$ binary matrix $H$ of rank $r$ as its boundary operator.
			Then, the distance $d_j'$ of the $j$\textsuperscript{th} homology group $\ker\left(\partial_j\right)/ \im\left(\partial_{j+1}\right)$ of the thickened complex $\calA \otimes \calB$ satisfies the following:
			\[d_j'  = \min\left(d_{j-1}(\calA)d_1(\calB), d_j(\calA) d_0(\calB)\right).\]
			Similarly, we have
			\[d^j\,'  = \min\left(d^{j-1}(\calA)d^1(\calB), d^j(\calA) d^0(\calB)\right).\]
		\end{theorem}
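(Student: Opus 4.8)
The plan is to prove a ``K\"unneth formula with distances'': establish matching upper and lower bounds on $d_j'$, and then deduce the cohomological identity by running the identical argument on the dual cochain complexes (using that the cochain complex of $\calA\otimes\calB$ is the tensor product of the cochain complexes of $\calA$ and $\calB$). Write $\calT:=\calA\otimes\calB$, so $T_j=(A_j\otimes B_0)\oplus(A_{j-1}\otimes B_1)$ with boundary $\partial(a\otimes b)=\partial^\calA a\otimes b+a\otimes\partial^\calB b$ over $\F_2$; after fixing the standard bases $f_1,f_2,\dots$ of $B_0$ and $g_1,g_2,\dots$ of $B_1$, every chain in $T_j$ is uniquely $z=\sum_s u_s\otimes f_s+\sum_t v_t\otimes g_t$ with $u_s\in A_j$, $v_t\in A_{j-1}$, and $|z|=\sum_s|u_s|+\sum_t|v_t|$. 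I will use the field-coefficient K\"unneth isomorphism $\calH_j(\calT)\cong(\calH_j(\calA)\otimes\calH_0(\calB))\oplus(\calH_{j-1}(\calA)\otimes\calH_1(\calB))$, with the convention that the distance of a trivial homology group is $+\infty$ so that the corresponding term is absent from the $\min$.

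\textbf{Upper bound.} Let $x$ be a minimum-weight nontrivial $j$-cycle of $\calA$ and $y$ a minimum-weight nontrivial $0$-cycle of $\calB$; then $x\otimes y$ is a $j$-cycle of $\calT$ of weight $d_j(\calA)\,d_0(\calB)$, and it is homologically nontrivial because the K\"unneth map carries the nonzero class $[x]\otimes[y]\in\calH_j(\calA)\otimes\calH_0(\calB)$ to $[x\otimes y]$. Symmetrically, minimum-weight nontrivial cycles $x'\in\calH_{j-1}(\calA)$ and $y'\in\calH_1(\calB)$ yield a nontrivial $j$-cycle $x'\otimes y'$ of weight $d_{j-1}(\calA)\,d_1(\calB)$. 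Hence $d_j'\le\min\!\big(d_{j-1}(\calA)\,d_1(\calB),\,d_j(\calA)\,d_0(\calB)\big)$.

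\textbf{Lower bound.} Let $z$ be a minimum-weight representative of a nontrivial class in $\calH_j(\calT)$, written as above. Splitting $\partial z=0$ into the two summands of $T_{j-1}$ shows each $v_t$ is an $\calA$-cycle and that $\partial^\calA u_s$ equals the $\calA$-cycle $\sum_t(\partial^\calB g_t)_s\,v_t$ for every $s$; consequently the tuple of classes $([v_1],[v_2],\dots)$ lies in $\calH_{j-1}(\calA)\otimes\ker(\partial^\calB)=\calH_{j-1}(\calA)\otimes\calH_1(\calB)$, and under the K\"unneth splitting this tuple represents the $\calH_{j-1}(\calA)\otimes\calH_1(\calB)$-component of $[z]$. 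If that component is nonzero, pick an $\F_2$-linear functional $\phi$ on $\calH_{j-1}(\calA)$ with $\big(\phi[v_t]\big)_t\ne0$; applying $\phi$ coordinatewise produces a nonzero element of $\calH_1(\calB)\subseteq B_1$, of weight $\ge d_1(\calB)$, whose support lies in $\{t:v_t\ne0\}$, and each such $v_t$ is a nontrivial $\calA$-cycle, so $|z|\ge\sum_t|v_t|\ge d_{j-1}(\calA)\,d_1(\calB)$. If instead that component vanishes, every $v_t$ is an $\calA$-boundary, so subtracting a $\calT$-boundary of the form $\sum_t\partial(q_t\otimes g_t)$ yields a homologous cycle $z'=\sum_s u_s'\otimes f_s$ supported in $A_j\otimes B_0$ whose columns $u_s'$ are all $\calA$-cycles; the mirror-image functional argument---using a functional $\psi$ on $\calH_j(\calA)$, the (necessarily nonzero) $\calH_j(\calA)\otimes\calH_0(\calB)$-component of $[z]=[z']$, and the fact that the representative $\big(\psi[u_s']\big)_s$ of a nonzero coset of $\im(\partial^\calB)$ has weight $\ge d_0(\calB)$---gives $|z'|\ge d_j(\calA)\,d_0(\calB)$.

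\textbf{Main obstacle.} The last step controls $|z'|$ rather than $|z|$, and the cleaning $z\mapsto z'$ can a priori raise the Hamming weight; so the heart of the proof is to show that, when the $\calH_{j-1}(\calA)\otimes\calH_1(\calB)$-component of the class is zero, a minimum-weight representative can already be taken with empty $B_1$-support (equivalently, that the cleaning is weight-nonincreasing), at which point the above bounds combine with the upper bound to give the claimed equality and the cohomological statement follows verbatim on the dual complexes. This is precisely where the hypothesis that $\calB$ is a single-boundary complex given by a full-rank matrix $H$ should enter (in the thickening application $H$ even has column weight $\le 2$): I would argue by a peeling/exchange induction over the nonzero columns $v_t\otimes g_t$, absorbing each into $\partial(q_t\otimes g_t)$ for a carefully chosen preimage $q_t$ and invoking minimality of $z$ together with the sparsity of $H$ to exclude a strict weight increase, so that the process terminates at a pure-$B_0$ representative of weight $|z|$. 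Making this peeling genuinely well-founded and weight-nonincreasing is the delicate point; should it resist a clean treatment, the remaining case can be supplied by the combinatorial analysis of Ref.~\cite{zeng2019higher}.
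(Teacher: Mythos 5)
The paper does not supply its own proof of this theorem: it is cited verbatim as a restatement of Theorem~1 of Zeng--Pryadko~\cite{zeng2019higher}, so there is no in-paper argument against which to check your attempt, and the review below is on the merits of your proposal alone.

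Your upper bound and the first branch of the lower bound are sound. In the upper bound, $x\otimes y\in A_j\otimes B_0$ is indeed a $j$-cycle of $\calT$ (both $\partial^\calA x$ and $\partial^\calB y$ vanish, the latter since $\calB$ has no $B_{-1}$), and nontriviality follows from injectivity of the K\"unneth map. In the first branch, the identification $\calH_1(\calB)=\ker\partial^\calB$ for a $1$-complex, the invariance of $([v_t])_t\in\calH_{j-1}(\calA)\otimes\ker\partial^\calB$ under changing $z$ by a $\calT$-boundary, and the functional/weight-counting step all check out. One small imprecision: you write ``each such $v_t$ is a nontrivial $\calA$-cycle'' after saying the support of $(\phi[v_t])_t$ lies in $\{t:v_t\ne 0\}$; the set you actually need is $\{t:\phi[v_t]=1\}$, for which nontriviality of $[v_t]$ genuinely follows, and the weight bound $\sum_t|v_t|\ge d_{j-1}(\calA)\,d_1(\calB)$ should be read over that smaller index set.

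The gap you flag in the second branch is genuine, and it is not a small technicality. When every $v_t$ is an $\calA$-boundary, the cleaning $z\mapsto z'=z+\partial\bigl(\sum_t q_t\otimes g_t\bigr)$ with $\partial^\calA q_t=v_t$ replaces each $u_s$ by $u_s'=u_s+\sum_t(\partial^\calB g_t)_s\,q_t$, and nothing forces $\sum_s|u_s'|\le\sum_s|u_s|+\sum_t|v_t|$: a boundary $v_t$ of small Hamming weight can have only heavy preimages $q_t$, and these get added into one or two of the columns $u_s$. Minimality of $z$ does not help here, since minimality only says $|z|\le|z'|$, which is the wrong direction. Your subsequent functional argument therefore bounds $|z'|$, not $|z|$, and so does not establish $d_j'\ge d_j(\calA)\,d_0(\calB)$ in this branch. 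The sketched ``peeling/exchange induction'' is the missing content, and falling back on ``the combinatorial analysis of Ref.~\cite{zeng2019higher}'' concedes that the proposal as written is not a self-contained proof of the statement. To close the gap one must argue directly about $z$---for instance by finding a quantity read off from the uncleaned columns $u_s$ whose class in $\calH_0(\calB)$ is an invariant of $[z]$---but note that the obvious candidate, $(\phi(u_s))_s$ for an $\calA$-cocycle $\phi$, only tells you that at least $d_0(\calB)$ columns have $\phi(u_s)=1$; since the $u_s$ are not cycles, this does not by itself force $|u_s|\ge d_j(\calA)$ on those columns, so a further idea (or the full argument of \cite{zeng2019higher}) is still required.
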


		Using the above theorem, we can explicitly compute the distance of the thickened code $\calQ'$ as shown in the following lemma:
		\begin{lemma}\label{lem:thickened_code_distance}
			Let $d_X$ (corr. $d_Z$) be the minimum weight of the $X$ (corr. $Z$) logical operators of the original quantum code $\calQ$ and $\ell$ be the distance of the $[\ell, 1, \ell]$ repetition code.
			Then, the thickened code $\calQ'$ has an $X$ distance $d_X' = d_X\cdot \ell$ and a $Z$ distance $d_Z' = d_Z$.
		\end{lemma}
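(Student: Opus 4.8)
The plan is to obtain the lemma as a direct corollary of Theorem~\ref{thm:thickening_distance}, applied to the $2$-complex $\calA$ of the original code $\calQ$ and the $1$-complex $\calB$ of the $[\ell,1,\ell]$ repetition code. First I would observe that the thickened code $\calQ'$ is, after discarding the top-degree (and homologically meaningless) $Z$-stabilizers $T_3$, exactly the quantum code attached to the $3$-complex $\calT = \calA\otimes\calB$; since $T_3$ and $\partial_3$ appear in neither $H_1(\calT)$ nor $H^1(\calT)$, this truncation does not affect the two quantities we care about. Under the correspondence of Section~\ref{sec:notation}, the $Z$-distance $d_Z'$ is the distance of $H_1(\calT)$ and the $X$-distance $d_X'$ is the distance of $H^1(\calT)$, so it suffices to compute these two numbers.

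Next I would record the (co)homological distances of the two factors that feed into the theorem at degree $j=1$. For $\calA$, by definition $H_1(\calA)=\ker H_X/\rs H_Z$ is the space of $Z$-logicals, of distance $d_Z$, and $H^1(\calA)=\ker H_Z/\rs H_X$ is the space of $X$-logicals, of distance $d_X$. For $\calB$, whose boundary map is $H_R^\top$, I would check four facts: (i) $H_1(\calB)=\ker H_R^\top=0$, since $H_R$ has full rank $\ell-1$; (ii) $H^1(\calB)=\F_2^{\ell-1}/\im H_R=0$, again since $H_R$ has full rank; (iii) $H^0(\calB)=\ker H_R$ is precisely the length-$\ell$ repetition code, so its distance is $\ell$; and (iv) $H_0(\calB)=\F_2^\ell/\im H_R^\top$ is one-dimensional and its unique nontrivial coset (the odd-weight vectors) contains a weight-one vector, so its distance is $1$.

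Then I would substitute into Theorem~\ref{thm:thickening_distance} with $j=1$. On the homology side, $d_Z'=\min\!\bigl(d_0(\calA)\,d_1(\calB),\ d_1(\calA)\,d_0(\calB)\bigr)$; since $H_1(\calB)=0$ the first term drops out (the K\"unneth summand $H_0(\calA)\otimes H_1(\calB)$ vanishes), leaving $d_Z'=d_1(\calA)\,d_0(\calB)=d_Z\cdot 1=d_Z$. On the cohomology side, $d_X'=\min\!\bigl(d^0(\calA)\,d^1(\calB),\ d^1(\calA)\,d^0(\calB)\bigr)$; since $H^1(\calB)=0$ the analogous summand vanishes, leaving $d_X'=d^1(\calA)\,d^0(\calB)=d_X\cdot\ell$. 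This is consistent with the ``vertical spreading'' of $X$-logicals visualized in Figure~\ref{fig:thicken:b} and with the claimed values.

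The one place where care is genuinely required is bookkeeping: matching the output code's $X$- and $Z$-distances to the right degree and to homology versus cohomology of $\calT$, and keeping straight that $\calB$ is taken in its co-complex orientation (boundary operator $H_R^\top$, bits in degree $0$), so that it is $H^0(\calB)$ --- not $H_0(\calB)$ --- that carries the weight-$\ell$ codeword. This asymmetry is precisely what makes thickening multiply $d_X$ by $\ell$ while leaving $d_Z$ unchanged. I would also state explicitly that a vanishing (co)homology group contributes $\infty$ to the relevant minimum (equivalently, that the corresponding tensor summand in the K\"unneth decomposition is zero), so that the surviving terms genuinely realize the minima and the two identities hold with equality rather than merely as inequalities.
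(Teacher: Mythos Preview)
Your proposal is correct and takes essentially the same approach as the paper: both arguments reduce the lemma to a direct application of Theorem~\ref{thm:thickening_distance} at $j=1$. The paper's proof is terser---it simply invokes the theorem without spelling out the (co)homological distances of $\calB$---whereas you make the intermediate computations ($H_1(\calB)=0$, $H^1(\calB)=0$, $d_0(\calB)=1$, $d^0(\calB)=\ell$) and the $\infty$-convention for vanishing summands explicit, which is a welcome clarification rather than a different method.
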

		\begin{proof}
			We begin by evaluating the $X$ distance of the thickened code. 
			The $X$ logical operators of the thickened code correspond to the first cohomology group of the thickened complex $\calT = \calA \otimes \calB$.
			We can apply Theorem~\ref{thm:thickening_distance} to the first cohomology group of the thickened complex to obtain:
			\[d_X' = d_X\cdot \ell.\]

			The $Z$ distance of the thickened code can be obtained similarly using Theorem~\ref{thm:thickening_distance}, i.e., $d_Z' = d_Z$.
		\end{proof}

		We now provide a formal description of choosing heights. 
		Suppose we have a thickened code $\calQ'$ with $n_Z\ell + n(\ell-1)$ $Z$ stabilizers and $n\ell + n_X(\ell-1)$ qubits. 
		Recall that there are two types of $Z$ stabilizers in the thickened code: the $Z$ stabilizers arising from the original quantum code that corresponds to the vector space $\left(A_2 \otimes B_0\right)$, and the new $Z$ stabilizers that are introduced in the thickening process that correspond to the vector space $\left(A_1 \otimes B_1\right)$.
		Notice that each qubit in $\left(A_1 \otimes B_0\right)$ is in the boundary of at most 2 $Z$-stabilizers in $\left(A_1 \otimes B_1\right)$ and at most $q_Z$ $Z$-stabilizers in $\left(A_2 \otimes B_0\right)$.
		Also, each qubit in $\left(A_0 \otimes B_1\right)$ is in the boundary of at most $q_X$ stabilizers in $\left(A_1 \otimes B_1\right)$ and no stabilizer in $\left(A_2 \otimes B_0\right)$.
		Suppose $q_X$, the maximum number of $X$ stabilizers that act on any qubit in the original code $\calQ$, has already been weight-reduced by other techniques.
		Then, the $Z$ stabilizers in $\left(A_1 \otimes B_1\right)$ have weight at most $2 + q_X$ and would also not be the reason for a high $q_Z'$.

		In Hastings's original construction, choosing heights is a technique applied on the $Z$ stabilizers in the vector space $\left(A_2 \otimes B_0\right)$ to reduce $q_Z'$.
		Instead of keeping all $Z$ stabilizers in $\left(A_2 \otimes B_0\right)$, we only keep a select few of them to reduce $q_Z'$.
		To be concrete, let $b_1, b_2, \ldots, b_{\ell}$ be the basis elements of the vector space $B_0$ that can be interpreted as the different heights in the 3-dimensional cube representation of the complex $\calT$.
		Similarly, let $z_1, z_2, \ldots, z_{n_Z}$ be the basis elements of the vector space $A_2$ that can be interpreted as the different $Z$ stabilizer generators in the original quantum code $\calQ$.
		For each $Z$ stabilizer generator $z_i$, instead of keeping every $Z$ stabilizer in $z_i \otimes B_0$, we choose to keep only the $Z$ stabilizer generator corresponding to $z \otimes b_i$ for some $i \in [\ell]$.
		Thus, the resulting $Z$ stabilizer generators after we have chosen heights would have the form $z_i \otimes b_j$ for some $j \in [\ell]$ for each $i \in [n_Z]$.
		To see how this reduces $q_Z'$, recall that each qubit in $\left(A_1 \otimes b_i\right)$ for some basis element $b_i \in B_0$ is in the boundary of at most $q_Z$ $Z$-stabilizers in $\left(A_2 \otimes B_0\right)$.
		After choosing heights, some of the $q_Z$ $Z$-stabilizers in $\left(A_2 \otimes b_i\right)$ for each $b_i$ may be removed if their heights were not chosen to be $b_i$, potentially resulting in a decrease in $q_Z$.
		For some target $q_Z'$ which we hope to reduce $q_Z$ to, Hastings provided several different ways to compute the value of $\ell$ required for the classical repetition code for the thickening process and the heights to be chosen in his original construction in Ref.~\cite{hastings2021quantum}.

		For the rest of our exposition, we assume that the value of $\ell$ has been chosen to be sufficiently large to reduce $q_Z'$ to $w_X$, i.e., we have chosen $w = w_X - 2$ where $w$ is the number of $Z$ stabilizer generators in $\left(A_2 \otimes B_0\right)$ that acts on any qubit.

		In Ref.~\cite{sabo2024weight}, the authors provide a concise explanation of how choosing heights reduces $q_Z'$. They pointed out how choosing different heights for any two different $Z$ stabilizer generators in $H_Z$ results in the corresponding two rows in $H'_Z$ not sharing a common boundary.
		This means that the two $Z$ stabilizers do not act on the same qubits and thus choosing heights reduces $q_Z'$. 
		The authors also provided an extreme example where choosing our classical repetition code to have $\ell = n_Z$ and heights = $\left(1, \ldots, n_Z\right)$ guarantees that every qubit in $\left(A_1 \otimes B_0\right)$ is acted upon by at most 1 $Z$-stabilizer in $\left(A_2 \otimes B_0\right)$ and at most 2 $Z$-stabilizer in $\left(A_1 \otimes B_1\right)$.
		This once again emphasizes the importance of choosing the right value of $\ell$ and the right heights to reduce $q_Z'$ because it is very much a careful balancing act of reducing $q_Z'$ while not introducing too much overhead in terms of the number of qubits.

		\begin{lemma}[{\cite[Code Parameters of Thickened and Height-Chosen Quantum Code]{hastings2016weight, hastings2021quantum}}]
			For a quantum CSS code $\calQ$ that encodes $k$ logical qubits with $n$ physical qubits,
			after applying thickening and choosing heights on $\calQ$ to obtain $\tilde{\calQ}$, the parameters of $\tilde{\calQ}$ are as follows:
			\begin{align*}
  &n' = \ell \cdot n + n_X \left(\ell - 1\right), \\
  &k' = k,\\
  &n'_X = \ell \cdot n_x,\\
  &n'_Z = n_Z + (\ell - 1) n,\\
  &w'_X = w_X + 2 \text{ if } \ell \geq 3 \text{ and } w_X + 1 \text{ if } \ell = 2,\\
  &q'_X = \max(q_X, 2),\\
  &w'_Z = \max(w_Z, q_X + 2),\\
  &d'_X = \ell \cdot d_X, \\
  &d'_Z = d_Z
			\end{align*}
			and where $q'_Z = O(1)$ and can be set to $3$ depending on the choice of $\ell$.
		\end{lemma}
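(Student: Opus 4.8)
The plan is to read every listed parameter directly off the explicit tensor-product complex $\calT = \calA\otimes\calB$ and the block matrices $H'_X,H'_Z$ stated above, treating ``choosing heights'' as the subsequent deletion of certain rows of $H'_Z$. The counts $n' = \ell n + n_X(\ell-1)$ and $n'_X = \ell n_X$ are immediate from $T_1\cong(A_1\otimes B_0)\oplus(A_0\otimes B_1)$ and $T_0\cong A_0\otimes B_0$, and choosing heights touches neither; before deletion $T_2\cong(A_2\otimes B_0)\oplus(A_1\otimes B_1)$ has $n_Z\ell+n(\ell-1)$ generators, and keeping one $z_i\otimes b_{h(i)}$ per $z_i$ together with all of $A_1\otimes B_1$ leaves $n'_Z = n_Z + n(\ell-1)$. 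For $k'=k$ I would first apply the K\"unneth formula before deletion: the $[\ell,1,\ell]$ repetition boundary $H_R^\top$ is injective of rank $\ell-1$, so $\calH_1(\calB)=\calH^1(\calB)=0$ and $\calH_0(\calB)\cong\calH^0(\calB)\cong\F_2$, whence $\calH_1(\calT)\cong\calH_1(\calA)$ and $\calH^1(\calT)\cong\calH^1(\calA)$, both of dimension $k$. Then I would verify that deletion preserves $\rs H'_Z$: feeding $z_i\otimes e_k\in A_2\otimes B_1$ into $\partial_2\partial_3=0$ gives the relation $(\text{row }z_i\otimes b_k) = (\text{row }z_i\otimes b_{k+1}) + \sum_{a\in\mathrm{supp}(\partial_2^{\calA}z_i)}(\text{row }a\otimes e_k)$, so inductively every deleted row lies in the span of $z_i\otimes b_{h(i)}$ and the retained $A_1\otimes B_1$ rows. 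Hence $\ker H'_X/\rs H'_Z$ is literally unchanged, giving $k'=k$; and since $k$ and $\dim\rs H'_X$ are then fixed while deleting rows can only enlarge $\ker H'_Z$, the space $\ker H'_Z$ is unchanged as well.

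Given the invariance of both quotients, the distances equal those of the thickened code before deletion, which Lemma~\ref{lem:thickened_code_distance} already supplies (recoverable from Theorem~\ref{thm:thickening_distance} using $d^0(\calB)=\ell$, $d_0(\calB)=1$ and the vanishing of $\calH_1(\calB),\calH^1(\calB)$): $d'_X=\ell d_X$ and $d'_Z=d_Z$. The three remaining weights come straight from the block structure. A row of $H'_X=(H_X\otimes\mathbbm{1}_\ell\mid\mathbbm{1}_{n_X}\otimes H_R^\top)$ indexed by $(s_X,j)$ has weight $w(s_X)$ plus the weight of column $j$ of $H_R$, which is $1$ at the two endpoints and $2$ otherwise, so $w'_X=w_X+1$ for $\ell=2$ and $w_X+2$ for $\ell\ge3$ (an interior height exists and choosing heights deletes no $X$-checks). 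A column of $H'_X$ lying in $A_1\otimes B_0$ meets $\le q_X$ checks while one in $A_0\otimes B_1$ meets exactly $2$ (rows of $H_R$ have weight $2$), so $q'_X=\max(q_X,2)$. In $H'_Z$ the top-right block is zero, so a row $z_i\otimes b_j$ has weight $w(z_i)\le w_Z$ and a row $a\otimes e_k$ has weight $2$ (two repetition-code neighbours) plus the number of $X$-checks through qubit $a$, at most $q_X$; hence $w'_Z=\max(w_Z,q_X+2)$.

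For $q'_Z$, a column of $H'_Z$ in $A_0\otimes B_1$ indexed by $(s_X,k)$ meets no $A_2\otimes B_0$ check and at most $w_X$ checks from $A_1\otimes B_1$ (the rows $a\otimes e_k$ for the $\le w_X$ qubits $a$ in $s_X$), while a column $(q,j)$ in $A_1\otimes B_0$ meets $\le2$ checks from $A_1\otimes B_1$ (bit $j$ lies in $\le2$ repetition-code checks) together with exactly the number of original $Z$-checks through $q$ assigned height $j$. The substance of ``choosing heights'' is exactly that $\ell$ and an assignment $h:[n_Z]\to[\ell]$ can be chosen so that this last count is $\le w$ for every qubit; taking $w=w_X-2$ (or even $w=1$) makes the $A_1\otimes B_0$ bound $\le w_X$, so once the earlier steps have reduced $w_X$ and $q_X$ to constants, $q'_Z=O(1)$ and can be pushed down to $3$.

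The main obstacle is precisely this height-selection step: simultaneously keeping $\ell$ small and exhibiting a good assignment. Rephrased, one must colour the $Z$-checks with $\ell$ colours so that every qubit sees at most $w$ checks of any single colour, and bound $\ell$; Hastings obtains such assignments with $\ell=O(1)$ for the $q'_Z\le w_X$ target (and bounded $\ell$ for the other targets) via greedy or probabilistic colourings of the hypergraph whose vertices are $Z$-checks and whose edges record ``two $Z$-checks sharing a qubit''. Everything else is bookkeeping on $H'_X$ and $H'_Z$, with one caveat worth flagging: the $k'=k$ and distance claims must be argued through the explicit $\partial_2\partial_3=0$ relations above — it is not enough that $\dim\rs H'_Z$ is preserved under deletion; one genuinely needs $\rs H'_Z$ itself, and hence every logical coset, to be unchanged.
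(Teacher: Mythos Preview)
Your proposal is correct and essentially complete. Note, however, that the paper does not actually prove this lemma: it is stated as a citation from Hastings's work, and the only piece the paper establishes independently is the distance claim (Lemma~\ref{lem:thickened_code_distance}), which invokes Theorem~\ref{thm:thickening_distance} exactly as you do. So for $d'_X$ and $d'_Z$ your route coincides with the paper's.

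For the remaining parameters the paper offers no argument at all, and your direct bookkeeping on the block matrices $H'_X$ and $H'_Z$ is precisely how Hastings and Sabo et~al.\ obtain them. Your handling of choosing heights---using the $\partial_2\partial_3=0$ relation to show that every deleted row $z_i\otimes b_k$ already lies in the span of the retained generators, so that $\rs H'_Z$ and hence both logical quotients are literally unchanged---is the right mechanism and is more explicit than anything in this paper. One small remark: for the $q'_Z$ analysis you correctly compute that a qubit in $A_0\otimes B_1$ sees at most $w_X$ checks from $A_1\otimes B_1$; the paper's surrounding prose writes $q_X$ there, which appears to be a slip on their part (your count from $H_X^\top\otimes\mathbbm{1}_{\ell-1}$ is the correct one). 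This does not affect the $q'_Z=O(1)$ conclusion since both $w_X$ and $q_X$ are assumed bounded after the earlier steps.
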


\subsection{Coning}
\label{sec:coning}

In Hastings's original work, coning is a technique that reduces $w_Z$.
Before we dive into some of the details of coning, we first provide a diagrammatic representation of the coning operation, found in Ref.~\cite{hastings2021quantum}, in Fig.~\ref{fig:coning}.
The high-level idea of coning is to build separate chain complexes $\overline{\calB}_i$ that replaces each high-weight $Z$ stabilizer generator $s_{Z,i}$.
For each of these new chain complexes, the number of $Z$ stabilizer generators in them corresponds to the weight of the original $Z$ stabilizer generator that it replaces.
To be more precise, each of these new $Z$ stabilizer generators can be mapped to a single qubit that $s_{Z, i}$ acts on.
In addition to acting on the single qubits that they are mapped to, these new $Z$ stabilizer generators each acts on a set of new qubits that are introduced in the coning operation.
The new qubits introduced in the coning operation each loosely corresponds to some $X$ stabilizer generator that is in the boundary of some original qubit in the boundary of $s_{Z,i}$.
A newly added $Z$ stabilizer generator in $\overline{\calB}_i$ acts on the original qubit that it is mapped to as well as the new qubits that are mapped to the $X$ stabilizer generators that lie in the boundary of the original qubit that it is mapped to.
To ensure that we do not increase the dimension of the homology group, we include new $X$ stabilizer generators that are connected to the different new qubits in the chain complex.
In other words, each of these new qubits is acted upon by the new $X$ stabilizer generator as well as the original $X$ stabilizer generator that is mapped to the new qubit.
If we were to let our $X$ stabilizer generators be vertices, qubits be edges, and $Z$ stabilizer generators be faces, we can immediately see a cone structure that replaces a face that has many edges in its boundary.
To be explicit, the newly added edges will ``grow'' from the vertices of an original face belonging to a high weight $Z$ stabilizer generator before convening on a single newly added vertex in the middle of the original face.
These new edges effectively cut up the original high weight face into many low weight faces.
The resulting quantum code is what we term the \emph{cone code}.

\begin{figure}[t]
	\centering
	\includegraphics[width=0.55\textwidth]{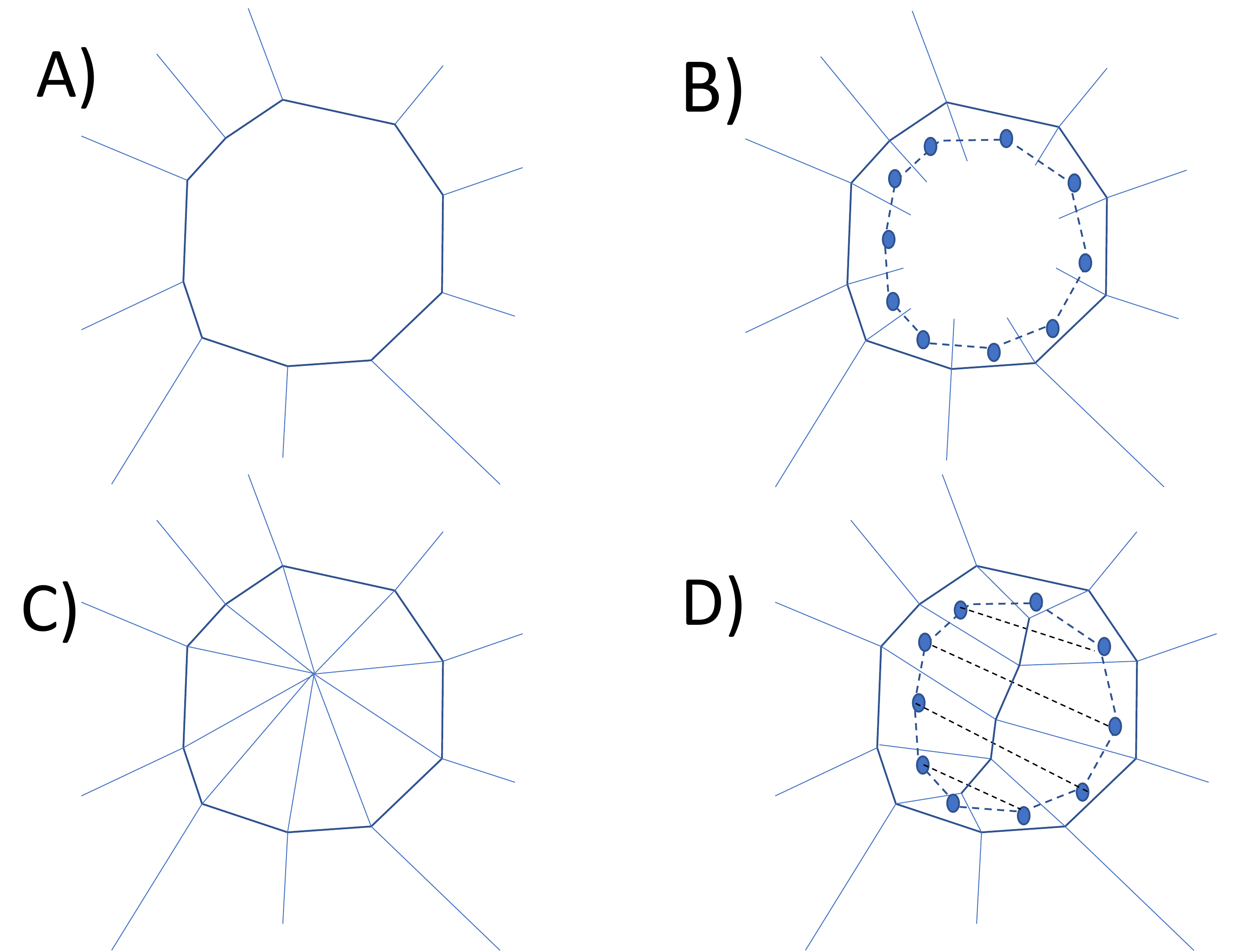}
	\caption{
		A visualization of the coning operation.
		In (A), we start with a high weight $Z$ stabilizer generator where edges correspond to qubits, faces correspond to $Z$ stabilizer generators, and vertices correspond to $X$ stabilizer generators as per convention.
		Then in (B), we build a new chain complex that has chains that correspond to the qubits and $X$ stabilizers that neighbor the high weight $Z$ stabilizer generator.
		Next, in (C), we cone the high weight $Z$ stabilizer generator to obtain the cone code.
		Finally, in (D), we cellulate the coned cone to obtain the reduced cone code.
		This figure is obtained from Ref.~\cite{hastings2021quantum}.
	}
	\label{fig:coning}
\end{figure}

\subsubsection*{Formal Description}
Now we provide a more formal description of the coning operation.
Suppose we are given a quantum code $\calQ$ that has a 2-complex $\calA = \left(A_2, A_1, A_0\right) \cong \left(\F_2^{n_Z}, \F_2^{n}, \F_2^{n_X}\right)$.
Let us define $\calA^\ast$ to be the same complex as $\calA$ but with a smaller set of $Z$ stabilizer generators.
If we let $\calS_Z$ be the set of $Z$ stabilizer generators in $\calQ$, then we can define $\calS_{Z,\calA^\ast}$ be the set of direct $Z$ stabilizer generators, i.e., the set of $Z$ stabilizer generators that we choose not to weight-reduce or ``cone''.
These generators typically have weight lesser or equal to 5 and they are all the $Z$ stabilizer generators in $\calA^\ast$.
Naturally, we can define $\calS_{Z,\calA^\ast}^{c}$ to be the set of $Z$ stabilizer generators that we choose to remove from $\calA$ for coning.
Note that $\calS_{Z, \calA^\ast}^c$ typically contains generators that have support on a large number of qubits and $\calS_Z = \calS_{Z,\calA^\ast} \cup \calS_{Z,\calA^\ast}^{c}$.
Define the chain complex $\calB$ to be a direct sum of 3-term chain complexes $\overline{\calB}_i$ where each $\overline{\calB}_i$ corresponds to a single $s_{Z,i} \in \calS_{Z,\calA^\ast}^{c}$.
Lastly,  let $\calS_{Z,\calB}$ the set of induced $Z$ stabilizer generators in $\calB$ for the cone code that replaces $\calS_{Z, \calA^\ast}^{c}$.

For each of the 3-term complexes $\overline{\calB}_i$, it contains -1-, 0-, and 1-cells.
The set of 1-cells corresponds to the set of qubits that the $Z$ stabilizer generator $s_{Z,i}$ acts on.
We denote the space spanned by the 1-cells as $Q_i = \left(\overline{\calB}_i\right)_1$.
For the set of 0-cells, we need to first define a few things.
Let $\calS_{X,i}$ be the set of $X$ stabilizer generators that are in the boundary of the qubits that are in the boundary of $s_{Z,i}$.
For each $X$ stabilizer generator $s_{X,j} \in \calS_{X,i}$, there exists an even number of qubits associated to $Q_i$ that are in the coboundary of $s_{X,j}$.
Thus, it is possible to pair up these qubits to form a set of tuples $\left(s_{X,j}, q_k, q_l\right)$ where each of these qubits only appear once in a tuple.
The set of 0-cells in $\overline{\calB}_i$ then corresponds to the set of tuples $\left(s_{X, j}, q_k, q_l\right)$ for each $s_{X,j}$ and each pair of qubits $\left(q_k, q_l\right)$ that lie in the support of $s_{X,j}$.
The boundary of a given 1-cell is the set of 0-cells that corresponds to tuples $\left(s_{X,j}, q_k, q_l\right)$ where the given 1-cell corresponds to either $q_k$ or $q_l$.
We denote the space spanned by the 0-cells as $X_i = \left(\overline{\calB}_i\right)_0$.
Because we do not want the $\overline{\calB}_i$ to increase the number of logical qubits in $\calQ$, we introduce -1-cells to make the zeroth homology group trivial.
To be specific, the decongestion lemma from Ref.~\cite{freedman2021building} allows us to find a basis of simple 0-cycles with bounded length for the zeroth homology group of $\overline{\calB}_i$.
For each of these simple 0-cycles, we introduce a -1-cell whose coboundary is the set of 0-cells that form the simple 0-cycle.
We denote the space spanned by the -1-cells as $R_i = \left(\overline{\calB}_i\right)_{-1}$.
Let $\partial_B$ and $\delta_B$ be the boundary and coboundary operators as described above for $\calB = \bigoplus_i \overline{\calB}_i = \left(\bigoplus_i \left(\overline{\calB}_i\right)_1, \bigoplus_i \left(\overline{\calB}_i\right)_0, \bigoplus_i \left(\overline{\calB}_{i}\right)_{-1}\right) = \left(\bigoplus_i Q_i, \bigoplus_i X_i, \bigoplus R_i\right)$.
Let $\partial_{A^\ast}$ and $\delta_{A^\ast}$ be the boundary and coboundary operators for $\calA^\ast$ such that the $Z$ stabilizer generators in $\calS_{Z,\calA^\ast}^c$ do not show up in the coboundary of the qubits in $\calA^\ast$.

Before we formally define what a cone code is, we first define chain maps $f_i:\overline{\calB}_i \to \calA$.
The chain map $f_i$ essentially maps the 1-cells in $\overline{\calB}_i$ to the qubits that they correspond to in $\calA$.
Similarly, it maps the 0-cells in $\overline{\calB}_i$ to the $X$ stabilizer generators that they correspond to in $\calA$.
Lastly, it vanishes on the -1-cells.
We can then define a chain map $f: \calB \to \calA$ as the direct sum of the chain maps $f_i$, i.e., 
\begin{align*}
  f = \begin{pmatrix}
    f_1 \\ f_2 \\ \vdots
  \end{pmatrix}.
\end{align*}
Now, we are ready to define the cone code which we shall denote as $\Cone(\calQ, f)$.
\begin{definition}[Cone Code]
  Given a chain map $f: \calB \to \calA$, the cone of $\calQ$ with respect to $f$ is the quantum code $\Cone(\calQ, f)$ that has a 3-term chain complex such that $\left(\Cone(\calQ, f)\right)_j = A_j \oplus B_{j-1} = A_j \oplus \left(\bigoplus_i \left(\overline{\calB}_i\right)_{j-1}\right)$ for $j \in \{0, 1, 2\}$.
  The boundary operator for $\Cone(\calQ, f)$ is given by
  \begin{align*}
    \partial_{\Cone(\calQ, f)} = \begin{pmatrix}
      \partial_A & f \\
      0 & \partial_B
    \end{pmatrix}
  \end{align*}
\end{definition}

Because a 0-cell in $\overline{\calB}_i$ can potentially be part of $\calO\left(\log^2 n\right)$ many simple 0-cycles in the basis for the zeroth homology group of $\overline{\calB}_i$, we may end up with newly introduced qubits that are acted upon by polylogarithmically many $X$ stabilizer generators.
To ensure that $q_X$ does not blow up because of coning, we perform \emph{thickening and choosing heights} on $\Cone(\calQ, f)$ with the thickening parameter $\ell = \Theta\left(\log\left(w_Z\right)^{2+2\epsilon} w_Z^\epsilon\right)$ for any $\epsilon > 0$.
Note that we are reducing $q_X$ so we are thickening in the dual basis compared to the original thickening operation formulated in the previous section.
We refer to the resulting code as the \emph{thickened and height-chosen cone code} $\Cone(\calQ, f, \ell)$.
Another issue pertains to how a -1-cell in $\overline{\calB}_i$ can potentially lie in the boundary of $\left|Q_i\right|$ many 0-cells because the simple 0-cycle can be of length up to $\left|Q_i\right|$.
This would result in a blow up in $w_X$ if the -1-cell is involved in the boundary of many 0-cells.
To address this, Ref.~\cite{hastings2021quantum} introduces a cellulation procedure that is applied on each of the $\overline{\calB}_i$.
By introducing additional 0-cells and -1-cells in each of the simple cycles that are part of the basis for the zeroth homology group of $\overline{\calB}_i$, we can break the high weight -1-cells into many low weight -1-cells connected by the additional 0-cells to ensure that $w_X$ is a small constant.
The resulting code is what we term the \emph{reduced cone code} $\Cone_R(\calQ, f, \ell)$.
For readers who are interested in the details of the coning operation as well as some concrete examples, we refer them to Refs.~\cite{hastings2021quantum,wills2023tradeoff,sabo2024weight}.

Before we state the code parameters of the reduced cone code, we first define the soundness factor $\lambda$.
The individual soundness factor $\lambda_i$ is defined as such:
\begin{align*}
  \lambda_i = \min\left(1, \min_i\left(\min_{u \in \left(\overline{\calB}_i\right)_0, \partial u = 0, u \neq 0}\left(\max_{v \in \left(\overline{\calB}_i\right)_1, u = \partial v}\frac{|u|}{|v|}\right)\right)\right)
\end{align*}
The soundness factor $\lambda$ is then defined as the minimum of 1 and the minimum of the individual soundness factors $\lambda_i$, i.e., $\lambda = \min\left(1, \min_i \lambda_i\right)$.

We now state the code parameters of the reduced cone code.

\begin{lemma}[{\cite[Code Parameters of Reduced Cone Code]{hastings2021quantum}}]
  For a quantum CSS code $\calQ$ that encodes $k$ logical qubits with $n$ physical qubts, after applying coning on $\calQ$ to obtain $\tilde{\calQ}$, the parameters of $\tilde{\calQ}$ are as follows when we let $\ell = \Theta\left(\log\left(w_Z\right)^{2+2\epsilon} w_Z^\epsilon\right)$ for any $\epsilon > 0$:
  \begin{align*}
      &n' = \calO\left(\ell\left(n + w_Z q_X w_X n_Z\right)\right), \\
      &k' = k,\\
      &n'_X = \calO\left(\ell\left(n + w_Z q_X w_X n_Z + n_X\right)\right),\\
      &n'_Z = \calO\left(n_Z w_Z \ell \right),\\
      &w'_X = \max\left(\calO\left(w_X^2 q_Z\right), \calO\left(1\right)\right),\\
      &q'_X = \max(q_X + \calO(1), \calO(1)),\\
      &w'_Z \leq q_X + \calO(1),\\
      &d'_X \geq  d_X, \\
      &d'_Z \geq d_Z \ell \lambda,
          \end{align*}
          where $\lambda$ is the soundness factor of the reduced cone code.
\end{lemma}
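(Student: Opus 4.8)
The plan is to follow the construction in three stages---the mapping cone $\Cone(\calQ, f)$, then the thickened and height-chosen cone code $\Cone(\calQ, f, \ell)$ obtained by tensoring with the length-$\ell$ repetition complex in the dual direction, and finally the reduced cone code $\Cone_R(\calQ, f, \ell)$ obtained by cellulation---and to observe that every count and weight is pinned down by a direct tally of cells, while the three homological statements ($k' = k$, $d'_X \geq d_X$, $d'_Z \geq d_Z \ell \lambda$) require the mapping-cone long exact sequence and, for the $Z$-distance, the soundness factor. For the counts, the essential estimate is the size of each summand $\overline{\calB}_i$: its $1$-cells $Q_i$ number at most $w_Z$ (the qubits of $\partial s_{Z,i}$); its $0$-cells $X_i$ are tuples $(s_{X,j}, q_k, q_l)$, and since at most $w_Z q_X$ distinct $X$-stabilizers meet $Q_i$ and each contributes at most $w_X/2$ pairs, $|X_i| = \calO(w_Z q_X w_X)$; its $-1$-cells $R_i$ index a decongestion basis of the zeroth homology of the $1$-truncation of $\overline{\calB}_i$, so $|R_i| \leq |X_i| = \calO(w_Z q_X w_X)$. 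Summing over the at most $n_Z$ coned stabilizers and multiplying by $\ell$ for the dual thickening yields $n' = \calO(\ell(n + w_Z q_X w_X n_Z))$, $n'_Z = \calO(n_Z w_Z \ell)$, and $n'_X = \calO(\ell(n + w_Z q_X w_X n_Z + n_X))$---the extra $n_X$ appearing because dual thickening promotes old qubits and old $X$-stabilizers into new $X$-stabilizers---while cellulation injects only $\calO(1)$-weight cells in numbers proportional to the total length of the decongestion cycles, which is absorbed into these bounds.

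For the weights I would read off the cone boundary $\partial_{\Cone(\calQ, f)} = \begin{pmatrix}\partial_A & f\\ 0 & \partial_B\end{pmatrix}$ and its transpose. A new $Z$-stabilizer is a $1$-cell $q_k \in Q_i$; its support is the single qubit $f(q_k)$ together with the at most $q_X$ tuples of $\overline{\calB}_i$ containing $q_k$, so $w'_Z \leq q_X + 1$, and dual thickening and cellulation add only $\calO(1)$, giving $w'_Z \leq q_X + \calO(1)$. An old $X$-stabilizer $a \in A_0$ has coboundary $(\delta_A a, f^\top a)$: the first block has weight at most $w_X$, and the second block counts tuples $(a, q_k, q_l)$, of which there are $\calO(w_X^2 q_Z)$ because $a$ is incident to at most $w_X$ qubits, each lies in at most $q_Z$ of the $Z$-stabilizers, and each such $Z$-stabilizer contributes at most $w_X/2$ pairs; so once the remaining $X$-stabilizers are cellulated down to $\calO(1)$ weight we get $w'_X = \max(\calO(w_X^2 q_Z), \calO(1))$. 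Finally, $q'_X = \max(q_X + \calO(1), \calO(1))$ is exactly what thickening and choosing heights guarantees when applied in the dual direction with $\ell = \Theta(\log(w_Z)^{2+2\epsilon} w_Z^\epsilon)$; that value of $\ell$ is chosen precisely to bring the (possibly polylogarithmic) column weight created by coning back down, and it leaves old qubits in at most $q_X$ old $X$-stabilizers plus $\calO(1)$ new ones.

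For $k' = k$ I would invoke the mapping-cone sequence $\cdots \to \calH_j(\calB) \xrightarrow{f_\ast} \calH_j(\calA^\ast) \to \calH_j(\Cone(\calQ, f)) \to \calH_{j-1}(\calB) \to \cdots$. Each $\overline{\calB}_i$ is constructed so that $\calH_0(\overline{\calB}_i) = 0$ (the purpose of the $-1$-cells) and $\calH_{-1}(\overline{\calB}_i) = 0$ (decongestion yields an independent cycle basis), while $\calH_1(\overline{\calB}_i)$ is one-dimensional, spanned by $\vec{1}_{Q_i}$, which $f$ maps to the deleted stabilizer $s_{Z,i}$. Since $\calA^\ast$ differs from $\calA$ only by deleting the $|\calS_{Z,\calA^\ast}^{c}|$ homologically trivial, linearly independent stabilizers in $\calS_{Z,\calA^\ast}^{c}$, the map $f_\ast$ is injective on $\calH_1(\calB)$ with image a complement of $\calH_1(\calA)$ in $\calH_1(\calA^\ast)$, and $\calH_0(\calB) = 0$, so $\calH_1(\Cone(\calQ, f)) \cong \calH_1(\calA^\ast)/\im f_\ast \cong \calH_1(\calA)$; concretely $\partial_{\Cone(\calQ, f)}(0, \vec{1}_{Q_i}) = (s_{Z,i}, 0)$, so the coned stabilizers are genuinely restored and any old $Z$-logical made nontrivial by passing to $\calA^\ast$ is a boundary again in the cone. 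Dual thickening and cellulation preserve $k$. The same bookkeeping also shows that a cone $Z$-logical supported entirely on the $\calA$-part represents a $Z$-logical of $\calQ$, a fact I will reuse below.

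It remains to bound the two distances. For $d'_X$: an $X$-logical of $\Cone(\calQ, f)$ is a cocycle $(z_A, z_B)$, and the transposed boundary forces $\delta_A z_A = 0$, so $z_A \in \ker H_Z$; if $z_A = H_X^\top a$ then, using $\delta_B f^\top = f^\top \delta_A$ and $\calH_0(\calB) = 0$, one produces $r$ with $(z_A, z_B) = \delta_{\Cone(\calQ, f)}(a, r)$, so $z_A$ must be a nontrivial $X$-logical of $\calQ$ and $|(z_A, z_B)| \geq |z_A| \geq d_X$; dual thickening and cellulation do not decrease this, hence $d'_X \geq d_X$. The $Z$-distance is the crux. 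Let $(c_A, c_B)$ be a nontrivial $Z$-logical of $\Cone(\calQ, f)$. The cycle condition makes $c_B = \bigoplus_i c_B^{(i)}$ a $0$-cycle of $\calB$, so each $c_B^{(i)}$ is a $0$-boundary of $\overline{\calB}_i$ because $\calH_0(\overline{\calB}_i) = 0$; by definition of the soundness factor we may choose a filling $v^{(i)} \in Q_i$ with $\partial_B v^{(i)} = c_B^{(i)}$ and $|v^{(i)}| \leq |c_B^{(i)}|/\lambda$. Setting $v = \bigoplus_i v^{(i)}$ and adding the boundary $\partial_{\Cone(\calQ, f)}(0, v) = (f(v), c_B)$ produces the homologous---hence still nontrivial---$Z$-logical $(c_A + f(v), 0)$, which by the remark above represents a nontrivial $Z$-logical of $\calQ$ and so has weight at least $d_Z$; since $|f(v)| \leq |v| \leq |c_B|/\lambda$, this gives $d_Z \leq |c_A| + |c_B|/\lambda$, hence $\lambda d_Z \leq |c_A| + |c_B| = |(c_A, c_B)|$ using $\lambda \leq 1$, i.e.\ $d_Z(\Cone(\calQ, f)) \geq \lambda d_Z$. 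Dual thickening with parameter $\ell$ then multiplies the $Z$-distance by $\ell$ by the distance formula for thickened complexes (Theorem~\ref{thm:thickening_distance}), and cellulation---a subdivision of the $-1$-cells---does not decrease it, giving $d'_Z \geq d_Z \ell \lambda$. I expect the main obstacle to be precisely this last chain of reductions: one must check that the soundness filling is compatible with the later thickening and cellulation and that those steps are genuinely distance-non-decreasing for the specific cellulation used, which is where the bulk of the analysis sits.
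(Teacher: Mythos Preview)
The paper does not provide its own proof of this lemma: it is stated with a citation to Hastings~\cite{hastings2021quantum} and no proof follows. So there is nothing in the paper to compare your proposal against directly. Your three-stage outline---count cells in each $\overline{\calB}_i$, read weights off the cone boundary matrix, and use the mapping-cone exact sequence together with soundness for the homological claims---is essentially the argument Hastings gives in the cited reference, and your tallies for $n'$, $n'_X$, $n'_Z$, $w'_Z$, $w'_X$, and $q'_X$ are correct.

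One small point to watch in your $d'_X$ argument: after reducing to the case $z_A = \delta_A a$, you need $z_B - f^\top a$ to be a \emph{co}boundary in $\calB$, which requires $\calH^0(\calB) = 0$, not $\calH_0(\calB) = 0$; the two are not the same and the latter is what the $-1$-cells are engineered to kill. The cleanest fix is the one the paper itself uses later (in the proof of its effective-distance theorem for coned codes, Section~\ref{sec:distance_preservation_reduced_cone}): rather than invoking cohomology, directly clean $z_B$ by applying, for each $0$-cell $b \in \supp z_B$, the $X$-stabilizer $f(b) \in A_0$, whose coboundary in the cone contains $b$; this shows any $X$-logical is equivalent to one supported on $\calA$ alone, which must then be a nontrivial $X$-logical of $\calQ$ and hence have weight at least $d_X$. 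Your $d'_Z$ argument via soundness and then dual thickening is exactly the intended one.
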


\section{Main Results}\label{sec:distance_preservation}
In the fault-tolerant picture, it is useful to analyze how the stabilizer measurement circuit allows for the propagation of errors from the entangling gates and the ancilla qubits to the data qubits.
Because the qubit overhead introduced by weight reduction is non-trivial by early fault-tolerant standards, we are interested in offsetting the overhead by considering single-ancilla stabilizer measurement circuits.
Even though the weight reduction techniques proposed by Hastings can reduce the weights of a quantum code while largely preserving the code's distances, these techniques often effect certain transformations to the code's structure which may impact the weight-reduced code's robustness against troublesome hook errors during syndrome extraction.
In this section, we analyze how the weight reduction techniques proposed by Hastings affect the effective distances of quantum codes with respect to single-ancilla syndrome extraction circuits.
While the effective distance of general quantum codes tend to suffer when their stabilizers are measured with a single-ancilla measurement circuit, there exist quantum codes with wonderful structures that allow for their distances to be perfectly preserved under single-ancilla syndrome extraction schedules.
We are particularly interested to see if these structures are preserved under the weight reduction techniques proposed by Hastings.
We start off by laying down some important definitions and lemmas regarding effective distances.
Subsequently, we step through the different weight reduction techniques to analyze and construct single-ancilla measurement schedules that are geared towards preserving the effective distance of the quantum code.

We start off by defining what elementary faults and effective distance are for any arbitrary quantum code.
\begin{definition}[Elementary Fault]
	An elementary fault is a single Pauli error during the measurement of a single stabilizer generator.
	The error can occur in the ancilla qubits or the data qubits that correspond to $\calQ$.
\end{definition}

\begin{definition}[{\cite[Effective Distance]{bombin2023logical,manes2023distance}}]
	Let $\calQ$ be a quantum code with distance $d$.
	The effective distance of $\calQ$ with respect to a specific stabilizer measurement circuit or schedule $M$, which we denote as $\overline{d}^M$ is the minimum number of elementary errors required to cause a logical error.
	The distance of $\calQ$ is preserved if the effective distance of the code is $d$ with respect to some stabilizer measurement schedule.
\end{definition}
Note that the \emph{effective distance} is also known as the \emph{fault distance} in the literature.
Having stated the definitions, we now state an important lemma and its corollary for understanding how the weight of the stabilizers may propagate faults in any stabilizer measurement circuit.

\begin{lemma}[Elementary Fault Propagation]\label{lem:elementary_fault_propagation}
	Suppose we have an $X$ stabilizer with weight $w_X$ that belongs to the stabilizer group of the quantum code $\calQ$.
	Then, a single elementary fault in the stabilizer measurement circuit can affect at most $\left\lfloor \frac{w_X}{2}\right\rfloor$ data qubits.
\end{lemma}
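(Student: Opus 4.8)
The plan is to unpack what a single-ancilla $X$-syndrome-extraction circuit actually looks like, track how one Pauli fault spreads onto the data register, and then use the fact that the stabilizer $s$ itself acts trivially on the code space to fold any large induced error down to size at most $\lfloor w_X/2\rfloor$.

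First I would recall that, up to single-qubit basis changes on the ancilla, measuring a weight-$w_X$ $X$-stabilizer $s$ with a single ancilla amounts to preparing the ancilla in $\lvert + \rangle$, applying a sequence of ancilla-controlled (CNOT-type) two-qubit gates coupling the ancilla to the $w_X$ data qubits in $\mathrm{supp}(s)$ in some order, and measuring the ancilla in the $X$ basis. Two structural facts matter: the ancilla only ever interacts with qubits in $\mathrm{supp}(s)$, and the data qubits never couple to one another. I would then classify an elementary fault by its location. A single-qubit Pauli on a data qubit $q\in\mathrm{supp}(s)$: its $Z$-component can at most propagate back through $q$'s gate onto the ancilla, where it only flips the syndrome bit, so the data footprint is the single qubit $q$. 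A single-qubit Pauli on the ancilla at a time step after which $j$ of the $w_X$ couplings have been applied: its $Z$-component stays on the ancilla (syndrome only), while its $X$-component is copied by each of the remaining $w_X-j$ ancilla-controlled gates onto the corresponding data qubit, producing an $X$-type data error $E$ supported on the $w_X-j$ not-yet-coupled qubits of $\mathrm{supp}(s)$. (If the fault model also allows two-qubit Pauli errors on the coupling gates, the same analysis applies: only the $X$-part of the error propagates and it still lands in $\mathrm{supp}(s)$, while the $Z$-part remains confined to a single data qubit.)

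The key step is the reduction modulo the stabilizer group: because $s=\prod_{q\in\mathrm{supp}(s)}X_q$ is itself an element of the stabilizer group, $E$ and $E\cdot s$ are logically equivalent, and $E\cdot s$ is supported on the complementary $j$ qubits of $\mathrm{supp}(s)$. Hence, up to stabilizer equivalence, the induced data error can be taken to be supported on $\min(w_X-j,\,j)$ data qubits, and maximizing over $j$ gives the bound $\min(w_X-j,\,j)\le\lfloor w_X/2\rfloor$; the data-fault case contributes a footprint of $1\le\lfloor w_X/2\rfloor$ (treating the degenerate $w_X=1$ case, where $s$ pins a single qubit, separately). Combining the cases yields the claim.

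I expect the main obstacle to be bookkeeping rather than any deep difficulty: one must argue carefully that fault propagation never escapes $\mathrm{supp}(s)$ — which is exactly where the single-stabilizer, single-ancilla structure is used — make the ``$j$ couplings already applied'' count well-defined regardless of the insertion time step, and be precise that it is the $X$-type component of the data error that the circuit amplifies so that multiplication by $s$ genuinely cancels it, while any $Z$-type component of the fault is harmlessly local.
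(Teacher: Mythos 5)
Your proof is correct and follows essentially the same approach as the paper's: split into the data-fault case (footprint one qubit) and the ancilla-fault case (propagates to a subset of $\mathrm{supp}(s)$), then multiply by the stabilizer $s$ itself to reduce the induced error weight to at most $\lfloor w_X/2 \rfloor$. You are in fact more explicit than the paper, which simply states that an ancilla fault can touch ``at most $w_X$ data qubits'' and that multiplying by the stabilizer bounds the weight by $\lfloor w_X/2\rfloor$; your parametrization by the number $j$ of couplings already applied, giving $\min(w_X-j,\,j)\le\lfloor w_X/2\rfloor$, is exactly the intermediate step that makes the paper's bound tight and transparent, and your careful tracking of which Pauli component propagates and which stays confined is the right level of rigor.
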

\begin{proof}
	We know that an elementary fault in the data qubits of $\calQ$ during the stabilizer measurement circuit only affects a single data qubit.
	On the other hand, an elementary fault in the ancilla qubits can affect at most $w_X$ data qubits that the stabilizer acts on.
	Because multiplying the error on the data qubits with the $X$ stabilizer would give us an equivalent error, the weight of the error on the data qubits is at most $\left\lfloor \frac{w_X}{2}\right\rfloor$.
	Thus, a single elementary error in any stabilizer measurement circuit will affect at most $\left\lfloor \frac{w_X}{2}\right\rfloor$ data qubits.
\end{proof}

\begin{corollary}[Distance-Preserving Stabilizer Generators]\label{cor:distance_preserving_stabilizers}
	Let $\calQ$ be a quantum code with distances $d_X$ and $d_Z$.
	Suppose that all of $\calQ$'s $X$ stabilizer generators have at most weight 3, i.e., acts on at most 3 qubits.
	Then, the effective $X$ distance of $\calQ$ is preserved for any stabilizer measurement circuit, i.e., $d_X$.
\end{corollary}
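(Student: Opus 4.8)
The plan is to obtain the corollary as an almost immediate consequence of Lemma~\ref{lem:elementary_fault_propagation}. The point of the weight-$3$ hypothesis is that $\lfloor w_X/2\rfloor \le 1$ whenever $w_X \le 3$, so each elementary fault contributes at most a \emph{single}-qubit $X$ error to the data register, provided we are allowed to multiply the resulting error by the $X$ stabilizer generator being measured when the fault occurred — which changes neither the measured syndrome nor the logical class. First I would dispatch the easy inequality $\overline{d}_X^M \le d_X$: take a minimum-weight representative $\overline{X}$ of a nontrivial $X$-logical class, so $|\overline{X}| = d_X$ and $\overline{X} \in \ker H_Z$. Placing one single-qubit $X$ fault on each of the $d_X$ qubits in the support of $\overline{X}$ during the measurement round is a configuration of $d_X$ elementary faults; since $\overline{X}$ commutes with every $Z$ stabilizer it triggers no syndrome, yet it enacts a logical error. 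Hence $d_X$ elementary faults already suffice, for any schedule $M$.

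Next I would prove the matching lower bound $\overline{d}_X^M \ge d_X$. Let $F$ be any fault configuration with $|F| = m$ elementary faults that causes an $X$-logical error, and let $E$ be the net $X$-type error it leaves on the data qubits; undetectability forces $E \in \ker H_Z$ and nontriviality forces $E \notin \rs H_X$. By Lemma~\ref{lem:elementary_fault_propagation}, each of the $m$ faults can be reduced, modulo the (weight $\le 3$) $X$ stabilizer generator whose measurement it occurred during, to an $X$ error on at most $\lfloor w_X/2\rfloor \le 1$ data qubit; faults occurring during $Z$-stabilizer measurements and $Z$-type faults contribute nothing to the $X$-sector. Summing these contributions, $E$ agrees, modulo $\rs H_X$, with a sum of at most $m$ single-qubit $X$ errors, so the minimum weight of its coset in $\ker H_Z/\rs H_X$ is at most $m$. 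Since that coset is nontrivial, the definition of $d_X$ gives $m \ge d_X$. Combining with the previous paragraph yields $\overline{d}_X^M = d_X$ for every stabilizer measurement schedule $M$.

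The part requiring care — more a bookkeeping point than a genuine obstacle — is the legitimacy of the ``modulo stabilizers'' reduction: one must check that the individual stabilizer corrections attached to different faults can be applied independently and still assemble into an element of $\rs H_X$, and that no propagation mechanism other than an ancilla fault during an $X$-stabilizer measurement can spread one fault onto more than one data qubit in the $X$-sector. Both follow from the CSS structure together with the fact already used inside the proof of Lemma~\ref{lem:elementary_fault_propagation}, namely that a fault on the ancilla of a $Z$-stabilizer measurement can only propagate $Z$-type errors onto the data, so such faults never enlarge the $X$-error on the data beyond a single qubit.
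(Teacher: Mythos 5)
Your proof is correct and uses the same essential ingredient as the paper---Lemma~\ref{lem:elementary_fault_propagation} with $\lfloor w_X/2\rfloor\le 1$ for $w_X\le 3$---but it is considerably more careful than the paper's three-line argument. The paper only argues the lower bound (``$d_X$ elementary errors are needed''), leaving the upper bound $\overline{d}_X^M\le d_X$ implicit, and does not spell out the ``modulo stabilizers'' bookkeeping; you supply both. In particular, your observation that each fault's reduction is taken modulo the specific weight-$\le 3$ $X$ stabilizer being measured, and that the resulting sum still lands in $\rs H_X$, is exactly the point that the paper glosses over, and your note that faults occurring during $Z$-stabilizer measurements can only contribute a single-qubit $X$ error (from a data-qubit fault) or nothing at all to the $X$-sector is the correct way to dispatch the remaining fault locations. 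The only thing I would tighten: your reduction implicitly treats every ancilla fault as occurring during an $X$-stabilizer measurement; it is worth stating in one clause (as you essentially do in the last paragraph) that, in the single-ancilla CSS circuit, an ancilla fault during a $Z$-stabilizer measurement propagates only $Z$-type errors onto the data, so it never contributes more than zero qubits to the $X$-sector. With that, the argument is airtight and matches the paper's intent.
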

\begin{proof}
	Let $s_X$ be an $X$ stabilizer generator of $\calQ$ that we measure.
	By Lemma~\ref{lem:elementary_fault_propagation}, a single elementary error in any stabilizer measurement circuit will affect at most a single data qubit.
	Therefore, $d_X$ elementary errors are needed to cause an $X$ logical error in $\calQ$.
\end{proof}

\begin{figure}[t]
	\tikzset{
noisy/.style={starburst,fill=yellow,draw=red,line
width=1pt}
}
	\centering
	 \begin{quantikz}
 \lstick{$\ket{+}$} &\gate{H} &\ctrl{1} & \ctrl{2} &  \gate[style={noisy},label
 style=black]{\text{X}}  & \ctrl{4} & \ctrl{3} & \ctrl{5} & \gate{H} & \meter{} \\
 \lstick{1} & & \targ{} & & & & & & & \rstick[5]{$\equiv X_1 X_2$ since $X_1X_2X_3X_4X_5 \in \calS$}\\
 \lstick{2} & & & \targ{} & & & & & & \\
 \lstick{3} & & & & & & \targ{} & & \push{{\color{red} X}} &\\
 \lstick{4} & & & & & \targ{} & & & \push{{\color{red} X}} &\\
 \lstick{5} & & & & & & & \targ{} & \push{{\color{red} X}} &\\
 \end{quantikz}
					\caption{An example of a single-ancilla stabilizer measurement circuit.
				 This circuit measures an $X$ stabilizer generator that acts on five data qubits of some quantum code.
				 In the execution of this stabilizer measurement circuit, an $X$ error happens after the second entangling gate, resulting in bit-flip errors to propagate to the data qubits with indices 3, 4, and 5.
				 The final error after the $Z$ stabilizer measurement can be equivalent to $X_1X_2$ since $X_1X_2X_3X_4X_5$ is in the stabilizer group $\calS$ of the quantum code.
				 In this example, we observe that a single elementary fault can result in the propagation of errors to multiple data qubits. \label{fig:single_ancilla_stabilizer_measurement}
				 }
\end{figure}

In the subsequent sections, we analyze how \emph{copying}, \emph{gauging}, \emph{thickening and choosing heights}, and \emph{coning} affect the effective distances of the quantum codes.
Given access to a single-ancilla stabilizer measurement schedule that gives us some effective distance for a quantum code, we show how we can adapt the single-ancilla stabilizer measurement schedule for the weight-reduced quantum code to achieve different degrees of effective distance preservation. 
We emphasize that we do not assume that the original quantum code has some single-ancilla stabilizer measurement schedule that perfectly preserves the distance of the original quantum code.
We only use the fact that an original quantum code always has some effective distance with respect to some single-ancilla stabilizer measurement schedule.
In the following sections, we first prove that the effective distances of copied quantum codes can be nearly preserved.
Next, we extend our results for the distance preservation of copied quantum codes and present the distance preservation of copied and gauged quantum codes because copying and gauging were introduced as one technique in Hastings's original work.
Subsequently, we show that the effective distance of thickened and height-chosen quantum codes can be preserved where \emph{thickening and choosing heights} were applied independently on a quantum code by proving that the distance balancing technique in Ref.~\cite{evra2022decodable} preserves effective distance.
We also elaborate how our results imply that higher-dimensional hypergraph product (HGP) codes, which belong to the family of homological product codes, have no troublesome hook errors when using any single-ancilla syndrome extraction circuit.
Finally, we analyze how the effective distances of coned quantum codes are preserved and show that the combination of the four weight reduction techniques afford us a way to construct quantum codes with nearly preserved effective distances in the context of single-ancilla stabilizer measurement schedules.

\subsection{Distance Preservation of Copied Quantum Codes}\label{subsec:distance_preserving_copied}
In this section, we show that the effective distances of copied quantum codes can be largely preserved.
In particular, we construct a single-ancilla stabilizer measurement schedule for copied quantum codes that perfectly preserves the effective $X$ distance and achieves $Z$ distance preservation up to a factor of $q_X/2$.
We first provide an algebraic description of the logical operators and stabilizers of the copied quantum code before analyzing how a distance-preserving stabilizer measurement schedule for the original code can be adapted to the copied quantum code to achieve the desired distance preservation.
For this section on \emph{copying}, we let $H_R$ be the parity check matrix of a classical $\left[q_X, 1, q_X\right]$ repetition code.
From the description of the copying technique, we see that the $X$ and $Z$ logical operators of $\calQ'$ are the following operators:
	\[\calL_X = \left\{\sum_{i = 1}^n L_i \hat{i}_n \otimes \vec{x}_i  \;\middle|\; \vec{L} = \left(L_1, L_2, \ldots, L_n\right) \in \ker\left(H_Z\right)\setminus\rs\left(H_X\right), L_i \in \F_2, \vec{x}_i \in \F_2^{q_X} \setminus \rs\left(H_R\right)\right\},\]
	\[\calL_Z = \left\{\vec{L} \otimes \vec{1}_{q_X} \;\middle|\; \vec{L} \in \ker\left(H_X\right)\setminus\rs\left(H_Z\right)\right\}\]
	where $n$ is the number of qubits in $\calQ$.
	We can interpret any element of $\calL_X$ with some reference $X$ logical operator of $\calQ$ where we choose to apply the Pauli $X$ operator on some odd number of qubits in sets of $q_X$ copied qubits (decided by $\vec{x}_i$) that correspond to the original qubits with indices $i$ that the reference $X$ logical operator acts on.
	An element of $\calL_Z$ can be interpreted using some reference $Z$ logical operator of $\calQ$ where we choose to apply the Pauli $Z$ operator on all the $q_X$ copied qubits that correspond to the original qubit that the reference $Z$ logical operator acts on.
	It is also useful for us to define the following subset of $\calL_X$:
	\[\calL_{X,1} = \left\{\vec{L} \otimes \hat{1}_{q_X} \;\middle|\; \vec{L} \in \ker\left(H_Z\right) \setminus \rs\left(H_X\right)\right\}.\]
	In this case, we can interpret any element of $\calL_{X,1}$ using some reference $X$ logical operator of $\calQ$ where we choose to apply the Pauli $X$ operator on the first copied qubits in each set of $q_X$ copied qubits that correspond to the original qubits that the reference $X$ logical operator acts on.
	The $X$ stabilizer generators of $\calQ'$ are the following operators:
	\[\calG_X = \row\left(\tilde{H}_X\right) \cup \row\left(\mathbbm{1}_{n} \otimes H_R\right)\]
	where $\tilde{H}_X$ is the parity check matrix where each row corresponds to the modified $X$ stabilizer generators of $\calQ'$ (excluding the new $X$ stabilizer generators that act like the checks of a classical repetition code).
	Also, define a different set of $X$ stabilizer generators:
	\[\calG_X' = \row\left(H_X \otimes \hat{1}_{q_X}^{\top}\right) \cup \row\left(\mathbbm{1}_n \otimes H_R\right)\]
	It is not hard to see that the stabilizer groups generated by $\calS_X$ and $\calS_X'$ are equivalent from the following lemma.
	\begin{lemma}[Equivalence of $X$ stabilizer groups for $\calQ'$]\label{lem:equivalence_X_stabilizer_groups}
	Let us define the stabilizer groups generated by $\calG_X$ and $\calG_X'$ to be the following:
	\begin{align*}
		\calS_X &\coloneqq \left\langle s_X \;:\; s_X \in \calG_X\right\rangle, \\
		\calS_X' &\coloneqq \left\langle s'_X \;:\; s_X' \in \calG_X'\right\rangle. 
	\end{align*}
		Then,
		\[\calS_X = \calS_X'.\]
\end{lemma}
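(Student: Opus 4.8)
The plan is to prove the two inclusions $\calS_X \subseteq \calS_X'$ and $\calS_X' \subseteq \calS_X$, exploiting the fact that both generating sets $\calG_X$ and $\calG_X'$ contain the common rows $\row\left(\mathbbm{1}_n \otimes H_R\right)$. Hence both $\calS_X$ and $\calS_X'$ contain the subgroup generated by $\row\left(\mathbbm{1}_n \otimes H_R\right)$, i.e. $\rs\left(\mathbbm{1}_n \otimes H_R\right)$ (viewed additively as $\F_2$-vectors), and it suffices to show that each modified stabilizer in $\row\left(\tilde{H}_X\right)$ differs from the corresponding row of $H_X \otimes \hat{1}_{q_X}^\top$ by an element of $\rs\left(\mathbbm{1}_n \otimes H_R\right)$, and symmetrically.

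First I would record the elementary fact about the repetition code's parity check matrix: since $H_R$ is the parity check matrix of the $\left[q_X, 1, q_X\right]$ repetition code, its rows are the vectors supported on two consecutive coordinates, $\hat{j}_{q_X} + \hat{(j+1)}_{q_X}$ for $j \in [q_X - 1]$. Summing rows $1$ through $j-1$ over $\F_2$ telescopes to $\hat{1}_{q_X} + \hat{j}_{q_X}$, so $\hat{1}_{q_X} + \hat{j}_{q_X} \in \rs\left(H_R\right)$ for every $j \in [q_X]$ (equivalently, $\rs\left(H_R\right)$ is exactly the space of even-weight vectors in $\F_2^{q_X}$). Next, fix an original $X$ stabilizer generator, i.e. a row $s_X$ of $H_X$, and let $\tilde{s}_X$ be the corresponding row of $\tilde{H}_X$. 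By construction $\tilde{s}_X = \sum_{i \in \mathrm{supp}(s_X)} \hat{i}_n \otimes \hat{j(i)}_{q_X}$ for some choice function $j(\cdot)$, while the corresponding row of $H_X \otimes \hat{1}_{q_X}^\top$ is $s_X \otimes \hat{1}_{q_X} = \sum_{i \in \mathrm{supp}(s_X)} \hat{i}_n \otimes \hat{1}_{q_X}$. Their sum over $\F_2$ equals $\sum_{i \in \mathrm{supp}(s_X)} \hat{i}_n \otimes \left(\hat{j(i)}_{q_X} + \hat{1}_{q_X}\right)$, and by the previous step each summand lies in $\hat{i}_n \otimes \rs\left(H_R\right) \subseteq \rs\left(\mathbbm{1}_n \otimes H_R\right)$; hence $\tilde{s}_X + \left(s_X \otimes \hat{1}_{q_X}\right) \in \rs\left(\mathbbm{1}_n \otimes H_R\right)$.

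This single identity finishes both directions: it places $\tilde{s}_X$ in the group generated by $\calG_X'$, giving $\calG_X \subseteq \calS_X'$ and hence $\calS_X \subseteq \calS_X'$; and it places $s_X \otimes \hat{1}_{q_X}$ in the group generated by $\calG_X$, giving $\calG_X' \subseteq \calS_X$ and hence $\calS_X' \subseteq \calS_X$. Combining the two inclusions yields $\calS_X = \calS_X'$. I expect the only mildly delicate point to be the bookkeeping that the rows of $\tilde{H}_X$ and of $H_X \otimes \hat{1}_{q_X}^\top$ are in natural bijection with $\row\left(H_X\right)$ and agree outside the choice of representative copy within each block of $q_X$ copied qubits; once that is pinned down, the computation is a one-line telescoping argument over $\F_2$. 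Note in particular that the collision-free condition imposed earlier on the choice function $j(\cdot)$ plays no role in this lemma — it is needed only to bound $q_X'$, not for the equality of the stabilizer groups.
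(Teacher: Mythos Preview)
Your proposal is correct and takes essentially the same approach as the paper: both reduce to showing that each row of $\tilde{H}_X$ differs from the corresponding row of $H_X \otimes \hat{1}_{q_X}^{\top}$ by an element of $\rs\left(\mathbbm{1}_n \otimes H_R\right)$, using the telescoping fact that $\hat{1}_{q_X} + \hat{j}_{q_X} \in \rs\left(H_R\right)$. Your write-up is marginally tidier in observing that this single identity immediately yields both inclusions (the paper handles the reverse containment by appeal to a ``similar argument''), and your remark that the collision-free choice of $j(\cdot)$ is irrelevant here is correct and worth noting.
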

\begin{proof}
	We begin by first showing that $\calS_X \subseteq \calS_X'$.
	Notice that 
	\[\row\left(\mathbbm{1}_n \otimes H_R\right) \subseteq \calG_X \subseteq \calS_X \;\text{ and }\; \row\left(\mathbbm{1}_n \otimes H_R\right) \subseteq \calG_X' \subseteq \calS'_X.\]
	Thus, we only have to show that $\row\left(\tilde{H}_X\right) \subseteq \calS_X'$.
	Without loss of generality, let $\vec{s}_X$ and $\vec{s}_X'$ be some arbitrary row $i$ in the parity check matrix $\tilde{H}_X$ and the same corresponding row $i$ in $H_X \otimes \hat{1}_{q_X}^{\top}$ respectively.
	By construction of the parity check matrices, we know the following:
	\begin{enumerate}
		\item For each partition of $q_X$ consecutive columns in the row $i$ of $\tilde{H}_X$ and $H_X \otimes \hat{1}_{q_X}^{\top}$, there is at most a single 1 in each group of $q_X$ consecutive columns.
		In addition, row $i$ of $\tilde{H}_X$ only have a single 1 in a partition of $q_X$ consecutive columns if and only if row $i$ of $H_X \otimes \hat{1}_{q_X}^{\top}$ has a single 1 in the same partition of $q_X$ consecutive columns.
		In particular, for each partition of $q_X$ consecutive columns in the row $i$ of $H_X \otimes \hat{1}_{q_X}^{\top}$, the single 1 always resides in the first column in each group of $q_X$ consecutive columns.
		\item Because $H_R$ is the check matrix of a classical $\left[q_X, 1, q_X\right]$ code, we know that $\row\left(H_R\right)$ can generate any vector in $\F_2^{q_X}$ that has 1 in the first and $j$\textsuperscript{th} index and 0s elsewhere for any $j \in \{2, \ldots, q_X\}$.
		For each partition of $q_X$ consecutive columns in the row $i$ of $\tilde{H}_X$, we are able to move the single 1 in the partition to the first column by choosing the appropriate linear combination of vectors in $\row\left(\mathbbm{1}_n \otimes H_R\right)$. 
		\end{enumerate} 
	Let the linear combination of the necessary $X$ stabilizer generators in $\row\left(\mathbbm{1}_n \otimes H_R\right)$ for each partition of $q_X$ consecutive columns be denoted as $p$.
	We thus obtain that $\vec{s}_X  =  \vec{p} + \vec{s}_X' \in \calS_X'$.
	Thus, we have shown that $\row\left(\tilde{H}_X\right) \subseteq \calS_X'$.
	We can show the reverse containment by a similar argument.
\end{proof}

	The $Z$ stabilizers of $\calQ'$ are the following operators:
	\[\calG_Z = \row\left(H_Z \otimes \vec{1}^{\top}\right).\]
	
	We now state a useful lemma that will help us in our analysis of the effective distances of copied quantum codes.
	The following lemma shows that the number of elementary faults in the measurement of $\calG_X$ required to generate a logical operator is at least the number of elementary faults in the measurement of $\calG'_X$ required to generate a related logical operator.

\begin{lemma}[Elementary Faults for $X$ Logical Operators]\label{lem:elementary_faults_X_logical_operators}
	Suppose we are given some $X$ logical operator $\vec{L} \in \calL_X$.
	Define $\vec{L}' \in \calL_{X,1}$ to be the $X$ logical operator that is the product of $\vec{L}$ and $X$ stabilizer generators corresponding to the rows of $\mathbbm{1}_n \otimes H_R$.
	Then, the number of elementary faults on single qubits and stabilizer measurements on $\calG_X$ required to generate $\vec{L}$ is at least the number of elementary faults on single qubits and stabilizer measurements on $\calG_X'$ required to generate $\vec{L}'$.
\end{lemma}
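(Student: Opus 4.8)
The plan is to show that \emph{any} single-ancilla schedule $M$ measuring the generating set $\calG_X$, together with a fault configuration of size $t$ that generates $\vec{L}$, can be converted into a single-ancilla schedule $M'$ measuring $\calG_X'$, together with a fault configuration of size at most $t$ that generates $\vec{L}'$; taking minima over schedules then yields the stated inequality. Two preliminary reductions make this tractable. First, since $\vec{L}$ and $\vec{L}'$ are $X$-type, only $X$-type elementary faults can alter the net data $X$-error, so I would restrict attention to $X$ faults on data qubits and to faults during a stabilizer measurement that propagate $X$ onto the data; $Z$-type faults merely flip outcomes and can be discarded. Second, I read ``a fault configuration generates $\vec{L}$'' as ``the resulting net data $X$-error $E$ lies in the coset $\vec{L}+\calS_X$'', which is the meaningful statement since the elements of $\calL_X$ commute with every $Z$ stabilizer of $\calQ'$ and hence are genuine undetected logicals; the analogous convention with $\calS_X'$ governs $\calG_X'$.

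Next I would construct $M'$ from $M$. Recall $\calG_X=\row(\tilde{H}_X)\cup\row(\mathbbm{1}_n\otimes H_R)$ and $\calG_X'=\row(H_X\otimes\hat{1}_{q_X}^{\top})\cup\row(\mathbbm{1}_n\otimes H_R)$, and that row $i$ of $\tilde{H}_X$ and row $i$ of $H_X\otimes\hat{1}_{q_X}^{\top}$ have the same Hamming weight $w_i$ and the same collection of length-$q_X$ blocks in their support, differing only in which copy inside each block carries the $1$ (the latter always uses the first copy, so its support qubits are distinct and form a legitimate CNOT order). Let $M'$ measure every generator of $\row(\mathbbm{1}_n\otimes H_R)$ exactly as $M$ does, and measure row $i$ of $H_X\otimes\hat{1}_{q_X}^{\top}$ using the CNOT order obtained from the one $M$ uses for row $i$ of $\tilde{H}_X$ by sending each target $q_{a,c}$ to the first-copy qubit $q_{a,1}$ of its block. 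This induces a size-preserving map from fault locations of $M$ to those of $M'$ (the identity on the repetition-type measurements and on standalone data faults, and step-for-step on the redistributed measurements), so a configuration $F$ of size $t$ maps to a configuration $F'$ of size $t$.

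The heart of the argument is the comparison of net errors. For each fault type I would track its contribution to the net error $E$ of $(M,F)$ versus its contribution to the net error $E'$ of $(M',F')$: a standalone data fault contributes the same single qubit to each; an $X$ ancilla fault at step $m$ of row $i$ contributes the propagation tail $\{q_{a_{m+1},c_{m+1}},\dots,q_{a_{w_i},c_{w_i}}\}$ to $E$ and $\{q_{a_{m+1},1},\dots,q_{a_{w_i},1}\}$ to $E'$; an $X$ data-target fault at step $m$ contributes $\{q_{a_m,c_m}\}$ versus $\{q_{a_m,1}\}$; and faults inside the repetition-type measurements contribute identically. In every case the two contributions differ by a sum of vectors of the form $q_{a,c}+q_{a,1}$, each of which lies in $\rs(\mathbbm{1}_n\otimes H_R)$ because within a single block these are precisely the checks of the $[q_X,1,q_X]$ repetition code. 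Hence $E+E'\in\rs(\mathbbm{1}_n\otimes H_R)\subseteq\calS_X$; by Lemma~\ref{lem:equivalence_X_stabilizer_groups} this coincides with $\calS_X'$, and by construction $\vec{L}+\vec{L}'\in\rs(\mathbbm{1}_n\otimes H_R)$ as well, so $E\in\vec{L}+\calS_X$ forces $E'\in\vec{L}'+\calS_X'$. Thus $F'$ generates $\vec{L}'$ under $M'$ with $t$ faults, and minimizing over $M$ and $F$ completes the proof.

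The main obstacle is the fault bookkeeping in this last step: one must verify that the prescribed map on fault locations really is size-preserving (in particular that ancilla-propagation faults and data-target faults are matched consistently and that copying the CNOT order block-for-block is legitimate), and that the discrepancy between the $M$-contribution and the $M'$-contribution of \emph{every individual fault} always lands in $\rs(\mathbbm{1}_n\otimes H_R)$. This second point is exactly the structural fact about $\tilde{H}_X$ versus $H_X\otimes\hat{1}_{q_X}^{\top}$ that already underlies Lemma~\ref{lem:equivalence_X_stabilizer_groups}, now applied at the level of individual fault contributions rather than whole stabilizers; beyond this, everything is linear bookkeeping over $\F_2$ and no quantitative estimate is needed.
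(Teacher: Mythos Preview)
Your proposal is correct and follows essentially the same approach as the paper: replace each elementary fault contributing to $\vec{L}$ by its ``first-copy'' counterpart, and use that every individual discrepancy lies in $\rs(\mathbbm{1}_n\otimes H_R)$, exactly the structural fact behind Lemma~\ref{lem:equivalence_X_stabilizer_groups}. Your write-up is in fact more careful than the paper's in two respects: you make the induced schedule $M'$ and the size-preserving fault map explicit, and you correctly treat hook errors as propagation tails rather than as full rows of $\tilde{H}_X$ (the paper writes $\vec{s}_j\in\row(\tilde{H}_X)$, which is literally too strong, though its argument goes through unchanged for tails); the only cosmetic difference is that the paper first absorbs faults on the weight-$\le 3$ repetition-type generators into single-qubit errors via Lemma~\ref{lem:elementary_fault_propagation}, whereas you map them by the identity.
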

\begin{proof}
	Let $T$ be the minimum number of elementary faults required to generate $\vec{L}$.
	We note that the elementary faults on the $X$ stabilizer generators that correspond to the rows of $\mathbbm{1}_n \otimes H_R$ are equivalent to single qubit errors by Lemma~\ref{lem:elementary_fault_propagation} as the weight of these stabilizers is at most $3$.
	Thus, we can neglect the possible errors that emerge from elementary faults on these stabilizer generators in our analysis.
	Now, suppose $\vec{L} \in \F_2^{n q_X}$ is generated by $F$ single qubit errors and $T - F$ elementary faults on the some subset of $\calG_X \setminus \row\left(\mathbbm{1}_n \otimes H_R\right)$.
	In other words, 
	\[\vec{L} = \vec{L}_{\text{single}} + \vec{L}_{\text{hook}} = \left(\sum_{i = 1}^{F} \vec{e}_i \otimes \vec{v}_i\right) + \left(\sum_{j = 1}^{T-F}\vec{s}_j\right)\]
	where $\vec{e}_i \in \F_2^n, \vec{v}_i \in \F_2^{q_X}$ such that $\left|\supp \vec{e}_i\right| = \left|\supp \vec{v}_i\right| = 1$ and $\vec{s}_j \in \row\left(\tilde{H}_X\right)$.
	
	We now proceed to show that some $\vec{L}' \in \calL_{X,1}$ can be generated by $T$ similar elementary faults before proving that $\vec{L}'$ is the product of $\vec{L}$ and the $X$ stabilizer generators corresponding to the rows of $\mathbbm{1}_n \otimes H_R$.
	Keeping $\left\{\vec{e}_i\right\}_i$ and the $T-F$ rows of $\tilde{H}_X$ the same as above, we can construct $\vec{L}' \in \F_2^{n q_X}$ as follows:
	\[\vec{L}' = \vec{L}'_{\text{single}} + \vec{L}'_{\text{hook}} = \left(\sum_{i = 1}^{F} \vec{e}_i \otimes \hat{1}_{q_X}^{\top}\right) + \left(\sum_{j = 1}^{T-F}\vec{s}'_j\right)\]
	where each $\vec{s}_j'$ corresponds to some row in $H_X \otimes \hat{1}_{q_X}^{\top}$ with the same row index as $\vec{s}_j$ in $\tilde{H}_X$.
	It is easy to see that $\vec{L}' \in \calL_{X,1}$ by construction. 
	For each $\vec{v}_i \in \F_2^{q_X}$, there exists some element in $\rs\left(\mathbbm{1}_n \otimes H_R\right)$ that sends it to $\hat{1}_{q_X}$.
	Likewise, for each $\vec{s}_j \in \row\left(\tilde{H}_X\right)$, there exists some element in $\rs\left(\mathbbm{1}_n \otimes H_R\right)$ that permutes it to $s_j'$.
	Thus, we have shown that $\vec{L}'$ is the product of $\vec{L}$ and the $X$ stabilizer generators corresponding to the rows of $\mathbbm{1}_n \otimes H_R$.
	Since we have proven that we need at most $T$ elementary faults to generate $\vec{L}'$, we have shown that the number of elementary faults required to generate $\vec{L}$ is at least the number of elementary faults required to generate $\vec{L}'$.
\end{proof}

Now, we are ready to state the main theorem of this section that shows that the effective distances of copied quantum codes can be largely preserved.
We take advantage of the fact that $\calG_X'$ is closely related to the $X$ stabilizer generators of the original quantum code $\calQ$ to use the lemmas stated above to prove that our stabilizer measurement schedule for $\calQ$ can be adapted to the copied quantum code $\calQ'$ to achieve the desired distance preservation.

\begin{theorem}[Partial Effective Distance Preservation for Copied Distance-Preserving Quantum Codes]\label{thm:distance_preserving_copied}
	Let $\calQ$ be a quantum code with distances $d_X$ and $d_Z$ such that the effective distances of the code using some single-ancilla stabilizer measurement circuit $M$ are also $\overline{d}^M_X$ and $\overline{d}^M_Z$.
	Suppose we let $\calQ'$ be the code after applying copying on $\calQ$ with new distances $d_Z' = q_X d_Z$ and $d_X' = d_X$.
	Suppose $\calQ'$ has $n_X'$ $X$ stabilizer generators with weight at most $w_X$ and $n_Z'$ $Z$ stabilizer generators with weight at most $w_Z$.
	Then, there exists a single-ancilla stabilizer measurement schedule $M'$ for $\calQ'$ that uses at most $n_X' + n_Z'$ ancilla qubits in total such that the effective distances of $\calQ'$ is $\overline{d'}^{M'}_X = \overline{d}^M_X$ and $\overline{d'}^{M'}_Z \geq \overline{d}^M_Z$.
\end{theorem}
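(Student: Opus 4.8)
The plan is to build $M'$ by transporting $M$ through the copying construction, and then to bound $\overline{d'}^{M'}_X$ and $\overline{d'}^{M'}_Z$ separately, using the algebraic descriptions of the logicals and stabilizers of $\calQ'$ recorded above. Because the rows of $\tilde H_X$ may still have weight $w_X$, Corollary~\ref{cor:distance_preserving_stabilizers} does not apply to $\calQ'$ directly, so we need the finer comparison encapsulated in Lemmas~\ref{lem:equivalence_X_stabilizer_groups} and~\ref{lem:elementary_faults_X_logical_operators}.

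\emph{The schedule $M'$.} Assign one ancilla to each of the $n'_X+n'_Z$ stabilizer generators of $\calQ'$, for a total of $n'_X+n'_Z$ ancillas. For a modified $X$-generator $\tilde s_X\in\row(\tilde H_X)$, use the bit-by-bit circuit obtained from $M$'s circuit for the corresponding row of $H_X$ by relabelling each original qubit $i$ with the unique copy $q_{i,j}$ onto which it was mapped; for a repetition-code check in $\row(\mathbbm 1_n\otimes H_R)$ (weight $2$), use any order. For a $Z$-generator $s_Z\otimes\vec 1^\top\in\calG_Z$, use the circuit whose CNOTs visit the qubits in the support of $s_Z$ in $M$'s order for $s_Z$ and visit the $q_X$ copies of each such qubit consecutively (``block order''). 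By Lemma~\ref{lem:elementary_fault_propagation}, an elementary fault on a weight-$\le 3$ generator is a single-qubit error up to that generator, so faults on the repetition checks never deposit more than a single-qubit error.

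\emph{$X$ distance.} For ``$\le$'', lift a minimal $M$-configuration producing a nontrivial $X$-logical $\vec L$ of $\calQ$: send a data $X$-fault on $q_i$ to the data $X$-fault on its copy, and an $X$-hook error of $M$ while measuring $s_X$ to the image hook error of $M'$ while measuring $\tilde s_X$ (the circuits are relabelled copies). The residual error is $\vec L\otimes\hat 1_{q_X}\in\calL_{X,1}\subseteq\calL_X$, and by Lemma~\ref{lem:equivalence_X_stabilizer_groups} together with $\vec L\notin\rs H_X$ it is a nontrivial logical, so $\overline{d'}^{M'}_X\le\overline{d}^M_X$. For ``$\ge$'', note that $M'$'s $X$-circuits realize a measurement of the generating set $\calG_X$, and every $X$-hook error produced while measuring a row of $\tilde H_X$ is a suffix of that row's support and hence carries at most one $1$ in each block of $q_X$ copies. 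Given an $M'$-configuration with $T$ faults producing a nontrivial $\vec E\in\calL_X$, Lemma~\ref{lem:elementary_faults_X_logical_operators} produces, with at most $T$ faults, a configuration over $\calG'_X$ generating some $\vec E'\in\calL_{X,1}$; restricting that configuration to the first-copy sublattice (discarding faults that deposit on no first-copy qubit, and reading hook errors of the $H_X\otimes\hat 1_{q_X}^\top$ rows as hook errors of $M$) yields a genuine $M$-configuration on $\calQ$ producing the nontrivial $X$-logical obtained from $\vec E'$ by this restriction, so $T\ge\overline{d}^M_X$. Combining the two bounds gives $\overline{d'}^{M'}_X=\overline{d}^M_X$.

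\emph{$Z$ distance.} First, the repetition-code checks force every element of $\ker H'_X$ to be constant on each block of $q_X$ copies, whence $\ker H'_X=\{\vec L\otimes\vec 1_{q_X}:\vec L\in\ker H_X\}$ and $\rs H'_Z=\{\vec M\otimes\vec 1_{q_X}:\vec M\in\rs H_Z\}$, so a nontrivial $Z$-logical of $\calQ'$ is precisely $\vec L\otimes\vec 1_{q_X}$ for a nontrivial $Z$-logical $\vec L$ of $\calQ$. Now suppose an $M'$-configuration with $T$ faults produces such an $\vec E=\vec L\otimes\vec 1_{q_X}$. Only data $Z$-faults and $Z$-faults on the ancillas of $Z$-stabilizer measurements can contribute to the residual $Z$ error (a $Z$ fault on an $X$-stabilizer ancilla is a control-line error and does not spread), and a $Z$-hook error while measuring $s_Z\otimes\vec 1^\top$ deposits $Z$ on a suffix of its block-ordered CNOT sequence. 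Apply the $\F_2$-linear projection $\rho$ onto the first copy of each block: a data $Z$-fault maps to at most one data $Z$-fault of $\calQ$, and the $\rho$-image of a block-ordered suffix is a suffix of $M$'s CNOT order for $s_Z$ — hence trivial, the full stabilizer $s_Z$ (which we discard), or a legitimate $Z$-hook error of $M$ for $s_Z$. This yields an $M$-configuration on $\calQ$ with at most $T$ faults producing $\rho(\vec E)=\vec L$ up to a $Z$ stabilizer, i.e.\ a nontrivial $Z$-logical of $\calQ$; thus $T\ge\overline{d}^M_Z$, so $\overline{d'}^{M'}_Z\ge\overline{d}^M_Z$.

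\emph{Main obstacle.} The delicate step is the bookkeeping for hook-error fragments that straddle a block boundary in the $Z$-argument, and, dually, checking that the normalization of Lemma~\ref{lem:elementary_faults_X_logical_operators} applies to hook fragments and not only to full rows of $\tilde H_X$ — in each case one must verify that the projected (respectively normalized) object is genuinely an elementary fault of the unmodified circuit $M$ on $\calQ$ rather than a weaker error pattern, so that the comparison of fault counts is legitimate.
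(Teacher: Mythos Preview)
Your proof is correct, and the treatment of the $X$ side is essentially the paper's argument via Lemma~\ref{lem:elementary_faults_X_logical_operators}; your explicit remark that hook \emph{fragments} of rows of $\tilde H_X$ (not just full rows) still carry at most one $1$ per block and hence normalize to bona fide $M$-hook suffixes is a clarification the paper leaves implicit.

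On the $Z$ side you take a genuinely different route. The paper orders the CNOTs of $s_Z\otimes\vec 1^{\top}$ in a ``split halves'' pattern (first halves of all $q_X$-blocks, then second halves) and argues that, because any nontrivial $Z$-logical must have all $q_X$ copies of each of $\ge d_Z$ original qubits in its support, at least $\overline d^M_Z$ elementary faults are required. You instead use the simpler consecutive block order and the $\F_2$-linear projection $\rho$ onto first copies, observing that a suffix of the block-ordered CNOT sequence projects to a suffix of $M$'s CNOT order for $s_Z$. This gives a direct, fault-for-fault reduction to an $M$-configuration on $\calQ$ producing the nontrivial $\vec L$, without any block-filling combinatorics. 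Both orderings yield the same bound $\overline{d'}^{M'}_Z\ge\overline d^M_Z$, but your projection argument is more transparent and sidesteps the somewhat informal counting the paper uses.
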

\begin{proof}
	Now, consider the following stabilizer measurement schedule $M'$ for $\calQ'$:
	\begin{enumerate}
		\item Iterate through the stabilizer generators that scheduled in a particular order for measurement in $M$.
		\begin{enumerate}
			\item If the stabilizer generator is an $X$ stabilizer generator, measure the corresponding $X$ stabilizer generator of $\calQ'$ in the same order but acting on assigned copied qubits.
			\item If the stabilizer generator is a $Z$ stabilizer generator, measure the corresponding $Z$ stabilizer generator of $\calQ'$ in the same order but partition the $q_X$ copied qubits as per the following:
			\begin{enumerate}
				\item Suppose a $Z$ stabilizer generator $s_Z$ acts on these copied qubits 
			\[q_{i_{1}, 1}, \ldots, q_{i_{1}, q_X}, q_{i_{2}, 1}, \ldots, q_{i_{2}, q_X}, \ldots, q_{i_j, 1}, \ldots, q_{i_j, q_X}\] 
			where $i_1, i_2, \ldots, i_j \in [n]$.
			\item Perform the entangling gates in the following order: 
			\begin{align}&q_{i_1, 1}, \ldots, q_{i_1, \lfloor q_X/2\rfloor}, \ldots, q_{i_j, 1}, \ldots, q_{i_j, \lfloor q_X/2\rfloor}, \\
				&q_{i_1, \lfloor q_X/2\rfloor + 1}, \ldots, q_{i_1, q_X}, \ldots, q_{i_j, \lfloor q_X/2\rfloor + 1}, \ldots, q_{i_j, q_X}.\end{align}
			\end{enumerate}
		\end{enumerate}
		\item Measure the new $X$ stabilizer generators of $\calQ'$ in any order to complete the stabilizer measurement schedule $M'$. 
	\end{enumerate}
	
	Let us first consider the effective $X$ distance of $\calQ'$.
	From our assumption that $\calQ$ has an effective distance of $\overline{d}_X^M$ under the stabilizer measurement schedule $M$, we require at least $\overline{d}^M_X$ elementary faults on some combination of data qubits and ancilla qubits of elements in the rows of $H_X$ to generate $X$ logical operators of $\calQ$.
	It is not hard to see that for any $\vec{L} \in \ker\left(H_Z\right) \setminus \rs\left(H_X\right)$ that requires $\tilde{d}_X \geq \overline{d}^M_X$ elementary faults on some combination of data qubits and ancilla qubits, we need at least $\tilde{d}_X$ elementary faults on the same combination of data qubits that lie in the support of $\vec{1}_{n} \otimes \hat{1}_{q_X}$ and ancilla qubits of the rows of $H_X \otimes \hat{1}_{q_X}^{\top}$ that will generate $\vec{L} \otimes \hat{1}_{q_X}$.
	Using Lemma~\ref{lem:elementary_faults_X_logical_operators}, we can conclude that any $X$ logical operator in $\calL_X$ for $\calQ'$ also requires at least $\overline{d}^M_X$ elementary faults on some combination of data qubits and ancilla qubits for the stabilizer measurement schedule $M'$, i.e., $\overline{d'}^{M'}_X = \overline{d}^M_X$. 
	
	We now proceed to analyze the effective $Z$ distance of $\calQ'$ with the stabilizer measurement schedule $M'$.
	We require at least $\overline{d}^M_Z$ elementary faults on some combination of data qubits and ancilla qubits of elements in the rows of $H_Z$ ordered by the schedule $M$ to generate $Z$ logical operators of $\calQ$.
	It is not hard to see that $\rs\left(H_Z\right) \cong \rs\left(H_Z \otimes \vec{1}^{\top}\right)$ and $\vec{L} \otimes \vec{1}^{\top} \in \calL_Z$ for any $\vec{L} \in \ker\left(H_X\right) \setminus \rs\left(H_Z\right)$.
	Using Lemma~\ref{lem:elementary_fault_propagation} and paying close attention to the specific order of the entangling gates in the stabilizer measurement schedule $M'$, we see that an elementary fault on a $Z$ stabilizer generator measurement generates at most $q_X$ single qubit errors on each set of $q_X$ qubits that it acts on.
	Since every $Z$ logical operator requires errors on all $q_X$ copied qubits that correspond to some original qubit, we need at least 1 elementary fault on $Z$ stabilizer generators that act on the same set of $q_X$ copied qubits to ensure that all $q_X$ copied qubits have $Z$ errors.
	Thus, we see that we need at least $\overline{d}^M_Z$ elementary faults to generate any $Z$ logical operator.
	Therefore, we have shown that the effective $Z$ distance of $\calQ'$ is at least $\overline{d}^M_Z$ for our stabilizer measurement schedule $M$.
\end{proof}

It is important to reiterate that the $Z$ distance of the copied quantum code is $q_X \cdot d_Z$.
Theorem~\ref{thm:distance_preserving_copied} shows that the effective $Z$ distance of the copied quantum code is not completely preserved in the general case.

\subsection{Distance Preservation of Copied and Gauged Quantum Codes}\label{subsec:distance_preserving_copied_gauged}
In this section, we construct a single-ancilla stabilizer measurement schedule for copied and gauged quantum codes that largely preserves the effective distances.
We denote the copied quantum code as $\calQ$ and the copied and gauged quantum code as $\calQ'$.
After gauging the copied quantum code, we gain an additional $n_X\left(w_X - 1\right)$ copied $X$ stabilizer generators and $n_X\left(w_X - 1\right)$ new qubits.
In particular, each $X$ stabilizer generator in $\calQ$ that corresponds to $H_X \otimes \hat{1}_{q_X}^{\top}$ will be transformed into $w_X$ different copied $X$ stabilizer generators in $\calQ'$.

Suppose our original $X$ stabilizer generators in $\calQ$ are indexed as such: $s_{X}^{(1)}, \ldots, s_{X}^{(n_X)}$.
The copied $X$ stabilizer generators in $\calQ'$ are now indexed as such: $s_X^{(1,1)}, \ldots, s_X^{\left(1,w_X\right)}, \ldots, s_X^{\left(n_X,1\right)}, \ldots, s_X^{\left(n_X,w_X\right)}$. 
In addition, we index our new qubits in $\calQ'$ as such: $q_{s_X^{(1)}}^{(1)}, \ldots, q_{s_X^{(1)}}^{\left(w_X - 1\right)}, \ldots, q_{s_X^{\left(n_X\right)}}^{(1)}, \ldots, q_{s_X^{\left(n_X\right)}}^{\left(w_X - 1\right)}$.
We can ``configure'' the copied $X$ stabilizer generators as such: suppose $s_X^{(i)}$ of the copied quantum code $\calQ$ acts on the qubits $q_{i_1,j_1}, \ldots, q_{i_{w_X - 1}, j_{w_X - 1}}$ for arbitrary indices $i_a \in [n]$ and $j_b \in \left[q_X\right]$.
Then, we can configure each of the copied $X$ stabilizer generators $s_X^{\left(i,1\right)}, \ldots, s_X^{\left(i,w_X\right)}$ of the copied and gauged quantum code $\calQ'$ to act on each of the qubits $q_{i_1,j_1}, \ldots, q_{i_{w_X - 1}, j_{w_X - 1}}$ respectively.
In addition, these copied $X$ stabilizer generators will interact with the new qubits in a way that is reminiscent of a classical repetition code:
$s^{\left(i,1\right)}_X$ and $s^{\left(i, w_X\right)}_X$ will act on the new qubits $q_{s^{(i)}_X}^{(1)}$ and $q_{s^{(i)}_X}^{\left(w_X - 1\right)}$ respectively.
The other copied $X$ stabilizer generators $s^{\left(i, j\right)}_X$ will act on the new qubits $q_{s^{(i)}_X}^{(j-1)}$ and $q_{s^{(i)}_X}^{(j)}$ for $j \in \{2, \ldots, w_X - 1\}$.
After gauging, the $Z$ stabilizer generators of $\calQ'$ will be modified slightly to ensure commutativity with the copied $X$ stabilizer generators.
However, their structure will largely be preserved by gauging.

In the following theorem, we show that the effective distances of copied and gauged quantum codes can be largely preserved.

\begin{theorem}[Partial Effective Distance Preservation for Copied and Gauged Quantum Codes]\label{thm:distance_preserving_copied_gauged}
	Let $\calQ$ be a copied quantum code with distances $d_X$ and $d_Z$ such that the effective distances of the code using some single-ancilla stabilizer measurement circuit/schedule $M$ are $\overline{d}^M_X$ and $\overline{d}^M_Z$.
	Suppose we let $\calQ'$ be the code after applying gauging on $\calQ$ with distances at least $\Omega\left(1/w_X\right)d_X$ and $d_Z$.
	Suppose $\calQ'$ has $n_X'$ $X$ stabilizer generators with weight at most $w'_X$ and $n_Z'$ $Z$ stabilizer generators with weight at most $w'_Z$.
	Then, there exists a single-ancilla stabilizer measurement schedule $M'$ for $\calQ'$ that uses at most $n_X' + n_Z'$ ancilla qubits in total such that the effective distances of $\calQ'$ is $\overline{d'}^{M'}_X = d'_X = \Omega\left(1/w_X\right)d_X$ and $\overline{d'}^{M'}_Z \geq \overline{d}^M_Z$.
\end{theorem}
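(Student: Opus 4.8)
The plan is to dispose of the $X$-distance claim essentially for free and then devote all the work to the $Z$ side, reusing the ``restrict to the original qubits, then count faults'' strategy of Theorem~\ref{thm:distance_preserving_copied}. For the $X$ part: after gauging, every $X$ stabilizer generator of $\calQ$ of weight $>3$ has been replaced by generators of weight at most $3$, the generators of weight $\le 3$ are untouched, and the only other $X$ stabilizer generators of $\calQ'$ are the weight-$2$ repetition-like checks inherited from copying; hence every $X$ stabilizer generator of $\calQ'$ has weight at most $3$, and Corollary~\ref{cor:distance_preserving_stabilizers} already gives $\overline{d'}^{M'}_X = d'_X = \Omega(1/w_X)d_X$ for \emph{any} single-ancilla schedule. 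So it only remains to exhibit a schedule that in addition controls the $Z$ side, using one ancilla per stabilizer generator ($n_X'+n_Z'$ in total): measure each $X$ stabilizer generator in an arbitrary slot and entangling order, and for each $Z$ stabilizer generator $\tilde s_Z$ of $\calQ'$ — which, by the construction of gauging, is the original $Z$ stabilizer $s_Z$ with its support enlarged only by some of the newly added gauge qubits — schedule it in the slot that $M$ uses for $s_Z$, entangle with the original (copied) qubits in exactly the order $M$ prescribes, and then append the gauge-qubit entangling gates in any order.

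Next I would pin down the $Z$ logical structure of $\calQ'$. For the block of gauge qubits $q^{(1)},\dots,q^{(w-1)}$ and gauged copies $s^{(1)},\dots,s^{(w)}$ introduced for a single original $X$ stabilizer generator of weight $w\le w_X$, each $s^{(j)}$ has weight at most $3$ and touches, among gauge qubits, only $q^{(j-1)}$ and $q^{(j)}$ (boundary terms absent), plus one original qubit. The condition that a $Z$ operator commute with all the $s^{(j)}$ is then a triangular linear system whose solution yields: (i) the gauge-qubit support of a commuting $Z$ operator is forced to be the running partial sums of its values on the relevant original qubits; (ii) solvability is exactly the condition that those original-qubit values have even overlap with the pre-gauging $X$ stabilizer generator; and (iii) there is no nonzero $Z$ operator supported purely on gauge qubits that commutes with all $s^{(j)}$. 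Since each $\tilde s_Z$ restricts, on the original qubits, to $s_Z$, items (i)--(iii) make ``restriction to the original qubits'' a well-defined isomorphism from the group of $Z$ logical classes of $\calQ'$ onto that of $\calQ$; in particular a nontrivial $Z$ logical operator of $\calQ'$ restricts to a nontrivial one of $\calQ$.

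Finally I would run the fault count in the $Z$ picture, paralleling Lemma~\ref{lem:elementary_faults_X_logical_operators}. Suppose some nontrivial $Z$ logical operator of $\calQ'$ is produced, up to a $\calQ'$-stabilizer, by $T$ elementary faults of $M'$. Faults during an $X$ stabilizer measurement create only $X$-type data errors, at most $\lfloor 3/2\rfloor=1$ of them by Lemma~\ref{lem:elementary_fault_propagation}, and so are irrelevant to a $Z$ logical error; thus all $T$ faults may be taken to be single-qubit $Z$ errors ($F$ of them) or $Z$-stabilizer hook errors ($T-F$ of them), the latter being a prefix of $\supp(\tilde s_Z)$ in the entangling order of $M'$. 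Restrict the resulting Pauli to the original qubits: single-qubit $Z$ errors on gauge qubits disappear, and because $M'$ entangles all original qubits of $\tilde s_Z$ before any gauge qubit, a $Z$ hook error restricts either to a prefix of $s_Z$ in $M$'s order (a genuine hook error of $M$ on $\calQ$) or to the full $s_Z$ (a $\calQ$-stabilizer, which we absorb onto the stabilizer side). Since restricting a $\calQ'$-stabilizer yields a $\calQ$-stabilizer, and the restriction of our logical operator is a nontrivial $Z$ logical of $\calQ$ by the previous paragraph, this exhibits a nontrivial $Z$ logical of $\calQ$ produced, up to a $\calQ$-stabilizer, by at most $F$ single-qubit errors and $T-F$ hook errors of $M$, forcing $T\ge \overline{d}^M_Z$ and hence $\overline{d'}^{M'}_Z\ge\overline{d}^M_Z$.

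The step I expect to be the main obstacle is the second paragraph: getting the characterization of the $Z$ logical operators of the gauged code exactly right — verifying that the gauge-qubit support is genuinely \emph{determined} by the original-qubit support, that the ``slightly modified'' $Z$ stabilizers really do restrict onto the original $Z$ stabilizers with no extra gauge-only stabilizers or logicals, and hence that restriction is a clean isomorphism of logical-class groups — since the entire fault-counting argument hinges on this. The fault count itself is then a routine adaptation of the copying proof, the one delicate point being that the chosen entangling order (original qubits first) is precisely what forces a $Z$ hook error to restrict to a prefix of $s_Z$ or to a full stabilizer and to nothing worse.
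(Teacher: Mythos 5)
Your proof is correct and follows essentially the same route as the paper's: treat the $X$ distance as automatic from Corollary~\ref{cor:distance_preserving_stabilizers} because $w'_X\le 3$, and for the $Z$ distance choose a schedule in which every $Z$ hook error of $\calQ'$ restricts, on the copied-code qubits, to a hook error of $M$ (or to the full $s_Z$, a $\calQ$-stabilizer), then push the fault path of $M'$ down to a fault path of $M$ that produces the restricted logical. The minor differences are cosmetic and in places tighter than the paper: you append all gauge qubits at the end of each $Z$ stabilizer's entangling order, whereas the paper carries over the copy schedule's interleaving and splices the gauge qubits into the middle (either works, since the property that matters is that the relative order of the copied qubits is preserved, which forces hook restrictions to be hooks of $M$); you re-derive the fact that restriction to the copied qubits is an isomorphism of $Z$ logical classes from the triangular gauge-commutation system, whereas the paper simply cites Hastings's Lemma~1; and your final fault count is spelled out step by step where the paper compresses it into ``using any fewer faults would contradict the assumption.'' One tiny slip: with the ancilla-$X$/ancilla-$Z$ propagation convention of Figure~\ref{fig:single_ancilla_stabilizer_measurement}, a single ancilla fault during a $Z$ stabilizer measurement puts $Z$ errors on a \emph{suffix} of the entangling order, not a prefix; your argument survives unchanged either way since appending the gauge qubits at one end ensures the restricted set is a contiguous segment of $M$'s order from that same end.
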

\begin{proof}
	Consider the following stabilizer measurement schedule $M'$ for $\calQ'$:
	\begin{enumerate}
		\item Measure the $X$ stabilizer generators in the order of $s_X^{(1,1)}, \ldots, s_X^{\left(1,w_X\right)}, \ldots, s_X^{\left(n_X,1\right)}, \ldots, s_X^{\left(n_X,w_X\right)}$.
		\item For each stabilizer generator $s^{(i,j)}_X$, perform the entangling gates of the $X$ stabilizer generator in any order.
		\item Measure the $Z$ stabilizer generators of $\calQ'$ in the order that they were measured in $M$.
		\item For each stabilizer generator $s^{(i)}_Z$, measure the $Z$ stabilizer generator in the following way:
		\begin{enumerate}
			\item Suppose a $Z$ stabilizer generator $s^{(i)}_Z$ acts on these copied qubits and new qubits 
			\[q_{i_1, 1}, \ldots, q_{i_1, q_X}, q_{i_2, 1}, \ldots, q_{i_2, q_X}, q_{i_j, 1}, \ldots, q_{i_j, q_X}, q_{s_X^{(k)}}^{\left(k'\right)}, \ldots, q_{s^{\left(\ell\right)}_X}^{\left(\ell'\right)}\] 
			where $i_1, i_2, \ldots, i_j \in \left[n\right]$.
			The set of qubits after $q_{i_j, q_X}$ are the new qubits with respect to $s^{(i)}_X$.
			Without loss of generality, we can assume that there are $b$ new qubits that $s^{(i)}_Z$ acts on.
			We refer to these new qubits as $q^{(1)}, \ldots, q^{(b)}$ from here on.
			\item Perform the entangling gates in the following order: 
			\begin{align}&q_{i_1, 1}, \ldots, q_{i_1, \lfloor q_X/2\rfloor}, \ldots, q_{i_j, 1}, \ldots, q_{i_j, \lfloor q_X/2\rfloor}, q^{\left(1\right)},\ldots, q^{\left(\lfloor b/2\rfloor\right)},\\
				&q_{i_1, \lfloor q_X/2\rfloor + 1}, \ldots, q_{i_1, q_X}, \ldots, q_{i_j, \lfloor q_X/2\rfloor + 1}, \ldots, q_{i_j, q_X}, q^{\left(\lfloor b/2\rfloor + 1\right)}, \ldots, q^{(b)}.
			\end{align}
		\end{enumerate}
	\end{enumerate}

	Let us first begin by analyzing the effective $X$ distance of $\calQ'$.
	From the description of the copying and gauging techniques, we see that $w_X' \leq 3$.
	By Corollary~\ref{cor:distance_preserving_stabilizers}, we see that any elementary fault in the any measurement circuit for the $X$ stabilizer generators of $\calQ'$ generates at most 1 physical error in the code.
	Thus, we can see that the effective $X$ distance of $\calQ'$ is $d'_X = \Omega\left(1/w_X\right)d_X$ for any stabilizer measurement schedule.

	We now proceed to analyze the effective $Z$ distance of $\calQ'$ with the stabilizer measurement schedule $M'$.
	As shown in the proof of Lemma~1 in \cite{hastings2021quantum}, any non-trivial $Z$ logical operator $\vec{L}'$ in $\calQ'$ must be a non-trivial $Z$ logical operator in $\calQ$.
	In other words, when we discard $\vec{L}'$'s support on the new qubits, $q_{s^{(1)}_X}^{(1)}, \ldots, q_{s^{\left(n_X\right)}_X}^{\left(w_X - 1\right)}$, we obtain $\vec{L}$, a non-trivial $Z$ logical operator in $\calQ$.
	By our assumption, $\vec{L}$ requires at least $\overline{d}^M_Z$ elementary faults to generate using $\calQ$'s stabilizer measurement circuit $M$ and other data qubit errors.
	Using the same set of elementary faults on the modified $Z$ stabilizer generators due to gauging and the data qubits, we obtain $\vec{L}'$, i.e., some non-trivial $Z$ logical operator of $\calQ'$, such that $\supp\left(\vec{L}\right) \subseteq \supp\left(\vec{L}'\right)$.
	Using any fewer elementary faults would contradict the assumption made for $\calQ$ and its effective distance with respect to $M$.
	Thus, we have shown that the effective $Z$ distance of $\calQ'$ is at least $\overline{d}_Z^M$ for our stabilizer measurement schedule $M'$.
\end{proof}

From the above theorem, we see that \emph{copying and gauging} does not affect a quantum code's ability to preserve its effective $X$ distance, i.e., there exists some single-ancilla stabilizer measurement schedule that preserves the effective $X$ distance of the copied and gauged quantum code.
However, we also observe that the stabilizer measurement schedule that we have constructed is not able to preserve the $Z$ distance gained from the copying process.
That should not be surprising because the $Z$ stabilizer generators have support on all the copied qubits which can lead to terrible fault propagation on the different sets of $q_X$ copied qubits.
We note that a much more careful construction of the modified $Z$ stabilizer generators may address this issue.

\subsection{Distance Preservation of Thickened and Height-Chosen Codes}
In this section, we provide a proof of the effective distance preservation for the thickened and height-chosen quantum codes.

To do so, we first state a fairly recent result from Ref.~\cite{evra2022decodable} that generalizes thickening.

\begin{theorem}[{\cite[Restatement of Theorem 4.2]{evra2022decodable}}]\label{thm:generalized_thickening}
	Let $\calQ$ be a $\left[\left[n, k, d_X, d_Z\right]\right]$ quantum code that corresponds to a $2$-complex $\calA = \left(A_2, A_1, A_0\right) \cong \left(\F_2^{n_Z}, \F_2^n, \F_2^{n_X}\right)$.
	Given a $[n_c, k_c, d_c]$ classical code that corresponds to a 1-dimensional complex $\calB = \left(B_1, B_0\right) \cong \left(\F_2^{n_c - k_c}, \F_2^{n_c}\right)$ such that the first cohomology $\calH^1\left(\calB\right) = 0$, the resulting quantum code $\calQ'$ that corresponds to the associated complex $\calT = \calA \otimes \calB$ has:
	\begin{align}
		k' &= k \cdot k_c, \\
		d'_X &= d_X \cdot d_c, \\
		d'_Z &= d_Z.
	\end{align}
\end{theorem}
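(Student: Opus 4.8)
The plan is to deduce all four equalities from the Künneth formula over $\F_2$ together with an explicit description of the minimum-weight representatives in each (co)homology degree, running the same argument used for Lemma~\ref{lem:thickened_code_distance} but with a general classical $1$-complex $\calB$ in place of the repetition code. First I would record the homology of $\calB = (B_1, B_0) \cong (\F_2^{n_c - k_c}, \F_2^{n_c})$. For a two-term complex of $\F_2$-spaces one has $\dim \calH_1(\calB) = \dim \calH^1(\calB)$, so the hypothesis $\calH^1(\calB) = 0$ forces $\calH_1(\calB) = 0$, i.e.\ the boundary map $\partial_1^\calB$ is injective; consequently $\calH_0(\calB)$ has dimension $k_c$, the zeroth cohomology $\calH^0(\calB) = \ker \delta_0^\calB$ is precisely the classical code $\calC$ (dimension $k_c$, minimum weight $d_c$), and $\calH_0(\calB)$ has a weight-one representative because $k_c \ge 1$ forces some standard basis vector of $B_0$ to lie outside $\im \partial_1^\calB$.

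With $\calT = \calA \otimes \calB$, the logical qubits of $\calQ'$ are $\calH_1(\calT)$, and the Künneth theorem gives $\calH_1(\calT) \cong \big(\calH_1(\calA)\otimes \calH_0(\calB)\big)\oplus\big(\calH_0(\calA)\otimes\calH_1(\calB)\big)$; since $\calH_1(\calB)=0$ the second summand vanishes and $k' = \dim \calH_1(\calT) = k\cdot k_c$ (the cohomological Künneth formula gives the matching $\dim\calH^1(\calT)=k\cdot k_c$). The distance upper bounds come from explicit logicals. If $\bar z$ is a minimum-weight $Z$-logical of $\calQ$ and $e_i$ is a weight-one element of $B_0$ whose class in $\calH_0(\calB)$ is nonzero, then $\bar z\otimes e_i \in A_1\otimes B_0 \subseteq T_1$ is a cycle whose class is $[\bar z]\otimes[e_i]\neq 0$ (a tensor of nonzero elements over a field), so $d'_Z \le d_Z$; if $\bar x$ is a minimum-weight $X$-logical of $\calQ$ and $c$ a minimum-weight codeword of $\calC = \calH^0(\calB)$, then $\bar x\otimes c$ is an $X$-cocycle with nonzero class, so $d'_X \le d_X d_c$.

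The lower bounds are the substance of the proof. For $d'_Z \ge d_Z$: writing an arbitrary nontrivial $Z$-logical as $v = (v', v'')$ with $v'\in A_1\otimes B_0$ and $v''\in A_0\otimes B_1$, each codeword $\ell\in\calC$ gives a chain map $\Phi_\ell\colon\calT\to\calA$ that contracts the $B_0$-factor against $\ell$ and annihilates the $B_1$-factor — it is a chain map precisely because $\ell\in\ker H_c$ means $\ell$ vanishes on $\im\partial_1^\calB$. Using the perfect pairing $\calC\times\calH_0(\calB)\to\F_2$ and the Künneth description of $\calH_1(\calT)$, some $\ell$ sends $[v]$ to a nonzero class of $\calH_1(\calA)$, so $\Phi_\ell(v)$ is a nontrivial $Z$-logical of $\calQ$ whose support is contained in the image of $\supp(v')$ under the projection $A_1\times[n_c]\to A_1$; hence $|v| \ge |v'| \ge |\Phi_\ell(v)| \ge d_Z$. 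For $d'_X \ge d_X d_c$: given a nontrivial $X$-logical $u = (u', u'')$, the cocycle condition shows that each column $u'_j\in A_1$ of $u'$ (over the $j$-th basis vector of $B_0$) lies in $\ker H_Z$ and that the class $[u'_j]\in\calH^1(\calA)$ is well defined, unchanged by adding $\calT$-coboundaries (those only shift $u'_j$ within $\rs H_X$). The same cocycle condition shows that the tuple $\big([u'_j]\big)_{j\in[n_c]}$ satisfies the parity checks of $H_c$ over the $\F_2$-space $\calH^1(\calA)$ — it is a nonzero $\calH^1(\calA)$-valued codeword of $\calC$ — so applying any linear functional $\calH^1(\calA)\to\F_2$ that is nonzero on some nonzero column produces an ordinary nonzero codeword of $\calC$ supported inside $\{\,j : [u'_j]\neq 0\,\}$; thus at least $d_c$ columns $u'_j$ are nontrivial $X$-logicals of $\calQ$, each of weight $\ge d_X$, and summing gives $|u| \ge |u'| \ge d_c d_X$.

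The hard part is this last step, the $X$-distance lower bound — the ``distance-balancing'' mechanism itself. The two delicate points are (i) verifying that the per-column class $[u'_j]$ is genuinely invariant under the full coboundary group $\im\delta_0^\calT$ of $\calT$, not merely the obvious part of it, and (ii) the reduction from an $\calH^1(\calA)$-valued codeword to the bound $\ge d_c$ via functional projection, which needs the elementary fact that such a projection cannot create support. As an alternative one could instead try to quote the tensor-product distance formula of Theorem~\ref{thm:thickening_distance} with $\calB$ the $1$-complex of the classical code, reading off $d^0(\calB)=d_c$, $d^1(\calB)=\infty$, $d_0(\calB)=1$, $d_1(\calB)=\infty$; but the direct argument is more transparent about why $\calH^1(\calB)=0$ is exactly the hypothesis that makes everything go through, and it is essentially the proof of Ref.~\cite{evra2022decodable}.
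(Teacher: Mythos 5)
The paper does not prove Theorem~\ref{thm:generalized_thickening} internally; it is stated as a restatement of Theorem~4.2 of Ref.~\cite{evra2022decodable}, so there is no paper proof to compare against. Your argument is a reasonable from-scratch reconstruction, and the strategy --- K\"unneth for the rate, a contracting chain map $\Phi_\ell$ indexed by codewords for the $Z$-bound, and per-column cocycle analysis plus projection by a linear functional for the $X$-bound --- is sound and in fact is close in spirit to the paper's own later proof of Lemma~\ref{lem:thickened_code_component_weight}, which re-derives these bounds in schematic (row/column) form.

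There is, however, a genuine gap, and you have mislocated where the delicacy lives. Your two enumerated ``delicate points'' are actually routine. For (i): a $\calT$-coboundary $\delta_0^\calT c$ with $c\in A_0\otimes B_0$ has $A_1\otimes B_0$-component $(\delta_0^\calA\otimes\mathbbm{1}_{B_0})(c)$, so each column of $u'$ changes by $\delta_0^\calA$ of the corresponding column of $c$, which lies in $\rs H_X$ --- invariance of $[u'_j]$ is immediate, with no ``non-obvious part'' left to worry about. For (ii): applying a linear functional $\phi:\calH^1(\calA)\to\F_2$ entrywise to a tuple satisfying the $H_C$-parity checks yields an ordinary $\F_2$-codeword, and $\phi(0)=0$ means its support shrinks; nothing subtle. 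What is \emph{not} routine and is left unproved is the claim that the tuple $\bigl([u'_j]\bigr)_{j\in[n_c]}$ is nonzero whenever $[u]\neq 0$, and this is precisely where the hypothesis $\calH^1(\calB)=0$ is used. Concretely, one must show: if $[u'_j]=0$ for all $j$, then $u$ is a $\calT$-coboundary. The argument is: choose $a\in A_0\otimes B_0$ with $(\delta_0^\calA\otimes\mathbbm{1})(a)=u'$; then $u-\delta_0^\calT a=(0,\tilde u)$ is a cocycle, which forces $\tilde u\in\ker(\delta_0^\calA)\otimes B_1$; since $\calH^1(\calB)=0$ makes $\delta_0^\calB$ surjective, choose $b\in\ker(\delta_0^\calA)\otimes B_0$ with $(\mathbbm{1}\otimes\delta_0^\calB)(b)=\tilde u$, and observe $\delta_0^\calT b=(0,\tilde u)$. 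Without this step, the $X$-distance lower bound has no starting point --- you could not choose the functional $\phi$. You gesture at the role of $\calH^1(\calB)=0$ in your closing sentence but never actually carry the argument out, so the proof as written is incomplete.
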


Evra \emph{et al.} show that thickening can be generalized by considering the homological product between the quantum code $\calQ$ and an arbitrary classical code $\calC$ that need not be a repetition code.
While the resulting quantum code $\calQ'$ might not have the same $q_Z$ reduction from this generalized thickening, the rate of $\calQ'$ can be increased by choosing a classical code $\calC$ with a larger rate.
This technique, along with Hastings's original thickening and choosing heights technique, has also been referred to as the distance balancing technique in the literature.
It is easy to see that selecting a classical LDPC code with linear distance and constant rate for generalized thickening can increase the distance of the quantum code without reducing its rate albeit possibly at the expense of a marginally larger final $q_Z$.

Our strategy for showing the effective distances of the thickened and height-chosen quantum codes are preserved is to prove that effective distance preservation is achieved for the generalized thickening technique in Ref.~\cite{evra2022decodable}.
To do that, we first show that the $X$ and $Z$ stabilizer generators of the generalized thickened code can only propagate errors to qubits in a single row/column of Figure~\ref{fig:schematic} during a single-ancilla stabilizer measurement.
Subsequently, we prove that any $X$ and $Z$ logical operator of the generalized thickened code has to be supported on some number of rows and columns of qubits to complete the proof. 

We begin by restating the following parity check matrices for the constitutents of the generalized thickened quantum code.
\begin{align*}
	H_X&: \F_2^{n} \to \F_2^{n_X} \\
	H_Z&: \F_2^n \to \F_2^{n_Z} \\
	H_C&: \F_2^{n_c} \to \F_2^{n_c - k_c}
\end{align*}
where $H_C$ is the parity check matrix of the classical code $\calC$.
We also have the relation between the vector spaces below as an aid:
\begin{center}
\begin{tikzcd}[row sep=large, column sep=huge]
	\F_2^{n_Z}\otimes \F_2^{n_c}  \arrow[r, "H_Z^{\top} \otimes \mathbbm{1}_{n_c}"] & \F_2^n \otimes \F_2^{n_c} \arrow[r, "H_X \otimes \mathbbm{1}_{n_c}"] & \F_2^{n_X} \otimes \F_2^{n_c} \\
	\F_2^n \otimes \F_2^{n_c - k_c} \arrow[ru, "\mathbbm{1}_n \otimes H_C^{\top}"] \arrow[r, "H_X \otimes \mathbbm{1}_{n_c - k_c}"] & \F_2^{n_X} \otimes \F_2^{n_c - k_c} \arrow[ru, "\mathbbm{1}_{n_X} \otimes H_C^{\top}"] 
\end{tikzcd}
\end{center}
The vector spaces on the left, middle, and right of the diagram above are spanned by the $Z$ stabilizer generators, qubits, and $X$ stabilizer generators of the generalized thickened quantum code, respectively.
We can arrange the basis elements of these individual vector spaces into binary matrices.
For example, we can arrange the basis elements of $\F_2^n \otimes \F_2^{n_c}$ into a binary matrix $\F_2^{n \times n_c}.$
The $n_c$ different columns of the binary matrix $\F_2^{n \times n_c}$ correspond to the $n_c$ different copies of the quantum code $\calQ$.
In the subsequent analysis, we frequently interchange between the vector representation and the binary matrix representation of the vector spaces.
To help ease the transition between the two representations, we add the vector notation $\vec{}$ if we are using the vector representation and omit the vector notation otherwise.
In other words, we have $\vec{x} \in \F_2^n \otimes \F_2^{n_c}$ and $x \in \F_2^{n \times n_c}$ where $\vec{x}$ is the vector representation of the binary matrix $x$.  

We restate the following parity check matrices for the generalized thickened quantum code for ease of reference:
\begin{align}
	H_X' &= \left(\begin{array}{c|c}
		H_X \otimes \mathbbm{1}_{n_c} & \mathbbm{1}_{n_X} \otimes H_C^{\top}
	\end{array}\right) \label{eq:parity_check_matrix_thickened_X} \\
	H_Z' &= \left(\begin{array}{c|c}
		H_Z \otimes \mathbbm{1}_{n_c} & \textbf{0} \\ \hline
		\rule{0pt}{4mm} 
		\mathbbm{1}_{n} \otimes H_C & H_X^{\top} \otimes \mathbbm{1}_{n_c - k_c}
	\end{array}\right) \label{eq:parity_check_matrix_thickened_Z}
\end{align}
The left partition of the parity check matrices correspond to the qubits in $\F_2^{n \times n_c}$ which we refer to as region $A$.
Similarly, we let the right partition correspond to the qubits in $\F_2^{n_X \times \left(n_c - k_c\right)}$ which we refer to as region $B$.	
The top partition of the parity check matrix $H_Z'$, which we refer to as $Z[T]$, corresponds to the $Z$ stabilizer generators in $\F_2^{n_Z \times n_c}$ and the bottom partition, i.e., $Z[B]$, corresponds to the $Z$ stabilizer generators in $\F_2^{n\times \left(n_c - k_c\right)}$. 
We include Figure~\ref{fig:schematic} as a schematic diagram to illustrate the different qubit regions and how the stabilizer generators act on the different regions.
We may refer to the rows and columns in Figure~\ref{fig:schematic} as a means of providing an intuitive explanation for the logical operators and the effective distances of the generalized thickened quantum code.

\begin{figure}[h]
  \centering
  \begin{tikzpicture}
    \draw (0,0) rectangle (3,3); 
    \node at (1.5,1.5) {$X$ stabilizers};
    \draw (3.4,3.4) rectangle (6.4,6.4); 
    \node at (4.9,4.9) {$Z[B]$ stabilizers};
    \draw (3.4,0) rectangle (6.4,3); 
    \node at (4.9,1.5) {$B$ (qubits)};
		\draw (3.4,6.8) rectangle (6.4,9.8); 
			\node at (4.9,8.3) {Redundant};
    \draw (0,3.4) rectangle (3,6.4); 
    \node at (1.5,4.9) {$A$ (qubits)};
		\draw (0,6.8) rectangle (3,9.8); 
    \node at (1.5,8.3) {$Z[T]$ stabilizers};
    \draw (-0.5,0) edge[-] (-0.5,3); 
    \node at (-0.8,1.5) {$n_X$};
    \draw (-0.5,3.4) edge[-] (-0.5,6.4); 
    \node at (-0.8,4.9) {$n$};
		\draw (-0.5,6.8) edge[-] (-0.5,9.8); 
    \node at (-0.8,8.3) {$n_Z$};
    \draw (0,-0.5) edge[-] (3,-0.5); 
    \node at (1.5,-0.75) {$n_c$};
    \draw (3.4,-0.5) edge[-] (6.4,-0.5); 
    \node at (4.9,-0.75) {$n_c - k_c$};
  \end{tikzpicture}
  \caption{Schematic for the generalized thickened quantum code. 
	There are $n_c$ copies of the quantum code $\calQ$ arranged in $n_c$ columns in the three blocks $Z[T], A$, and $X$.
	For each of the $n_c$ columns, we have the $Z[T]$ stabilizer generators and $X$ stabilizer generators belonging to that column acting on the qubits in the same column.
	Similarly, there are $n$ copies of the classical code $\calC$ arranged in $n$ rows in the two blocks $Z[B]$ and $A$.
	For each of the $n$ rows, we have the $Z[B]$ stabilizer generators belonging to that row acting on the qubits in the same row which is reminiscent to how the checks of the classical code act on the classical bits.
	For the qubits in the region $B$, we have the $Z[B]$ stabilizer generators and $X$ stabilizer generators acting on the qubits in the same column and row respectively.
 }
 \label{fig:schematic}
\end{figure}

We now provide an important notational definition regarding the row and column weights of the binary matrices that we will use in the subsequent analysis.

\begin{definition}[Row and Column Weight]
	For any $x \in \F_2^{n \times n_c} \oplus \F_2^{n_X \times \left(n_c - k_c\right)}$, we refer to the restrictions to $\F_2^{n \times n_c}$ and $\F_2^{n_X \times\left(n_c - k_c\right)}$ by $x_{A}$ and $x_{B}$, respectively.
	We define the row (column) weight of $x_A$, denoted as $\left|x_A\right|_R$ ($\left|x_A\right|_C$) as the number of non-zero rows (columns) of $x_A$.
	Likewise we define the row (column) weight of $x_B$, denoted as $\left|x_B\right|_R$ ($\left|x_B\right|_C$), as the number of non-zero rows (columns) of $x_B$.
	We denote the submatrix of $x_A$ that has rows $r \subseteq [n]$ and columns $c \subseteq [n_c]$ as $x_A\left[r,c\right]$.
	In particular, $x_A[\cdot, c]$ denotes the submatrix of $x_A$ with columns $c$ with all rows included in the submatrix.
	The submatrix of $x_B$ can be defined similarly.
	The formal definitions for row and column weights are given in the following table where we use $D^c$ to denote $D \setminus E$ for $D \subseteq E$.
	\begin{center}
		\begin{tabular}{|c|c|c|}
			\hline
			& denoted by & defined as \\
			\hline
			Row weight of $x_A$ & $\left|x_A\right|_R$ & $\min\left\{|\alpha| \;\middle|\; \alpha \subseteq [n], x_A\left[\alpha^c, \cdot\right] = \textbf{0} \right\}$ \\
			Column weight of $x_A$ & $\left|x_A\right|_C$ & $\min\left\{|\beta| \;\middle|\; \beta \subseteq [\ell], x_A\left[\cdot, \beta^c\right] = \textbf{0} \right\}$ \\
			\hline
			Row weight of $x_B$ & $\left|x_B\right|_R$ & $\min\left\{|\alpha| \;\middle|\; \alpha \subseteq [n_X], x_B\left[\alpha^c, \cdot\right] = \textbf{0} \right\}$ \\
			Column weight of $x_B$ & $\left|x_B\right|_C$ & $\min\left\{|\beta| \;\middle|\; \beta \subseteq [\ell - 1], x_B\left[\cdot, \beta^c\right] = \textbf{0} \right\}$ \\
			\hline
		\end{tabular}
	\end{center}
\end{definition}

Next, we provide a proposition that shows that the elementary faults in the single-ancilla syndrome extraction circuit have bounded component weight.
The goal of the proposition is to bound the number of rows and columns of data qubits that an elementary fault in the syndrome extraction circuit can affect.

\begin{proposition}[Elementary faults have bounded component weight]\label{prop:elementary_faults_bounded_component_weight}
		Let $E$ be a Pauli error occurring from $t$ elementary faults in the single-ancilla syndrome extraction circuit.
		Let $x, z \in \F_2^{n \times n_c} \oplus \F_2^{n_X \times \left(n_c - k_c\right)}$ such that $E \propto \prod_{a \in \left[n \cdot n_c + n_X \cdot\left(n_c - k_c\right)\right]} X^{\vec{x}[a]}Z^{\vec{z}[a]}$ i.e. a decomposition of $E$ into Pauli $X$ and $Z$ components.
		Then, the row and column weights are bounded as $\left|x_A\right|_C \leq t$, $\left|z_A\right|_R \leq t\cdot \left\lfloor \frac{w_Z}{2} \right\rfloor$, and $\left|z_A\right|_C \leq t\cdot \left\lfloor \frac{w_C}{2} \right\rfloor$ where $w_Z$ and $w_C$ are the maximum row weights of $H_Z$ and $H_C$ respectively.
		\end{proposition}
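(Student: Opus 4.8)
The plan is to process the $t$ faults one at a time, exploiting the single unifying observation that, restricted to region~$A$, \emph{every} stabilizer of the generalized thickened code sits inside a single row or a single column of the schematic of Figure~\ref{fig:schematic}. Concretely, from (\ref{eq:parity_check_matrix_thickened_X}) and (\ref{eq:parity_check_matrix_thickened_Z}): each $X$ stabilizer is a row of $H_X'$ whose restriction to $A$ (coming from $H_X\otimes\mathbbm{1}_{n_c}$) lies in one column of $A$, touching at most $w_X$ rows, and whose restriction to $B$ lies in one row of $B$; each $Z[T]$ stabilizer is supported entirely in $A$, in one column, touching at most $w_Z$ rows; and each $Z[B]$ stabilizer restricts in $A$ to a single row, touching at most $w_C$ columns, and restricts in $B$ to one column. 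Note that within a single column of $A$ distinct qubits occupy distinct rows, and within a single row of $A$ distinct qubits occupy distinct columns.

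First I would recall the single-ancilla fault-propagation facts underlying Lemma~\ref{lem:elementary_fault_propagation}: a Pauli error on a data qubit remains a single-qubit error and cannot spread during later stabilizer measurements; a measurement or preparation error on an ancilla causes only a syndrome flip, with no data error; and a fault during the measurement of a stabilizer $s$ that \emph{does} propagate onto the data produces an error of the same Pauli type as $s$ (an $X$-type ``hook'' for an $X$ stabilizer, a $Z$-type hook for a $Z$ stabilizer), supported on the set of qubits of $s$ whose entangling gate follows the fault; multiplying by $s$ replaces this by the equivalent residual supported on the complementary prefix. A $Y$ fault is treated as the product of an $X$ and a $Z$ fault. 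Writing $E_1,\dots,E_t$ for the residual data errors produced by the individual faults (with, for each hook, whichever of the two prefix/suffix representatives we choose below), we have $E\equiv\prod_{j=1}^{t}E_j$ modulo the stabilizer group, and since the number of nonzero rows (columns) of a sum of matrices is at most the sum of those numbers, it suffices to bound the region-$A$ component weights of the $X$- and $Z$-parts of each $E_j$ and add.

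I would then run the case analysis over (Pauli type of the fault) $\times$ (type of stabilizer being measured). For the $X$-part restricted to $A$: a single-qubit $X$ (or $Y$) error on an $A$-qubit contributes one column; an $X$-hook, being confined to one column of $A$, contributes one column; an $X$-error on a $B$-qubit, or any $Z$-type fault, or any non-propagating ancilla error, contributes nothing to the region-$A$ $X$-part. Hence $|x_A|_C\le t$. For the $Z$-part restricted to $A$: a single-qubit $Z$ (or $Y$) error on an $A$-qubit contributes at most one row and at most one column; a $Z[T]$-hook lies in one column of $A$, and choosing the smaller of its prefix/suffix representatives (which partition the at-most-$w_Z$ qubits of that stabilizer in $A$) it touches at most $\lfloor w_Z/2\rfloor$ of those qubits, hence at most $\lfloor w_Z/2\rfloor$ rows and trivially one column; a $Z[B]$-hook lies in one row of $A$, so it touches at most one row regardless of which representative we take, and choosing the smaller prefix/suffix representative it touches at most $\lfloor w_C/2\rfloor$ of the at-most-$w_C$ qubits of that stabilizer in $A$, hence at most $\lfloor w_C/2\rfloor$ columns; all other faults contribute nothing to the region-$A$ $Z$-part. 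Summing and using $w_Z,w_C\ge 2$ (true for any code whose checks have no weight-one rows, which we may assume, so that the single-qubit contributions are absorbed into $\lfloor w_Z/2\rfloor$ and $\lfloor w_C/2\rfloor$), we get $|z_A|_R\le t\lfloor w_Z/2\rfloor$ and $|z_A|_C\le t\lfloor w_C/2\rfloor$.

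The one point that requires care is the representative bookkeeping in the $Z$-part bounds: a $Z[B]$-hook can \emph{a priori} span all $w_C$ of its columns in $A$, but replacing it by its complement inside its own stabilizer both halves its region-$A$ column weight and leaves its region-$A$ row weight equal to one, precisely because that stabilizer occupies a single row of $A$ --- this is the halving trick of Lemma~\ref{lem:elementary_fault_propagation} applied ``column-wise within region $A$,'' with the symmetric statement handling $Z[T]$-hooks row-wise. Since each such replacement alters only one column (for a $Z[T]$-hook) or only one row of $A$ and one column of $B$ (for a $Z[B]$-hook), the choices for the different faults are independent and may all be made simultaneously, so a single representative of $E$ realizes all three bounds at once. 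The remaining verification --- that hooks cannot be of the ``wrong'' Pauli type, e.g. an $X$-type error emerging from a $Z$-stabilizer measurement --- is immediate from the CNOT orientations in a single-ancilla circuit, and everything else is the routine accounting sketched above; I expect this representative-choice step to be the only genuinely delicate part.
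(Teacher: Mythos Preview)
Your proof is correct and follows essentially the same approach as the paper: analyze one fault at a time, observe from (\ref{eq:parity_check_matrix_thickened_X})--(\ref{eq:parity_check_matrix_thickened_Z}) that each stabilizer, restricted to region $A$, lives in a single row or a single column, apply the halving trick of Lemma~\ref{lem:elementary_fault_propagation} in the appropriate direction, and sum. Your treatment is in fact more explicit than the paper's on the representative-choice issue---you spell out that the halving for $Z[T]$ hooks (row-wise) and for $Z[B]$ hooks (column-wise) act on disjoint sets of faults and hence can be realized by a single representative of $E$, whereas the paper invokes Lemma~\ref{lem:elementary_fault_propagation} more tersely.
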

\begin{proof}
		We begin by proving that $\left|x_A\right|_C \leq t$.
		Let $\vec{s}$ be a single row of $H_X'$.
		Then, for some standard basis vectors $\vec{\alpha} \in \F_2^{n_X}, \vec{\beta} \in \F_2^{n_c}$ with $|\vec{\alpha}| = |\vec{\beta}| = 1$,
		\begin{align}
			\vec{s} = H_X^{\prime\top}(\vec{\alpha} \otimes \vec{\beta}) = \left(H_X^{\top}\vec{\alpha} \otimes \vec{\beta}\right) \oplus \left(\vec{\alpha} \otimes H_C\vec{\beta}\right) = \vec{s}_A \oplus \vec{s}_B.
		\end{align}
		Once again, we can rearrange $\vec{s}_A \in \F_2^{n} \otimes \F_2^{n_c}$ and $\vec{s}_B \in \F_2^{n_X} \otimes \F_2^{n_c - k_c}$ into the binary matrices $s_A \in \F_2^{n \times n_c}$ and $s_B \in \F_2^{n_X \times \left(n_c - k_c\right)}$ respectively.
		Since $\left|\vec{\beta}\right| = 1$ and $\vec{s}_A = H_X^\top \vec{\alpha} \otimes \vec{\beta}$, we conclude that $\left|s_A\right|_C \leq 1$.
		The error $x$ results from $t$ circuit faults, so its support must be contained in the support of at most $t$ rows $\left\{s_i\right\}_{i \in [t]}$ of $H_X'$ i.e. $\supp x \subseteq \cup_{i \in [t]}\supp s_i$.
		Then, 
		\begin{align*}
			\left|x_A\right|_C &\leq \left|\cup_{i \in [t]}\supp \left(s_i\right)_A\right|_C\\
						&\leq \sum_{i \in [t]}\left|\left(s_i\right)_A\right|_C \\
						&\leq t.
		\end{align*}

		We now present the proof for $\left|z_A\right|_R \leq t \cdot \left\lfloor \frac{w_Z}{2} \right\rfloor$.
		Consider some $\vec{\gamma} \in \F_2^{n_Z}, \vec{\gamma}' \in \F_2^{n}, \vec{\zeta}, \in \F_2^{n_c},  \vec{\zeta}' \in \F_2^{n_c - k_c}$ with $|\vec{\gamma}| = |\vec{\zeta}|  = |\vec{\gamma}'| = |\vec{\zeta}'| = 1$.
		Let $\vec{v}$ be a single row of $H_Z'$ from the top partition stated in \eqref{eq:parity_check_matrix_thickened_Z} such that
		\begin{align}
			\vec{v} = H_Z^{\prime\top}\left(\begin{array}{c}\vec{\gamma} \otimes \vec{\zeta} \\ \textbf{0} \otimes \textbf{0}\end{array}\right) = H_Z^{\top}\vec{\gamma} \otimes \vec{\zeta} = \vec{v}_A.
		\end{align}
		Using the same analysis and matrix rearrangement as above as well as Lemma~\ref{lem:elementary_fault_propagation}, we conclude that $\left|v_A\right|_R \leq \left\lfloor \frac{w_Z}{2}\right\rfloor$.
		Let $\vec{v}'$ be a single row of $H_Z'$ from the bottom partition stated in \eqref{eq:parity_check_matrix_thickened_Z} such that
		\begin{align}
			\vec{v}' = H_Z^{\prime\top}\left(\begin{array}{c}\textbf{0} \otimes \textbf{0}\\ \vec{\gamma}' \otimes \vec{\zeta}' \end{array}\right) = \left(\vec{\gamma}' \otimes H_C^{\top} \vec{\zeta}'\right) \oplus \left(H_X\vec{\gamma}' \otimes \vec{\zeta}'\right) = \vec{v}'_A + \vec{v}'_B.
		\end{align}
		Using the same analysis and the fact that $\left|\vec{\gamma}'\right| = 1$, we can conclude that $\left|v'_A\right|_R  \leq 1$.
		Then, the error $z$ that results from $t$ circuit faults has the following:
		\begin{align}
			\left|z_A\right|_R \leq t\cdot \left\lfloor \frac{w_Z}{2}\right\rfloor.
		\end{align}		
		
		The proof for $\left|z_A\right|_C \leq t\cdot \left\lfloor\frac{w_C}{2}\right\rfloor$ is similar to the proof above.
\end{proof}

We now state an important lemma regarding the algebraic description for the logical operators of the generalized thickened quantum code.
Our proofs for the logical operators of $\calQ'$ are largely similar to the proof of Lemma~1 in \cite{krishna2021fault}.

\begin{lemma}[Logical Operators of Generalized Thickened Quantum Code]\label{lem:logical_operators_thickened}
	The $X$ logical operators of $\calQ'$ are spanned by
		\begin{align}&\left(\frac{\ker H_Z}{\rs H_X} \otimes \ker H_C \;\middle|\; \textbf{0}\right)
		\end{align}
	where the left partition corresponds to the qubits in region $A$ and the right partition corresponds to the qubits in region $B$.
	By the same notation, the $Z$ logical operators of $\calQ'$ are spanned by
	\begin{align}
		&\left(\frac{\ker H_X}{\rs H_Z} \otimes \frac{\F_2^{n_c}}{\rs H_C} \;\middle|\; \textbf{0}\right).
		\end{align}
\end{lemma}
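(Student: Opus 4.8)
The plan is to compute the quotients $\ker H'_Z/\rs H'_X$ and $\ker H'_X/\rs H'_Z$ directly from the block forms \eqref{eq:parity_check_matrix_thickened_X} and \eqref{eq:parity_check_matrix_thickened_Z}, working throughout in the matrix representation $\F_2^{n\times n_c}\oplus\F_2^{n_X\times(n_c-k_c)}$ for errors (and in the analogous matrix pictures for the $X$ and $Z$ stabilizers), in the spirit of Lemma~1 of \cite{krishna2021fault}. In this representation $H'_X$ sends $(x_A,x_B)\mapsto H_X x_A+x_B H_C$, the two row blocks of $H'_Z$ send $(x_A,x_B)\mapsto\bigl(H_Z x_A,\;x_A H_C^\top+H_X^\top x_B\bigr)$, and the stabilizer row spaces are $\rs H'_X=\{(H_X^\top w,\;wH_C^\top)\}$ and $\rs H'_Z=\{(H_Z^\top w_T+w_B H_C,\;H_X w_B)\}$.

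For the $X$ logicals I would first show that every coset of $\ker H'_Z/\rs H'_X$ has a representative whose region-$B$ component vanishes. This is exactly where the hypothesis $\calH^1(\calB)=0$ enters: it says $H_C$ has full row rank, hence $w\mapsto wH_C^\top$ is surjective, so one can pick $w$ with $wH_C^\top=x_B$ and subtract the stabilizer $(H_X^\top w,wH_C^\top)\in\rs H'_X$ to clear $x_B$. With $x_B=0$ the kernel conditions become $H_Z x_A=0$ and $x_A H_C^\top=0$, i.e.\ the columns of $x_A$ lie in $\ker H_Z$ and its rows lie in $\ker H_C$; since over $\F_2$ a matrix with columns in $U$ and rows in $V$ lies in $U\otimes V$, this gives $x_A\in\ker H_Z\otimes\ker H_C$. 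Finally I would identify the residual stabilizer freedom preserving the slice $x_B=0$, namely the $(H_X^\top w,wH_C^\top)$ with $wH_C^\top=0$, which are precisely the elements of $\rs H_X\otimes\ker H_C$. Hence the $X$ logicals are $(\ker H_Z\otimes\ker H_C)/(\rs H_X\otimes\ker H_C)\cong(\ker H_Z/\rs H_X)\otimes\ker H_C$, supported in region $A$, as claimed.

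The $Z$ logical computation is the mirror image, this time invoking the standing assumption that $H_X$ has full row rank: surjectivity of $w_B\mapsto H_X w_B$ lets me clear the region-$B$ part of any $(x_A,x_B)\in\ker H'_X$ modulo $\rs H'_Z$, leaving $H_X x_A=0$, i.e.\ $x_A\in\ker H_X\otimes\F_2^{n_c}$. The residual stabilizer freedom preserving $x_B=0$ is generated by the $(H_Z^\top w_T+w_B H_C,\,H_X w_B)$ with $H_X w_B=0$, which spans $\rs H_Z\otimes\F_2^{n_c}+\ker H_X\otimes\rs H_C$; quotienting yields $(\ker H_X/\rs H_Z)\otimes(\F_2^{n_c}/\rs H_C)$. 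As a consistency check, both logical spaces have dimension $k\cdot k_c$, matching $k'=k k_c$ from Theorem~\ref{thm:generalized_thickening}; one may also read the whole argument as the K\"unneth formula for $\calH^1$ and $\calH_1$ of $\calA\otimes\calB$, with the would-be second summand killed because $\calH^1(\calB)=0$ forces $\calH_1(\calB)=0$.

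The step I expect to be most delicate is the precise identification of the residual stabilizer freedom in each case --- verifying that once representatives are pushed into region $A$ the only remaining identifications are exactly the stated tensor-product subspaces and nothing larger --- together with placing the two full-rank hypotheses ($\calH^1(\calB)=0$ for the $X$ side, full-rank $H_X$ for the $Z$ side) at the correct point; the auxiliary linear-algebra fact that over $\F_2$ having columns in $U$ and rows in $V$ forces membership in $U\otimes V$ is routine but worth stating explicitly.
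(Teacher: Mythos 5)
Your proof is correct, but it takes a genuinely different route from the paper's. The paper's argument is a verification-plus-counting one: it takes a candidate operator from the claimed set, observes it lies in $\ker H'_Z$, rules out membership in $\rs H'_X$ by restricting to a single column $c$ of region $A$ and invoking logical nontriviality of $x_A[\cdot,\{c\}]$ in $\calQ$, and then concludes the claimed space is all of the logical space by counting dimension $k\cdot k_c$ against Theorem~\ref{thm:generalized_thickening}. In contrast, you compute the quotient $\ker H'_Z/\rs H'_X$ from scratch: you first use the full-row-rank hypotheses ($\calH^1(\calB)=0$ for the $X$ side, full-rank $H_X$ for the $Z$ side) to clean every coset down to the slice $x_B=0$, then pin down the kernel condition on $x_A$ via the ``columns in $U$, rows in $V$ $\Rightarrow x\in U\otimes V$'' lemma, and finally identify the residual stabilizer freedom and peel off the tensor-quotient isomorphism. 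Your approach is arguably more self-contained --- it derives rather than assumes the dimension count, and it makes explicit exactly where each rank hypothesis enters --- and it sidesteps a mild subtlety in the paper's column-restriction step (that a nonzero column of a representative of a class in $(\ker H_Z/\rs H_X)\otimes\ker H_C$ must itself be nontrivial modulo $\rs H_X$, which is true but is being used implicitly). The paper's route is shorter precisely because it leans on the pre-established $k'=k\,k_c$. Both are valid; what yours buys is an independent derivation of the logical-operator structure, and in particular of where the two summands of the K\"unneth formula collapse. One small bookkeeping caution for the $Z$ side: after clearing $x_B$ with $w_B$ satisfying $H_X w_B=x_B$, you should (as you implicitly do) recheck that the remaining kernel condition $H_X x_A'=0$ indeed holds for $x_A'=x_A+w_B H_C$; this follows from $H_X x_A+x_B H_C=0$, but it is worth spelling out to close the argument.
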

\begin{proof}
	Let $\vec{x}$ be an $X$ logical operator, i.e., 
	\[\vec{x} \in \left(\frac{\ker H_Z}{\rs H_X} \otimes \ker H_C \;\middle|\; \textbf{0}\right).\]
	It is apparent that $\vec{x}$ commutes with $\calQ'$'s $Z$ stabilizers.
	Suppose that $\vec{x}$ is in fact in the span of the $X$ stabilizers, i.e., there exists a non-trivial vector $\vec{a} \in \F_2^{n_X n_c}$ such that $H_X^{\prime \top} \vec{a} = \vec{x}$ for the sake of contradiction.
	It is possible to assume that $\supp \vec{x}_A \neq \emptyset$ without loss of generality which gives us $\left(H_X^\top \otimes \mathbbm{1}_{n_c}\right)\vec{a} = \vec{x}_A.$
	
	Note that there exists some column $c$ such that $x_A[\cdot, \{c\}] \neq \mathbf{0}$ since $x_A$ is non-trivial by assumption.
	We can interpret $x_A[\cdot, \{c\}]$ as the support of the $X$ logical operator on all qubits in region $A$ that correspond to some column $c$ in the schematic of the generalized thickened quantum code $\calQ'$ shown in Figure~\ref{fig:schematic}.
	Similarly, we can let $a\left[\cdot, \{c\}\right]$ denote the vector over $\F_2^{n_X}$ as the vector that selects the $X$ stabilizer generators that has support on the qubits in region $A$ and column $c$.
	Then, we have	 $H_X^\top\vec{a}\left[\cdot, \{c\}\right] = \vec{x}_A[\cdot, \{c\}].$
	However, $\vec{x}_A[\cdot,\{c\}] \in \ker H_Z / \rs H_X$ and thus does not lie in the row space of $H_X$.
	We have arrived at a contradiction and $\vec{x}$ is not in the span of the $X$ stabilizers of $\calQ'$.

	All that remains is to show that the $X$ logical operators of $\calQ'$ are spanned by the set of logical operators given in the lemma statement.
	The number of elements in $\left(\ker H_Z / \rs H_X\right) \otimes \ker H_C$ is $k \cdot k_c$ which matches the number of logical qubits for $\calQ'$ stated in Theorem~\ref{thm:generalized_thickening}
	Thus, we have shown that space spanned by the $X$ logical operators of $\calQ'$ stated in the lemma is indeed the space of all $X$ logical operators of $\calQ'$.
	The proof for the $Z$ logical operators of $\calQ'$ is similar and is omitted for brevity.
 \end{proof}

In the following lemma, we take inspiration from the proofs in Refs.~\cite{tillich2013quantum,zeng2019higher,manes2023distance} to characterize the logical operators of the generalized thickened quantum code.
In particular, we show that any non-trivial $X/Z$ logical operator of the generalized thickened quantum code has to be supported on a certain number of rows and columns of qubits in the schematic of the generalized thickened quantum code.
This provides a lower bound on the number of rows and columns of qubits that a logical operator of the generalized thickened quantum code has to be supported on.

\begin{lemma}[Generalized Thickened Quantum Code Component Weight Distance]\label{lem:thickened_code_component_weight}
	For any $x,z \in \F_2^{n \times n_c} \oplus \F_2^{n_X \times \left(n_c - k_c\right)}$:
	\begin{enumerate}
		\item if $\vec{x} \in \ker H_Z' \setminus \rs H_X'$, then $\left|\supp x_A\right| \geq d_X \cdot d_c$ with $\left|x_A\right|_C \geq d_c$ and $\left|x_A\right|_R \geq d_X$.
		\item if $\vec{z} \in \ker H_X'\setminus \rs H_Z'$, then $\left|\supp z_A\right| \geq d_Z$ with $\left|z_A\right|_R \geq d_Z$.
	\end{enumerate}
\end{lemma}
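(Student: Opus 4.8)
The plan is to unpack the kernel conditions for $H_Z'$ and $H_X'$ into matrix identities for the binary matrices $x_A\in\F_2^{n\times n_c}$, $x_B\in\F_2^{n_X\times(n_c-k_c)}$ (resp.\ $z_A,z_B$), and then reason column by column, combining the code structure of $\calQ$ with that of $\calC$; note that both parts must hold for an \emph{arbitrary} representative in the given coset, not just a minimum-weight one. Throughout I assume, as is standard, that $H_X$, $H_Z$ and $H_C$ have full row rank, so that $H_X^\top$ is injective and $M\mapsto MH_C$ is injective. For part~(1), expanding $\vec x\in\ker H_Z'$ in the block form of \eqref{eq:parity_check_matrix_thickened_Z} gives that every column of $x_A$ lies in $\ker H_Z$ and that $x_AH_C^\top=H_X^\top x_B$. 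The first observation is that $\vec x\notin\rs H_X'$ forces some column of $x_A$ to lie outside $\rs H_X$: if every column were in $\rs H_X$ we could write $x_A=H_X^\top a$, whereupon $x_AH_C^\top=H_X^\top x_B$ and injectivity of $H_X^\top$ give $x_B=aH_C^\top$, so $\vec x=H_X^{\prime\top}\vec a\in\rs H_X'$, a contradiction. Such a column lies in $\ker H_Z\setminus\rs H_X$ and hence has weight at least $d_X$, giving $\left|x_A\right|_R\ge d_X$.

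For the column weight I would push $x_A$ through the quotient $\pi\colon\ker H_Z\to\ker H_Z/\rs H_X=\calH^1(\calQ)$ applied to each column, producing a matrix $\bar x_A$ whose $c$-th column is nonzero precisely when the $c$-th column of $x_A$ is nontrivial modulo $\rs H_X$. Since $\pi$ commutes with right-multiplication by $H_C^\top$, and the columns of $H_X^\top x_B$ lie in $\im H_X^\top$, applying $\pi$ columnwise to $x_AH_C^\top=H_X^\top x_B$ yields $\bar x_AH_C^\top=0$, i.e.\ every row of $\bar x_A$ is a codeword of $\calC$. As $\bar x_A\neq0$ by the previous step, it has a nonzero row, of weight at least $d_c$; hence at least $d_c$ columns of $x_A$ are nontrivial mod $\rs H_X$, and each such column has weight at least $d_X$. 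Since these columns are distinct, this gives $\left|x_A\right|_C\ge d_c$ and $\left|\supp x_A\right|\ge d_Xd_c$ simultaneously.

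For part~(2), expanding $\vec z\in\ker H_X'$ via \eqref{eq:parity_check_matrix_thickened_X} gives $H_Xz_A=z_BH_C$, which by injectivity of $M\mapsto MH_C$ determines $z_B$ uniquely from $z_A$. I would argue the contrapositive: assuming $\left|z_A\right|_R<d_Z$, I would exhibit a $Z$-stabilizer equal to $\vec z$. The crucial point is that for every classical codeword $u\in\ker H_C$ one has $H_X(z_Au)=(z_BH_C)u=0$, so $z_Au\in\ker H_X$, while $\mathrm{wt}(z_Au)\le\left|z_A\right|_R<d_Z$; hence $z_Au\in\rs H_Z$. Now a $Z$-stabilizer $H_Z^{\prime\top}(\vec b_T\oplus\vec b_B)$ has region-$A$ part $H_Z^\top b_T+b_BH_C$ and region-$B$ part $H_Xb_B$, so it is enough to find $b_B$ making every column of $z_A-b_BH_C$ lie in $\rs H_Z$: then $b_T$ is obtained column by column, and multiplying $z_A=H_Z^\top b_T+b_BH_C$ by $H_X$ and comparing with $H_Xz_A=z_BH_C$ forces $H_Xb_B=z_B$ automatically. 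Since the linear dependencies among the columns of $H_C$ are exactly the codewords $u\in\ker H_C$, the requirement on $b_B$ unwinds to: choose residues $r_c\in\rs H_Z$ (one per column $c$) so that the linear extension of $c\mapsto r_c$ agrees on $\ker H_C$ with $u\mapsto z_Au$. The displayed fact $z_Au\in\rs H_Z$ is precisely what makes $u\mapsto z_Au$ a well-defined linear map $\ker H_C\to\rs H_Z$, which then extends to all of $\F_2^{n_c}$; so $b_B$ exists, $\vec z\in\rs H_Z'$, a contradiction. Therefore $\left|z_A\right|_R\ge d_Z$, and $\left|\supp z_A\right|\ge\left|z_A\right|_R\ge d_Z$.

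I expect the technical heart to be the cleaning construction in part~(2): arranging the linear system for $b_B$ so that the codeword-weight bound becomes exactly its solvability condition, while keeping track of the interaction between the two qubit regions $A$ and $B$. The other point needing care is verifying in part~(1) that $\pi$ really does commute with right-multiplication by $H_C^\top$, so that the rows of $\bar x_A$ land in $\ker H_C$ and the classical distance $d_c$ may be invoked.
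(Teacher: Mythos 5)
Your proof is correct, and it takes a noticeably different (and in places cleaner) route than the paper's. For part~(1), the paper leans on Lemma~\ref{lem:logical_operators_thickened} (the explicit span of the $X$ logicals) and then projects an arbitrary logical-plus-stabilizer onto each column with unit vectors $\vec c_k$ to extract $d_X$ per column; you instead work directly from the two kernel identities $H_Z x_A = 0$ and $x_A H_C^\top = H_X^\top x_B$, pass to the quotient $\pi\colon\ker H_Z\to\ker H_Z/\rs H_X$ columnwise, and read off that the rows of $\bar x_A$ are codewords of $\calC$ while $\bar x_A\neq 0$. This handles arbitrary coset representatives uniformly without ever invoking the logical-operator description, which is a genuine simplification. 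For part~(2), the divergence is larger: the paper restricts the quantum and classical codes to the supported rows $R_A$ and columns $C_B$, argues that the restricted quantum code $\calQ(R_A)$ encodes no logical qubit when $|R_A|<d_Z$, invokes Theorem~\ref{thm:generalized_thickening} to transport this to the restricted thickened code $\calQ^\ast$, and finally pulls $\vec z$ back into $\rs H_Z'$. You bypass all of that: from $H_X z_A = z_B H_C$ and the fact that $z_A u \in\ker H_X$ has weight below $d_Z$ for every $u\in\ker H_C$, you show the map $v\mapsto [z_A v]$ into $\F_2^n/\rs H_Z$ factors through the surjection $H_C$, and then explicitly construct $(b_T,b_B)$ with $H_Z'^\top(\vec b_T\oplus\vec b_B)=\vec z$, verifying the region-$B$ equation $H_X b_B = z_B$ falls out of $H_X H_Z^\top=0$ and the injectivity of right-multiplication by $H_C$. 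This is more elementary (it does not reuse Theorem~\ref{thm:generalized_thickening} on a restricted instance, and so sidesteps checking that restricting $H_C$ to $C_B$ still satisfies the hypotheses of that theorem), and it produces the stabilizer preimage constructively rather than abstractly. The one place to tighten the write-up is the sentence about the ``linear extension of $c\mapsto r_c$ agreeing on $\ker H_C$ with $u\mapsto z_Au$'': what you actually want is that $v\mapsto [z_A v]$ vanishes on $\ker H_C$ and hence factors through $H_C$, from which a section gives $b_B$; the phrasing about extending a map off $\ker H_C$ is misleading even though the conclusion is right.
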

\begin{proof}
	We begin by proving the first statement.
	From Lemma~\ref{lem:logical_operators_thickened}, the $X$ logical operators of the generalized thickened quantum code $\calQ'$ are spanned by $\left(\frac{\ker H_Z}{\rs H_X} \otimes \ker H_C \;\middle|\; \textbf{0}\right)$.
	It is clear from the above expression that any combination of $X$ logical operators (without any $X$ stabilizers) will have support on at least $d_X \cdot d_c$ qubits in region $A$ because $d_X' = d_X \cdot d_c$.
	Now, consider $x \in \F_2^{n \times n_c} \oplus \F_2^{n_X \times \left(n_c - k_c\right)}$ such that $\vec{x} = \sum_i \vec{x}_i \otimes \vec{v}_i$ where each $\vec{x}_i$ is some non-trivial $X$ logical operator representative of $\calQ$ and $\vec{v}_i$ is some codeword of the classical code $\calC$.
	Then, $x_A$ has weight at least $d_X \cdot d_c$.
	Let $\left(\vec{s}_x\right)_A = \sum_{j} \vec{s}_{x_{j}} \otimes \vec{c}_j$ be some arbitrary combination of $X$ stabilizer generators of $\calQ'$ with support restricted to the region $A$ where $\vec{s}_{x_j} \in \row\left(H_X\right)$ and $\vec{c}_j$ is some unit vector of dimension $n_c$ with the sum modulo 2 running over the arbitrary subset of the $X$ stabilizer generators of $\calQ'$.
	
	Because $\vec{x} \in \ker H_Z'$, we obtain $\left|x_A\right|_C \geq d_c$ since the non-trivial codewords of $\calC$ lies in the kernel of $H_C$ and have weight at least $d_c$.
	Thus, there exists at least $d_c$ different unit vectors of dimension $n_c$, i.e., $\vec{c}_k \in \F_2^{n_c}$, such that $\left\langle\vec{v}_i , \vec{c}_k\right\rangle = 1$ for some values of $i$.
	Then, for any one of these $\vec{c}_k$, we have that
	\begin{align}
		\left|\left(\mathbbm{1}_{n} \otimes \vec{c}_k^\top\right)\left(\left(\vec{s}_x\right)_A + \vec{x}_A\right)\right| &= \left|\left(\mathbbm{1}_{n} \otimes \vec{c}_k^\top\right)\left(\sum_{j} \vec{s}_{x_{j}} \otimes \vec{c}_j + \sum_i \vec{x}_i \otimes \vec{v}_i\right)\right| \\
		&= \left|\sum_{j:\;\left\langle \vec{c}_j, \vec{c}_k\right\rangle = 1} \vec{s}_{x_j} + \sum_{i:\;\left\langle \vec{v}_i,\vec{c}_k\right\rangle = 1} \vec{x}_i\right| \\
		&\geq d_X.
	\end{align}
	The application of $\left(\mathbbm{1}_{n} \otimes \vec{c}_k^\top\right)$ can be interpreted as us selecting the qubits in the column corresponding to $\vec{c}_k$ in the region $A$ that lie in the support of our $X$ logical operator.
	The last inequality comes from the fact that any combination of $X$ logical operators and $X$ stabilizers of $\calQ$ has weight at least $d_X$.
	In other words, for at least $d_c$ different columns $k \in \left[n_c\right]$ in the region $A$, $x_A + \left(s_X\right)_A$ is supported on at least $d_X$ qubits.
	This implies that $\left|x_A + \left(s_X\right)_A\right|_R \geq d_X$ and $\left|x_A + \left(s_X\right)_A\right|_C \geq d_c$ as well as $\left|\supp\left(x_A + \left(s_X\right)_A\right)\right| \geq d_X \cdot d_c$.
	Thus, we have shown the first statement of the lemma.

	We now prove the second statement of the lemma.
	Our goal is to show that if $\vec{z} \in \ker H_X'$ and $\left|z_A\right|_R < d_Z$, then $\vec{z}$ lies in $\rs H'_Z$.
	Let $R_A \subseteq [n]$ be the set of rows in region $A$, i.e., $\F_2^{n \times n_c}$, such that for any $r \in R_A$, there exists some $c \in [n_c]$ such that $z_A[\{r\},\{c\}] = 1$.
	In other words, $R_A$ is the set of rows in region $A$ where $z_A$ has non-trivial support.
	Then, we have that $\left|z_A\right|_R = |R_A|$.
	Likewise, let $C_B \subseteq \left[n_c - k_c\right]$ be the set of columns in region $B$, i.e., $\F_2^{n_X \times \left(n_c - k_c\right)}$, such that for any $c' \in C_B$, there exists some $r' \in [n_X]$ such that $z_B[\{r'\},\{c'\}] = 1$.

	Let $H_X|_{R_A}$ denote the submatrix of $H_X$ with columns restricted to the qubits indexed by $R_A$.
	In other words, $H_X|_{R_A}$ is $H_X$ restricted to the qubits in the original quantum code $\calQ$ that correspond to the rows in region $A$ where $z_A$ has non-trivial support, i.e.,
	\[\row\left(H_X|_{R_A}\right) = \left\{\vec{r}\left[R_A\right] \in \F_2^{\left|R_A\right|}\;\middle|\; \vec{r} \in \row\left(H_X\right)\right\}\]
	where $\vec{r}\left[R_A\right]$ denotes the restriction of $\vec{r} \in \F_2^{n}$ to the indices in $R_A$.
	On the other hand, let $H_Z|^{R_A}$ denote the submatrix of $H_Z$ such that
	\[\row\left(H_Z|^{R_A}\right) = \left\{\vec{r}\left[R_A\right] \in \F_2^{\left|R_A\right|}\;\middle|\; \vec{r} \in \row\left(H_Z\right), \vec{r}\left[[n] \setminus R_A\right] = \textbf{0}\right\}.\]
	In other words, $H_Z|^{R_A}$ is $H_Z$ without rows that have non-trivial support outside of $R_A$ and restricted to the columns corresponding to qubits indexed by $R_A$.
	Now, we denote $\calQ\left(R_A\right)$ to be the quantum code with stabilizer matrices $H_X|_{R_A}$ and $H_Z|^{R_A}$.
	At the same time, let $H_C^\top|_{C_B}$ denote the submatrix of $H_C^\top$ with columns restricted to the classical checks indexed by $C_B$.
	From our assumption that $\left|z_A\right|_R < d_Z$, we have that $\frac{\ker H_X|_{R_A}}{\rs H_Z|^{R_A}}$ is trivial which implies that $\calQ\left(R_A\right)$ encodes no logical qubit.
	By Theorem~\ref{thm:generalized_thickening}, the generalized thickened quantum code $\calQ\left(H_X|_{R_A}, H_Z|^{R_A}, H_C|_{C_B}\right)$, which we denote as $\calQ^{\ast}$ for brevity, has no logical qubit.
	Then, denoting $H_X^\ast$ and $H_Z^\ast$ as the stabilizer matrices of $\calQ^{\ast}$ that can be obtained by adapting \eqref{eq:parity_check_matrix_thickened_X} and \eqref{eq:parity_check_matrix_thickened_Z}, we conclude that $\ker H_X^\ast = \rs H_Z^\ast$.
	
	Let $z|_{R_A \cup C_B}$ be the following:
	\begin{align}
		z|_{R_A \cup C_B} = z_A|_{R_A} \oplus z_B|_{C_B}
	\end{align}
	where $z|_{R_A \cup C_B}$ is the restriction of $z$ to the qubits in the rows $R_A$ of the region $A$ and qubits in the columns $C_B$ of the region $B$.
	Since the non-trivial components of $\vec{z}|_{R_A \cup C_B}$ lie completely in the domain of $H_X^\ast$ and $H_Z^\ast$, $\vec{z}|_{R_A \cup C_B} \in \ker H_X^\ast$ and $\vec{z}|_{R_A \cup C_B} \in \rs H_Z^\ast$.
	In other words, there exists some $\vec{a} \in \F_2^{n_c\left|\row\left(H_Z|^{R_A}\right)\right| + \left|R_A\right|\left|C_B\right|}$ such that $\vec{z}|_{R_A \cup C_B} = H_Z^{\ast \top} \vec{a}$.
	Since $H_Z^\ast$ is formed by a restriction of the domain of $H_Z'$, i.e., the stabilizer matrix of the thickened quantum code $\calQ'$, there exists some $a' \in \F_2^{n_z n_c + n\left(n_c - k_c\right)}$ such that $\vec{z} = H_Z'^{\top} \vec{a}'$.
	In fact, $\vec{a}'$ can be obtained from $\vec{a}$ by appending zeros to the indices that correspond to stabilizer generators that were not present in $H_Z^\ast$.
	Thus, $\vec{z} \in \rs H_Z'$.

	\end{proof}

	Now, we are ready to prove the effective distance preservation of the generalized thickened quantum code.
	We show that the effective distance of the generalized thickened quantum code is preserved under the thickening operation by combining the upper bound on the data qubit errors that emerge as a result of hook errors and the lower bound on the component weight of the logical operators of the generalized thickened quantum code.

	\begin{theorem}[Effective Distance of Generalized Thickened Quantum Code]\label{thm:effective_distance_preservation_thickened_code}
		Let $Q\left(H_X, H_Z\right)$ be a quantum code with distances $d_X$ and $d_Z$ that has effective distances $\overline{d}^{M}_X$ and $\overline{d}^{M}_Z$ for some single ancilla stabilizer measurement schedule $M$.
		Let $Q'\left(H_X, H_Z, H_C\right)$ be the generalized thickened quantum code that is the result of a product between $Q\left(H_X, H_Z\right)$ and the classical code $\calC = \left[n_c, k_c, d_c\right]$ with parity check matrix $H_C$.
		Then, there exists some single ancilla syndrome measurement schedule $M'$ such that the effective distances of the generalized thickened quantum code $Q'\left(H_X, H_Z, H_C\right)$ is given by $\overline{d'}^{M'}_X = \overline{d}^{M}_X \cdot d_c$ and $\overline{d'}^{M'}_Z = \overline{d}^{M}_Z$. 
		\end{theorem}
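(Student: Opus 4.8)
The plan is to build an explicit schedule $M'$ for $\calQ'$ by lifting $M$ one column at a time in the schematic of Figure~\ref{fig:schematic}, and then to prove matching upper and lower bounds on $\overline{d'}^{M'}_X$ and $\overline{d'}^{M'}_Z$ by \emph{projecting} a fault configuration of $M'$ back onto a single column, which is (essentially) a copy of $\calQ$. Index the $X$ generators of $\calQ'$ by $(a,c)\in[n_X]\times[n_c]$, the $Z[T]$ generators by $(z,c)\in[n_Z]\times[n_c]$, and the $Z[B]$ generators by $(r,c')\in[n]\times[n_c-k_c]$. The schedule $M'$ processes columns one at a time: for each $c$, measure the $X$ generators $\{(a,c)\}_a$ in the order $M$ uses for the $X$ generators of $\calQ$, then the $Z[T]$ generators $\{(z,c)\}_z$ in $M$'s order for $\calQ$'s $Z$ generators; finally measure all $Z[B]$ generators in any fixed order. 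Inside each generator, order the entangling gates that touch region $A$ in the same relative order as the corresponding generator of $\calQ$ uses in $M$, and place the region-$B$ gates of each $X$ generator before its region-$A$ gates. Because every row of $H_X'$ has region-$A$ support confined to one column and every row of $H_Z'$ has region-$A$ support confined to one column ($Z[T]$) or one row ($Z[B]$), this yields the key \textbf{projection property}: deleting from any fault configuration of $M'$ all faults whose region-$A$ effect misses column $c$ and then reading off the region-$A$/column-$c$ component produces a legitimate fault configuration of $M$ for $\calQ$; a hook fault on an $X$ or $Z[T]$ generator of column $c$ maps to the analogous hook fault of $M$, while region-$B$ data faults, $Z[B]$ hook faults, and all faults in other columns contribute at most one single-qubit data fault each.

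For the \textbf{lower bounds}, suppose $t$ elementary faults in $M'$ leave a residual error with $X$-part $x$ and $Z$-part $z$. If $\vec x\in\ker H_Z'\setminus\rs H_X'$, Lemma~\ref{lem:thickened_code_component_weight} supplies at least $d_c$ columns $c$ for which $x_A[\cdot,c]$, corrected by some $X$ stabilizers of $\calQ$, is a nontrivial $X$ logical of $\calQ$; by the projection property the faults assigned to such a column form an $M$-configuration realizing that logical, hence number at least $\overline{d}^M_X$, and since the per-column assignments are disjoint, $t\ge d_c\cdot\overline{d}^M_X$. If $\vec z\in\ker H_X'\setminus\rs H_Z'$, Lemma~\ref{lem:logical_operators_thickened} furnishes an $X$ logical $\vec m\otimes\vec n$ of $\calQ'$ detecting $[\vec z]$, with $\vec m\in\ker H_Z\setminus\rs H_X$ and $\vec n\in\ker H_C$; since $\vec m\otimes\vec n$ vanishes on region $B$ we get $\sum_{c\in\supp\vec n}\langle z_A[\cdot,c],\vec m\rangle=1$, so some column $c^*$ has $\langle z_A[\cdot,c^*],\vec m\rangle=1$ and therefore $z_A[\cdot,c^*]\notin\rs H_Z$; after cleaning $\vec z$ to a region-$A$ representative using the $Z[B]$ generators and region-$B$ qubits, that column restriction also lies in $\ker H_X$, hence is a nontrivial $Z$ logical of $\calQ$, and the projection property forces $t\ge\overline{d}^M_Z$.

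For the \textbf{upper bounds}, take a minimal $M$-configuration realizing an $X$ logical $\vec L$ of $\calQ$ with $\overline{d}^M_X$ faults together with a weight-$d_c$ codeword $\vec w\in\ker H_C$, and replicate the configuration in every column $c\in\supp\vec w$. Because the region-$B$ gates of the $X$ generators precede their region-$A$ gates, the replicated hook faults stay inside region $A$, so the resulting error is $\vec L\otimes\vec w$ plus a region-$A$-only combination of $\calQ$'s $X$ stabilizers, of the form $(H_X^\top\vec\mu)\otimes\vec w$; from $H_C\vec w=0$ and $H_X'^\top(\vec\mu\otimes\vec w)=\bigl((H_X^\top\vec\mu)\otimes\vec w\bigr)\oplus(\vec\mu\otimes H_C\vec w)$, this correction lies in $\rs H_X'$, so the net error equals the nontrivial logical $\vec L\otimes\vec w$ and is produced by exactly $d_c\cdot\overline{d}^M_X$ faults; combined with the lower bound this gives $\overline{d'}^{M'}_X=\overline{d}^M_X\cdot d_c$. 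The $Z$ upper bound is the identical argument carried out inside one column, using logicals $\vec L\otimes\vec e$ with $\vec e\in\F_2^{n_c}$ and $[\vec e]\neq 0$ in $\F_2^{n_c}/\rs H_C$, giving $\overline{d'}^{M'}_Z=\overline{d}^M_Z$.

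The \textbf{main obstacle} is the $Z$ lower bound: unlike the $X$ case, a column restriction $z_A[\cdot,c^*]$ of a $Z$-type error need not \emph{a priori} lie in $\ker H_X$, because of the coupling into region $B$ through the $\mathbbm{1}_{n_X}\otimes H_C^\top$ block of $H_X'$, so the cleaning step in the previous paragraph must be carried out with care — essentially re-running, now at the level of fault configurations, the sub-code restriction to the rows of $z_A$ and columns of $z_B$ used in the proof of Lemma~\ref{lem:thickened_code_component_weight}, and checking that each $Z[B]$ hook fault and each region-$B$ data fault contributes at most one projected data fault so that no faults are lost in the accounting. The remaining bookkeeping in the $X$ case (disjointness of the per-column assignments, and the fact that each column's projection is independently a valid $M$-configuration) is routine given the block structure of $H_X'$ and $H_Z'$.
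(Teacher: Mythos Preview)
Your proposal is correct and follows essentially the same strategy as the paper: lift $M$ column-by-column to build $M'$, then use the block structure of $H_X'$ and $H_Z'$ (Proposition~\ref{prop:elementary_faults_bounded_component_weight}) together with the column/row support bounds of Lemma~\ref{lem:thickened_code_component_weight} to count faults per column.

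A few differences are worth noting. First, you place the region-$B$ entangling gates of each $X$ generator \emph{before} its region-$A$ gates, whereas the paper places them after; your choice is what makes your explicit upper-bound construction for $\overline{d'}^{M'}_X$ work cleanly (replicated $X$ hook faults stay inside region~$A$). The paper does not give such a construction and instead appeals to $\overline{d'}^{M'}_X\le d_X'=d_c\cdot d_X$, which strictly speaking only pins down equality when $\overline{d}^M_X=d_X$; your argument closes that gap. Second, for the $Z$ lower bound the paper argues directly via $|z_A|_R\ge d_Z$ and the one-row footprint of $Z[B]$ hooks, while you instead locate a single column $c^*$ on which the restriction is a nontrivial $Z$ logical of $\calQ$ (via anticommutation with some $\vec m\otimes\vec n$) and project the fault configuration there. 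Both routes land on the same count; yours makes the reduction to a copy of $\calQ$ more explicit, at the cost of the cleaning step you flag as the main obstacle. That obstacle is real but manageable exactly as you outline, and it mirrors the restriction argument inside the proof of Lemma~\ref{lem:thickened_code_component_weight}.
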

	\begin{proof}
		Let $\hat{i}$ denotes the unit vector of dimension $n_c$ with 1 at its $i\textsuperscript{th}$ index.
		The single ancilla syndrome measurement schedule $M'$ can be constructed as such:
		\begin{enumerate}
			\item Iterate through the given schedule $M$ for $Q$.
			\begin{enumerate}
				\item If the current stabilizer generator being iterated through is some $s_X$ that corresponds to some row of $H_X$, then for all $i \in [n_c]$, measure the qubits $s_X \otimes \hat{i}$ in the region $A$ in the same order as how it would be measured in $M$.
				Then, measure the corresponding new qubits in the region $B$ in any order.
				\item If the current stabilizer generator being iterated through is some $s_Z$ that corresponds to some row of $H_Z$, then for all $i \in [n_c]$, measure the qubits $s_Z \otimes \hat{i}$ in the region $A$ in the same order as how it would be measured in $M$.
			\end{enumerate}
			\item Measure the $Z$ stabilizer generators that correspond to the rows of $\left(\mathbbm{1}_n \otimes H_C\;\middle|\; H_X^{\top} \otimes \mathbbm{1}_{n_c - k_c}\right)$ in any order.
		\end{enumerate}

		From Lemma~\ref{lem:thickened_code_component_weight}, we know that any $X$ logical operator $\vec{x}$ of $Q'$ has $\left|\supp x_A\right| \geq  d_c \cdot d_X$ with $\left|x_A\right|_C \geq  d_c$ and $\left|x_A\right|_R \geq d_X$.
		Define $x_{A[\cdot, i]}$ to be the restriction of $x_A$ to the $i$th column of region $A$.
		Since $\left|\supp x_A\right| \geq d_c\cdot d_X$ and $\left|x_A\right|_C \geq d_c$, $\left|x_{A[\cdot, i]}\right| \geq d_X$ for at least $d_c$ different $i \in [n_c]$.
		By Proposition~\ref{prop:elementary_faults_bounded_component_weight}, we know that $\left|\left(s_X\right)_A\right|_C = 1$ for each $X$ stabilizer generator $s_X$ that lies in the row of $H_X \otimes \mathbbm{1}_{n_c}$.
		Thus, we can analyze the elementary faults required to generate $x_{A[\cdot, i]}$ separately for each $i \in [n_c]$.
		Consider the subset of $X$ stabilizer generators of $Q'$ that are described by the rows that correspond to $H_X \otimes \hat{i}$.
		From the fact that the effective distance of $Q$ is $\overline{d}^M_X$ given $M$, we know that the number of elementary faults on  qubits in $A\left[\cdot, i\right]$ and the subset of $X$ stabilizer generators required to generate $x_{A[\cdot, i]}$ is at least $\overline{d}^M_X$. 
		Since we have at least $d_c$ such values of $i$ (columns), the total number of elementary faults required to generate $x_A$ is at least $d_c \cdot \overline{d}^M_X$ based on the schedule $M'$.
		Thus, we have shown that the effective distance of the thickened quantum code $Q'$ is at least $d_c \cdot \overline{d}^M_X$.
		Since the effective distance is upper bounded by the actual distance of the code, we have that $\overline{d'}^{M'}_X = d_c \cdot \overline{d}^M_X$.

		We can show that $\overline{d'}^{M'}_Z = \overline{d}^M_Z$ in a similar way.
		From Lemma~\ref{lem:thickened_code_component_weight}, we know that any $Z$ logical operator $\vec{z}$ of $Q'$ has $\left|\supp z_A\right| \geq d_Z$ with $\left|z_A\right|_R \geq d_Z$.
		By Proposition~\ref{prop:elementary_faults_bounded_component_weight}, we know that $\left|\left(s_{Z[B]}\right)_A\right|_R = 1$ for each $Z$ stabilizer generator of $s_{Z[B]}$, i.e., corresponding to the rows of $\left(\mathbbm{1}_n \otimes H_C \;\middle|\; H_X^{\top} \otimes \mathbbm{1}_{n_c - k_c}\right)$.
		Thus, single qubit errors and hook errors from the $Z$ stabilizer generators of $s_{Z[B]}$ are functionally the same in the sense that each fault impacts at most a single row and we need at least $d_Z$ of such elementary faults to generate $z_A$.
		However, recall that the effective distance of $Q$ is $\overline{d}^M_Z$ given $M$ and $\left|\left(s_{Z[T]}\right)_A\right|_C = 1$ for each $Z$ stabilizer generator of $s_{Z[T]}$, i.e., corresponding to the rows of $\left(H_Z \otimes \mathbbm{1}_{n_c} \;\middle|\; \textbf{0}\right)$.
		This implies that at least $\overline{d}^M_Z$ elementary faults are required on qubits in $A$ and $Z$ stabilizer generators in $s_{Z[T]}$ using schedule $M'$ to form a $Z$ operator that has support on at least $d_Z$ rows.
		Since $\overline{d}^M_Z$ is upper bounded by $d_Z$, we have shown that $\overline{d'}_Z^{M'} = \overline{d}^M_Z$ for the thickened quantum code $Q'$ when considering all $Z$ stabilizer generators of $Q'$ and single qubit errors.
	\end{proof}

Because Theorem~\ref{thm:effective_distance_preservation_thickened_code} holds for generalized thickening, it also holds for the specific thickening technique constructed by Hastings using a classical $[\ell, 1, \ell]$ repetition code.
While the theorem does not explicitly mention how choosing heights might affect the effective distance of the thickened code $\calQ'$, it is relatively straightforward to see how the distance gained from thickening is preserved by the tensor product structure of the thickened code even after choosing heights.
Because choosing heights removes redundant $Z$ stabilizer generators in $\left(H_Z \otimes \mathbbm{1}_\ell\;\middle|\;\textbf{0}\right)$, the new set of possible hook errors from the remaining stabilizer generators has to be contained within the original set of possible hook errors.
Thus, the effective distance of the thickened code $\calQ'$ cannot decrease after choosing heights. 

\begin{figure}[h]
	\centering
	\begin{tikzpicture}
		\draw (0,0) rectangle (3,3); 
		\node at (1.5,1.5) {Redundant};
		\draw (3.4,3.4) rectangle (6.4,6.4); 
		\node at (4.9,4.9) {$A$ (qubits)};
		\draw (3.4,0) rectangle (6.4,3); 
		\node at (4.9,1.5) {$X$ stabilizers};
		\draw (3.4,6.8) rectangle (6.4,9.8); 
		\node at (4.9,8.3) {$Z$ stabilizers};
		\draw (0,3.4) rectangle (3,6.4); 
		\node at (1.5,4.9) {$X$ stabilizers};
		\draw (0,6.8) rectangle (3,9.8); 
		\node at (1.5,8.3) {$B$ (qubits)};
		\draw (-0.5,0) edge[-] (-0.5,3); 
		\node at (-0.8,1.5) {$n_X$};
		\draw (-0.5,3.4) edge[-] (-0.5,6.4); 
		\node at (-0.8,4.9) {$n$};
		\draw (-0.5,6.8) edge[-] (-0.5,9.8); 
		\node at (-0.8,8.3) {$n_Z$};
		\draw (0,-0.5) edge[-] (3,-0.5); 
		\node at (1.5,-0.75) {$n_c$};
		\draw (3.4,-0.5) edge[-] (6.4,-0.5); 
		\node at (4.9,-0.75) {$n_c - k_c$};
	\end{tikzpicture}
	\caption{Schematic for the generalized thickened quantum code when $d_Z$ increases instead of $d_X$. 
	There are $n_c - k_c$ copies of the quantum code $\calQ$ arranged in $n_c - k_c$ columns in the three blocks on the right column.
	For each of the $n_c - k_c$ columns, we have the $Z$ stabilizer generators and $X$ stabilizer generators belonging to that column acting on the qubits in the same column.
	Similarly, there are $n_Z$ copies of the classical code $\calC$ arranged in $n_Z$ rows in the two blocks in the top row.
	For each of the $n_Z$ rows, we have the $Z$ stabilizer generators belonging to that row acting on the qubits in the same row which is reminiscent to how the checks of the classical code act on the classical bits.
	For the qubits in the region $B$, we have the $X$ stabilizer generators from the middle-left block and $Z$ stabilizer generators acting on the qubits in the same column and row respectively. 
	}
 \label{fig:alt_schematic}
\end{figure}

In addition, we note that Theorem~\ref{thm:effective_distance_preservation_thickened_code} can be easily adapted to show that the effective $Z$ distance is ``thickened'' and preserved when we perform the following dual mapping instead:
	\begin{center}
		\begin{tabular}{|c|c|c|}
			\hline
			Vector Spaces & Original Theorem & Adapted Theorem \\
			\hline
			$\F_2^{n_X} \otimes \F_2^{n_c}$ & $X$ stabilizers & undefined \\
			\hline
			$\left(\F_2^{n} \otimes \F_2^{n_c}\right) \oplus \left(\F_2^{n_X}\otimes \F_2^{n_c - k_c}\right)$ & qubits  & X stabilizers \\
			\hline
			$\left(\F_2^{n_Z} \otimes \F_2^{n_c}\right) \oplus \left(\F_2^{n}\otimes \F_2^{n_c - k_c}\right)$ & Z stabilizers  & qubits \\
			\hline
			$\F_2^{n_Z} \otimes \F_2^{n_c - k_c}$ & undefined & Z stabilizers \\
			\hline
		\end{tabular}
	\end{center}
	We provide a geometric picture of the generalized thickening that is done in a different ``basis'' in Fig.~\ref{fig:alt_schematic} to help with the intuition.
	By pegging the stabilizer generators and qubits to the cells that are one-dimensional higher, we can observe that the $Z$ distance is now thickened instead due to the additional $X$ stabilizer generators and a similar argument to Theorem~\ref{thm:effective_distance_preservation_thickened_code} can be made to show that the gain in effective $Z$ distance is preserved.

\subsection{Distance Preservation of Higher-dimensional Hypergraph Product Codes}
Theorem~\ref{thm:effective_distance_preservation_thickened_code} allows us to make another observation regarding higher-dimensional HGP codes.
Before we get to the observation, we first restate a theorem that detail the distances of HGP codes as well as a recent result that shows the robustness of the HGP code's effective distance with respect to single-ancilla syndrome extraction circuits.

\begin{theorem}
	[{\cite[Distances of HGP Codes]{tillich2013quantum}}]\label{thm:distance_HGP}
	For $i \in \{1, 2\}$, let $d_i$ and $d_i^\top$ be the minimum distances of classical codes $\calC_i$ and $\calC_i^\top$ with check matrices $H_{C_i}$ and $H_{C_i}^\top$ respectively. 
	Denote $d_X$ and $d_Z$ to be the the $X$ and $Z$ distance of the HGP code $\calQ$ that is a result of the tensor product of the complexes corresponding to $\calC_1$ and $\calC_2^\top$.
	Using the convention that the distance of a code that only has the all-zero codeword is $\infty$, we have 
	\begin{align}
		&d_X \geq \min\left(d_1, d_2\right)\quad\text{with}\quad d_X=\begin{cases}
		d_1 & \text{if}\quad d_1 \leq d_2\;\text{and}\;d_2^\top \neq \infty, \\
		d_2 & \text{if}\quad d_1 \geq d_2\;\text{and}\;d_1^\top \neq \infty, \\
		\end{cases}  \\
		&d_Z \geq \min\left(d_1^\top, d_2^\top\right) \quad\text{with}\quad d_Z=\begin{cases}
		d_1^\top & \text{if}\quad d_1^\top \leq d_2^\top\;\text{and}\;d_2 \neq \infty, \\
		d_2^\top & \text{if}\quad d_1^\top \geq d_2^\top\;\text{and}\;d_1 \neq \infty. \\
		\end{cases}
	\end{align}
\end{theorem}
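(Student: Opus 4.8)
\emph{Proof strategy.} The plan is to reprove this Tillich--Z\'emor statement by the standard two-sided argument, which is the same in spirit as the component-weight analysis of Lemma~\ref{lem:thickened_code_component_weight} specialized to the tensor product of two classical $1$-complexes. First I would write $\calQ$ as the $2$-complex obtained from the product of the $1$-complexes attached to $\calC_1$ and $\calC_2^\top$, so that the qubits of $\calQ$ fall into two blocks (one indexed by pairs of bit-coordinates, one by pairs of check-coordinates, as in Figure~\ref{fig:schematic}) and every Pauli operator is recorded as a pair of binary matrices over these grids.

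For the upper bounds I would exhibit explicit logical representatives. By the K\"unneth theorem (Theorem~\ref{thm:thickening_distance}), equivalently by the analogue of Lemma~\ref{lem:logical_operators_thickened} for a product of two $1$-complexes, the $X$-logical space of $\calQ$ is spanned, modulo $X$-stabilizers, by operators of the form $c\otimes\hat{j}$ and $\hat{i}\otimes c'$, where $c$ ranges over codewords of $\calC_1$, $c'$ over codewords of $\calC_2$, and $\hat{i},\hat{j}$ are unit vectors supported on the appropriate block; the $Z$-logical space is built analogously from $\calC_1^\top$ and $\calC_2^\top$. Taking $c$ of weight $d_1$ (resp.\ $c'$ of weight $d_2$) produces a logical of weight $d_1$ (resp.\ $d_2$); the one subtlety is to check that it lies outside $\rs H_X$. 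A unit vector $\hat{j}$ can be absorbed into a stabilizer exactly when the corresponding transpose code is trivial --- that is, when the relevant check matrix has full column rank --- and this is precisely the hypothesis ``$d^\top=\infty$''. Hence $d_X\le d_1$ when $d_1\le d_2$ and $d_2^\top\neq\infty$, and $d_X\le d_2$ when $d_1\ge d_2$ and $d_1^\top\neq\infty$, with the dual statements for $d_Z$; combined with the lower bounds below this gives the exact values claimed.

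For the lower bound $d_X\ge\min(d_1,d_2)$ I would argue, exactly as in Lemma~\ref{lem:thickened_code_component_weight}, that any $\vec{x}\in\ker H_Z\setminus\rs H_X$ has weight at least $\min(d_1,d_2)$: viewing $\vec{x}$ as a matrix pair, either a nonzero column of one block is a nonzero codeword of $\calC_1$ (weight $\ge d_1$), or a nonzero row of the other block is a nonzero codeword of $\calC_2$ (weight $\ge d_2$); otherwise one restricts $H_X$ and $H_Z$ to the rows and columns actually touched by $\vec{x}$, notes that the resulting (smaller) hypergraph product code encodes no logical qubit because fewer than $d_1$ rows --- and fewer than $d_2$ columns --- are involved, deduces that the kernel equals the row space on that restricted support, and concludes $\vec{x}\in\rs H_X$, a contradiction. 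The $d_Z$ bound is identical with every code replaced by its transpose. The hard part --- and the same one as in Lemma~\ref{lem:thickened_code_component_weight} --- is this last reduction: correctly identifying the restricted sub-code and tracking stabilizer equivalences across the two qubit blocks; keeping straight which facts need the ``$\neq\infty$'' hypotheses (they enter only for the exact value, not for the $\ge$ bound) also takes some care. This is, in the end, the argument of Ref.~\cite{tillich2013quantum}.
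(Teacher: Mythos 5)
The paper does not actually prove Theorem~\ref{thm:distance_HGP}: it is stated as an imported result, cited to Ref.~\cite{tillich2013quantum}, and no \texttt{proof} environment follows it in the source. So there is no in-paper argument to compare your sketch against; what you have written is a plan to \emph{re-derive} the Tillich--Z\'emor bound from scratch, explicitly modelled on the paper's own Lemma~\ref{lem:thickened_code_component_weight}. That plan is structurally sound and correctly identifies where each ``$d^\top\neq\infty$'' hypothesis is used (for the exact value, not the lower bound), but for the purposes of this paper it is more work than the authors do, since they simply quote the theorem.

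On the substance of the sketch: the upper-bound direction is fine in outline. The lower-bound direction is where you should be careful, because the dichotomy you lead with --- ``either a nonzero column of one block is a nonzero codeword of $\calC_1$, or a nonzero row of the other block is a nonzero codeword of $\calC_2$'' --- is not true as stated for an arbitrary element of $\ker H_Z\setminus\rs H_X$. The stabilizer-cleaning freedom means that a given low-weight logical representative need not have any full classical codeword sitting in a single row or column; that is precisely why the restriction argument exists. The correct route (which you do then gesture at, and which mirrors the second half of the paper's proof of Lemma~\ref{lem:thickened_code_component_weight}) is to take a putative logical $\vec{x}$ with fewer than $d_1$ nonzero rows and fewer than $d_2$ nonzero columns, restrict $H_1$ and $H_2$ to the touched index sets, observe that the restricted classical codes are trivial so the restricted hypergraph product encodes nothing, and conclude $\vec{x}\in\rs H_X$ on that support. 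The one delicate point you should flag rather than wave at is the cross-block bookkeeping: after restricting, one must verify that the restricted $H_Z$ still annihilates the restricted $\vec{x}$ and that a preimage under the restricted $H_X^\top$ lifts to a preimage under the full $H_X^\top$; this is exactly the ``appending zeros'' step in the paper's Lemma~\ref{lem:thickened_code_component_weight} proof, and it is where a careless restriction argument can break.
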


\begin{theorem}[{\cite[Distance-Preserving HGP Codes]{manes2023distance}}]\label{thm:distance_preserving_HGP}
	Let $d$ be the distance of an HGP code.
	The effective distance of the code using any stabilizer measurement circuit is also $d$.
\end{theorem}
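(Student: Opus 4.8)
The plan is to reuse the template from the proof of Theorem~\ref{thm:effective_distance_preservation_thickened_code}, now directly on an HGP code $\calQ$ obtained from two classical codes $\calC_1$ and $\calC_2$ with check matrices $H_1$ (of size $m_1\times n_1$) and $H_2$ (of size $m_2\times n_2$); this is essentially the argument of \cite{manes2023distance}. Recall that the qubits of $\calQ$ split into a ``vertex'' block $V$, arranged as an $n_1\times n_2$ grid, and a ``check'' block $C$, arranged as an $m_1\times m_2$ grid, and that every $X$-stabilizer generator is a tensor of a row of $H_1$ (or of $H_2^\top$) with a standard basis vector; consequently its support meets $V$ in a single column and meets $C$ in a single row. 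Symmetrically, every $Z$-stabilizer generator has support confined to a single row of $V$ and a single column of $C$. The proof would assemble three ingredients: (i) a bound on how far one elementary fault can spread, (ii) a lower bound on the ``line weight'' of any nontrivial logical operator, and (iii) a counting argument combining the two.

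For (i): by Lemma~\ref{lem:elementary_fault_propagation}, an elementary fault occurring during the measurement of a single stabilizer generator, reduced modulo that generator, is supported on at most $\lfloor w/2\rfloor$ data qubits, all within that generator's support. Combined with the structural observation above, such a fault --- regardless of the order in which the entangling gates of that measurement are applied --- affects at most one column of $V$ and one row of $C$ when it is $X$-type (and one row of $V$, one column of $C$ when $Z$-type); a single-qubit data fault trivially obeys the same bound. Hence an error $E=E_V\oplus E_C$ produced by $t$ elementary faults has $X$-component supported on at most $t$ columns of $V$ and at most $t$ rows of $C$, with the transposed statement for the $Z$-component. This holds for \emph{every} single-ancilla schedule, which is exactly why --- unlike the weight-reduced codes treated earlier --- no measurement schedule needs to be engineered.

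For (ii): using the standard description of HGP logical operators (cf.~\cite{tillich2013quantum}) together with Theorem~\ref{thm:distance_HGP}, I would show, just as in Lemma~\ref{lem:thickened_code_component_weight}, that any $\vec x\in\ker H_Z\setminus\rs H_X$, after being brought into its canonical block, has column weight at least $d_X$ in $V$ (and the analogous statement gives any $\vec z\in\ker H_X\setminus\rs H_Z$ row weight at least $d_Z$). Mirroring the second half of the proof of Lemma~\ref{lem:thickened_code_component_weight}, the key point is that if the error occupied fewer than $d$ lines, restricting the HGP code to exactly those lines would yield a sub-HGP-code encoding no logical qubits, forcing the error into the stabilizer group --- a contradiction. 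Combining (i) and (ii) exactly as in Theorem~\ref{thm:effective_distance_preservation_thickened_code} then shows that any fault configuration producing a logical error uses at least $d$ faults; since the effective distance never exceeds the code distance, it equals $d$ for every single-ancilla schedule.

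The step I expect to be the main obstacle is (ii): simultaneously controlling the two blocks $V$ and $C$ and handling the case distinctions of Theorem~\ref{thm:distance_HGP} --- in particular the situations $d_i^\top=\infty$ where the logical operators can be concentrated on one block --- so that ``line weight'' is genuinely governed by $d_X$ and $d_Z$ rather than by an unlucky choice of stabilizer representative. Step (i) is bookkeeping once the ``single column / single row'' structure is in hand, and step (iii) is immediate.
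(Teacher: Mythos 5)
Your proposal follows the same template as the paper's brief proof sketch for this theorem (which the paper does not prove in full but defers to~\cite{manes2023distance}): each stabilizer touches only one column of the vertex block and one row of the check block, so a single fault spreads to at most one line in each block; every nontrivial logical operator must touch at least $d$ distinct lines; hence $d$ faults are required. Your reconstruction is faithful to this argument, including the appropriate worry in step (ii) about the case distinctions in Theorem~\ref{thm:distance_HGP}, which the paper's sketch elides under ``support in at least $d$ distinct rows or $d$ distinct columns.''
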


The main idea behind the proof of the above Theorem~\ref{thm:distance_preserving_HGP} is that an $X$ stabilizer of the HGP code acts only on the bit-type qubits in the same column as the stabilizer as well as the check-type qubits in the same row as the stabilizer.
Therefore, a single error in the stabilizer measurement circuit can only affect those qubits in the same row and column as the stabilizer.
However, every non-trivial logical $X$ operator has support in at least $d$ distinct rows or $d$ distinct columns.
Thus, $d$ elementary faults are needed to cause an $X$ logical error in the HGP code.
The same argument can be made for the non-trivial logical $Z$ operators of the HGP code.

In 2019, Zeng and Pryadko provided a generalization of HGP codes that they term higher-dimensional HGP codes~\cite{zeng2019higher}.
These codes belong to the homological product code family~\cite{bravyi2014homological} that can be decomposed into 1-complexes.
They form $m$-complexes for $m \geq 2$ and also encapsulate all families of toric codes on hypercubic lattices that are $m$-dimensional.
These $m$-complexes can be recursively constructed by repeatedly tensor producting (or thickening) an $(m-1)$-complex with a 1-complex that corresponds to a classical code.
Using Theorem~\ref{thm:effective_distance_preservation_thickened_code} and Theorem~\ref{thm:distance_preserving_HGP}, we can make the following observation regarding higher-dimensional HGP codes using an inductive argument:

\begin{theorem}[Effective Distance of Higher-Dimensional HGP Codes]
	Let $\calQ$ be a $\left[\left[n, k, d_X, d_Z\right]\right]$ higher-dimensional HGP code that has a correspondence to a $D$-complex $\calA$ for some $D \in \N$ such that $D \geq 2$.
	In particular, let the $D$-complex $\calA$ be
	\begin{align}
		\bigotimes_{i \in \set{1, \ldots, D}} \calB_i
	\end{align} 
	where each $\calB_i$ are $1$-complexes with boundary and co-boundary operators $\partial^{\calB_i}_0, \partial^{\calB_i}_1, \delta^{\calB_i}_0, \delta^{\calB_i}_1$.
	Without loss of generality, we assume that $k = \dim \calH_a(\calA)$ for some $a \in \N$ and $1 \leq a \leq D - 1$.
	In other words, the $X$ stabilizer generators, qubits, and $Z$ stabilizer generators of $\calQ$ correspond to the basis elements of the vector spaces $A_{a-1}, A_a, A_{a+1}$ of $\calA$ respectively.
	The quantum code $\calQ$ has the parity check matrices:
	\[H_X = \partial_a^{\calA}\quad\text{and}\quad H_Z = \delta_a^{\calA}.\]
	These $\calB_i$s correspond to the classical $\left[n_c^{(i)}, k_c^{(i)}, d_c^{(i)}\right]$ codes $\calC_i$ with parity check matrices $H_C^{(i)}$ if $i \in [a]$ and $H_C^{(i)\top}$ otherwise such that each $H_C^{(i)}$ is some full-row rank $r_i \times c_i$ binary matrix with $r_i < c_i$.
	Then, the effective distances of $\calQ$ using any single-ancilla stabilizer measurement circuit are also $d_X$ and $d_Z$.
\end{theorem}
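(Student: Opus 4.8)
The plan is to induct on the number $D$ of tensor factors. For the base case $D=2$ we have $\calA = \calB_1\otimes\calB_2$, which is an ordinary hypergraph product code, so Theorem~\ref{thm:distance_preserving_HGP} already gives that every single-ancilla schedule achieves effective distances equal to $d_X$ and $d_Z$. For the inductive step I want to peel off one $1$-complex and recognize the result as (essentially) a generalized thickening of a smaller higher-dimensional HGP code, so that Theorem~\ref{thm:effective_distance_preservation_thickened_code} and the inductive hypothesis apply.

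Concretely, since $1\le a\le D-1$, I would use that the tensor product of chain complexes is commutative up to isomorphism to single out one factor $\calB_{i_0}$ with $\calA\cong\calA'\otimes\calB_{i_0}$, where $\calA'=\bigotimes_{i\ne i_0}\calB_i$ still carries a valid $(D-1)$-factor higher-dimensional HGP code $\calQ'$: if $a\le D-2$ take $i_0>a$ and keep the code of $\calA'$ at window $\{a-1,a,a+1\}$, while if $a=D-1$ take $i_0\le a$ and put the code of $\calA'$ at window $\{a-2,a-1,a\}$. By the K\"unneth formula and the block-triangular form of the boundary map of a tensor product, the $X$-stabilizers, qubits, and $Z$-stabilizers of $\calQ$ coincide with those of the generalized thickening of $\calQ'$ by the classical code $\calC_{i_0}$ (the parity check matrices \eqref{eq:parity_check_matrix_thickened_X}--\eqref{eq:parity_check_matrix_thickened_Z}), together with---only in the middle-index cases $2\le a\le D-2$---extra stabilizer generators indexed by a chain group of $\calA'$ two steps from the window, whose support lies entirely inside a single qubit region (region $B$ for the surplus $X$-generators, region $A$ for the surplus $Z$-generators). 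Re-inspecting the proofs of Proposition~\ref{prop:elementary_faults_bounded_component_weight} and Lemma~\ref{lem:thickened_code_component_weight} I would check these surplus generators are harmless: they do not enlarge the logical-operator spaces of Lemma~\ref{lem:logical_operators_thickened} (so by Theorem~\ref{thm:thickening_distance} the code distances of $\calQ$ are those of $\calQ'$ multiplied by the relevant classical-code distance in one direction and unchanged in the other, using that the appropriate $\calH^1$ or $\calH_1$ of $\calB_{i_0}$ vanishes), and they have $|(s)_A|_C=0$ (resp.\ $|(s)_A|_R=0$) and only ever produce hook errors confined to one region, so the component-weight bounds of Proposition~\ref{prop:elementary_faults_bounded_component_weight} are unaffected.

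With this in hand I would run the argument of Theorem~\ref{thm:effective_distance_preservation_thickened_code} but for an \emph{arbitrary} single-ancilla schedule $M'$ of $\calQ$. The key point is that an $X$-type logical error can only be assembled from $X$-type errors, and $X$-hook errors propagate out of $X$-stabilizer measurements only; since every $X$-stabilizer $s$ of $\calQ$ has $|(s)_A|_C\le 1$, each elementary fault of $\calQ$ deposits $X$-support in at most one column of region $A$. Restricting $M'$ to a fixed column $k$ yields a single-ancilla schedule of $\calQ'$, and Lemma~\ref{lem:thickened_code_component_weight} guarantees that for at least $d_c^{(i_0)}$ columns $k$ the column-$k$ restriction of any $X$-logical of $\calQ$ is a nontrivial $X$-logical of $\calQ'$; by the inductive hypothesis each such column demands at least $d_X'$ faults, and since each $\calQ$-fault touches at most one column this gives $\overline{d}^{M'}_X\ge d_c^{(i_0)}d_X'=d_X$, hence equality with the code distance. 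The $Z$-direction is symmetric: a nontrivial $Z$-logical of $\calQ$ has at least one nontrivial column restriction that is a $Z$-logical of $\calQ'$, the $\calQ$-faults touching that column restrict to a legitimate fault set of $\calQ'$'s circuit (a cross-region $Z[B]$-hook error appears there as a single-qubit data fault), so $\overline{d}^{M'}_Z\ge d_Z'=d_Z$. Combining the two directions closes the induction, and the distances accumulate to exactly the code distances of $\calQ$ given by Theorem~\ref{thm:distance_HGP}/Theorem~\ref{thm:thickening_distance}.

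The step I expect to be the main obstacle is the comparison in the second paragraph: in the middle-index cases $\calQ$ is genuinely not the generalized thickening of any single quantum code, because the chain-complex condition $\partial^2=0$ forces additional stabilizer generators onto $\calQ$, and one must argue carefully that these surplus generators are invisible both to the logical-operator characterization of Lemma~\ref{lem:logical_operators_thickened} and to the elementary-fault propagation bound of Proposition~\ref{prop:elementary_faults_bounded_component_weight}. A secondary technical point is making the ``restriction of an arbitrary schedule to one column'' precise and verifying that, after this restriction, a fixed column receives contributions only from faults of $\calQ'$'s own single-ancilla circuit or from single-qubit (data-fault-equivalent) errors, so that the inductive hypothesis can legitimately be invoked column by column.
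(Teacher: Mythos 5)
Your proposal takes essentially the same route as the paper's proof: induct on the number of tensor factors, with the base case given by Theorem~\ref{thm:distance_preserving_HGP} and the inductive step given by Theorem~\ref{thm:effective_distance_preservation_thickened_code} (and its dualized variant). The paper organizes the induction slightly differently---it first splits $\calA = \calB_L \otimes \calB_R$ with $\calB_L = \bigotimes_{i=1}^{a-1}\calB_i$, nests $\calB_R$ from $\calB_a \otimes \calB_{a+1}$ outward using the original thickening theorem, and then absorbs the $\calB_L$ factors one by one using the adapted (dual) theorem---whereas you peel off a single outer factor $\calB_{i_0}$ at each step, choosing $i_0 > a$ when the window is interior and $i_0 \le a$ once $a = D-1$. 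These are equivalent bookkeeping choices for the same underlying argument.

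Where your proposal diverges from the paper is in the level of care: you flag two points that the paper's proof quietly elides.
First, the ``surplus generator'' issue: when the code window is not at a boundary of the reduced complex, the generalized thickening of the sub-code does not reproduce the full chain group two steps from the window, and there are additional stabilizer generators indexed by $A'_{a-2}\otimes(B_{i_0})_1$. You are right that these exist precisely in the middle-index cases $2 \le a \le D-2$, that their image under $\delta_{a-1}$ lies entirely in region~$B$ (so $\lvert(s)_A\rvert_C = 0$), and that they are in fact redundant in the stabilizer group (each $\delta^{\calA'}(a'')\otimes b''$ with $b'' \in \text{im}\,\delta^{\calB_{i_0}}$ is already $\delta_{a-1}\bigl(\delta^{\calA'}(a'')\otimes b\bigr)$ for a preimage $b$, using $\delta^2=0$); Proposition~\ref{prop:elementary_faults_bounded_component_weight} and Lemma~\ref{lem:thickened_code_component_weight} therefore survive unchanged. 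The paper faces the dual version of this issue during the $\calB_L$-absorption steps but does not discuss it.
Second, the quantifier on schedules: Theorem~\ref{thm:effective_distance_preservation_thickened_code} is stated in ``exists a schedule'' form, yet the HGP theorem claims ``for any schedule''; your fix---run the argument for an arbitrary schedule $M'$, restrict to one column of region~$A$, and observe that the restriction is a legitimate single-ancilla schedule for $\calQ'$ so the inductive hypothesis applies column by column---is exactly what is needed and is a genuine strengthening of what the paper writes.

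One small correction: with your peeling rule (right factor when $a \le D-2$, left factor when $a = D-1$), surplus generators are always $X$-type. When $a = D-1$ and you peel a left factor, $A'_{D} = 0$ so $T_3 = A'_{D-1}\otimes(B_{i_0})_1$ already equals $A_D$, and the $X$-side also matches exactly; there are no surplus $Z$-generators in any case you encounter. Your parenthetical ``region~$A$ for the surplus $Z$-generators'' describes the complementary peeling direction (which the paper's decomposition does use, and which you do not), so it is moot for your own argument.
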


\begin{proof}
	Let us decompose the $D$-complex $\calA$ into the following two complexes:
	\begin{align}
		\calA &= \underbrace{\left(\bigotimes_{i = 1}^{a - 1} \calB_i\right)}_{\calB_L} \otimes \underbrace{\left(\bigotimes_{j = a}^{D} \calB_j\right)}_{\calB_R}.
	\end{align}
	Since $\calB_R$ is a complex with at least $3$ terms given $a \leq D-2$, we can consider it as the following quantum code:
	\[\calQ_R\left(H_X = \partial_1^{\calB_R}, H_Z = \delta_1^{\calB_R}\right).\]
	By the associativity of the tensor product of chain complexes, we can express $\calB_R$ as 
	\begin{align}
		\left(\ldots\left(\left(\calB_a \otimes \calB_{a+1}\right) \otimes \calB_{a+2}\right) \ldots \otimes \calB_D\right). \label{eq:calB_R}
	\end{align}

	Let us denote the $X$ distance and $Z$ distance of $\calQ_R$ as $d_X^R$ and $d_Z^R$ respectively.
	By Theorem~\ref{thm:thickening_distance} and \eqref{eq:calB_R},  we have
	\begin{align}
		d_X^R \geq \min\left(d_a, d_{a+1}\right)\prod_{j = a+2}^{D}d_j \quad\text{and}\quad d_Z^R \geq \min\left(d_{a}^\top, d_{a+1}^\top\right) \label{eq:distance_calB_R}
	\end{align}
	where $d_a^\top$ and $d_{a+1}^\top$ are the distances of the classical codes that correspond to the co-complexes of $\calB_a$ and $\calB_{a+1}$ respectively.
	\eqref{eq:distance_calB_R} can be shown easily via induction.
	For the base case where $D = a + 1$, the statement is true from Theorem~\ref{thm:distance_HGP}.
	For the inductive step, suppose the statement is true for $D = k$ for some $k \in \N$ and $k \geq 2$.
	Then, by Theorem~\ref{thm:thickening_distance}, the induction step can be shown to be true for $D = k + 1$.
	The same induction argument can be applied for the effective distances of $\calQ_R$.
	The base case and inductive step can be shown to be true by Theorems~\ref{thm:distance_preserving_HGP}~and~\ref{thm:effective_distance_preservation_thickened_code} respectively.

	By using the associativity of the tensor product of chain complexes again, we can express $\calA$ as
	\begin{align}
		\calA = \left(\calB_1 \otimes \ldots \left(\calB_{a-2} \otimes \left(\calB_{a-1} \otimes \calB_R\right)\right)\ldots\right). \label{eq:calA}
	\end{align}
	Using the same inductive argument as before as well as Theorem~\ref{thm:thickening_distance} and \eqref{eq:calA}, we have the following for the distances of $\calQ$:
	\begin{align}
		d_X = d_X^R \quad\text{and}\quad d_Z = \left(\prod_{i = 1}^{a-1}d_i\right) \cdot d_Z^R. \label{eq:distance_calA}
	\end{align}
	To complete our proof for the effective distances of $\calQ$, we can use the same inductive argument as before including Theorem~\ref{thm:effective_distance_preservation_thickened_code} and \eqref{eq:calA}.
	By using the adapted version of Theorem~\ref{thm:effective_distance_preservation_thickened_code} for the generalized thickening of the $Z$ distance, we can inductively show that the effective distances of $\calQ$ are preserved under any single-ancilla stabilizer measurement circuit.
	\end{proof}

	Another way to interpret the above theorem is that the complex $\calB_L$ serves to ``lift'' the base HGP code formed by the tensor product of $\calB_{a}$ and $\calB_{a+1}$ to the $a$\textsuperscript{th} dimension.
	This process helps to thicken the $Z$ distance and also preserves the increase in $Z$ distance.
	Then, the subsequent tensor products with the remaining $\calB_i$s help to thicken the $X$ distance and preserve the increase in $X$ distance.

\subsection{Distance Preservation of Reduced Cone Codes}
\label{sec:distance_preservation_reduced_cone}
In this section, we will show that the effective distance of reduced cone codes is approximately preserved under single-ancilla syndrome extraction circuits.
In particular, we construct a single-ancilla stabilizer measurement schedule for reduced cone codes that perfectly preserves the effective $X$ distance and preserves the $Z$ distance up to a factor of $\Omega\left(1/q_X\right)$.

\begin{theorem}[Partial Effective Distance Preservation for Reduced Cone Codes]
	Let $\calQ$ be a quantum code with distances $d_X$ and $d_Z$ that has effective distances $\overline{d}^M_X$ and $\overline{d}^M_Z$ for some single-ancilla stabilizer measurement schedule $M$.
	Suppose we let $\calQ'$ be the reduced cone code that we obtain after applying coning to all $Z$ stabilizer generators with weight larger than 5.
	The new distances of $\calQ'$ are $d_X' = d_X$ and $d_Z' \geq d_Z \lambda \ell$ where $\lambda$ is the soundness factor defined in Ref.~\cite{hastings2021quantum} and $\ell$ is the thickness factor introduced by thickening in the cone reduction step.
	Then, there exists a single-ancilla stabilizer emasurement schedule $M'$ for $\calQ'$ such that the effective distances of $\calQ'$ are $\overline{d'}^{M'}_X = \overline{d}^M_X$ and $\overline{d'}^{M'}_Z \geq \Omega\left(1/q_X\right) \overline{d}^M_Z \lambda \ell$.
\end{theorem}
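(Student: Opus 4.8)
The plan is to peel the reduced cone construction into its three layers --- the cone $\Cone(\calQ,f)$, the dual thickening $\Cone(\calQ,f,\ell)$, and the cellulation $\Cone_R(\calQ,f,\ell)$ --- and to push an effective‑distance guarantee through each, handling the $X$ and $Z$ sides by the two different mechanisms natural to them. For the cone code I will use the schedule $M_c$ that mirrors the earlier arguments: whenever $M$ measures a stabilizer of $\calQ$, measure the corresponding stabilizer of $\Cone(\calQ,f)$ by first doing the entangling gates on the original qubits (those in $A_1$) in the order $M$ dictates, and only afterward the entangling gates on the new qubits (those in $B_0=\bigoplus_i X_i$); finally measure the new $X$ stabilizers (the $(-1)$‑cells $\bigoplus_i R_i$) in any order. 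The dual‑thickening layer (with choosing heights) I will treat as a black box via the adapted form of Theorem~\ref{thm:effective_distance_preservation_thickened_code}, and cellulation only refines $X$ stabilizers, so its schedule is inherited; the composite is $M'$.

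The $Z$ side requires essentially no schedule analysis. Every $Z$ stabilizer of $\calQ'$ has weight $w'_Z\le q_X+O(1)$, so by the $Z$‑analogue of Lemma~\ref{lem:elementary_fault_propagation} a single elementary fault in any stabilizer measurement circuit produces a $Z$ error on at most $\lfloor w'_Z/2\rfloor=O(q_X)$ data qubits. Since every nontrivial $Z$ logical of $\calQ'$ has weight at least $d'_Z\ge d_Z\lambda\ell\ge\overline{d}^M_Z\lambda\ell$ (the effective distance being at most the code distance), producing one costs $\Omega\!\left(d_Z\lambda\ell/q_X\right)=\Omega\!\left(\overline{d}^M_Z\lambda\ell/q_X\right)$ elementary faults, for \emph{any} schedule and in particular $M'$.

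The $X$ side rests on a structural claim about the cone code: if $E=(E_A,E_B)$, with $E_A$ supported on $A_1$ and $E_B$ on $B_0$, is a nontrivial $X$ logical of $\Cone(\calQ,f)$, then $E_A$ is a nontrivial $X$ logical of $\calQ$, that is $E_A\in\ker H_Z\setminus\rs H_X$. To get $E_A\in\ker H_Z$: $E$ commutes with the new $Z$ stabilizer attached to each $1$‑cell $v\in Q_i$, which acts on $f_1(v)\in A_1$ together with the $0$‑cells in $\partial_B v$; summing these relations over all $v\in Q_i$, the $B_0$ contribution cancels because every $0$‑cell $(s_{X,j},q_k,q_l)$ lies in the boundary of exactly two $1$‑cells, leaving $\langle E_A,s_{Z,i}\rangle=0$ for each coned $s_{Z,i}$ (and trivially for the direct ones). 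To get $E_A\notin\rs H_X$: if $E_A=H_X^{\top}\sigma$, subtracting the combination of cone‑code $X$ stabilizers indexed by $\sigma$ --- whose $A_1$‑parts are exactly the corresponding rows of $H_X$ --- reduces $E$, modulo $X$ stabilizers, to an operator supported only on $B_0$; such an operator lies in $\ker\delta_0^{B}$, and because each $\overline{\calB}_i$ is built via the decongestion lemma with $\calH_0(\overline{\calB}_i)=0$, hence $\calH^0(\overline{\calB}_i)=0$, it also lies in the row span of the new $X$ stabilizers, so $E$ was trivial --- a contradiction.

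Given the claim, $M_c$ closes the lower bound: the $A_1$‑part of the error produced by a set of faults is affected only by single‑qubit $X$ faults on $A_1$ and by the $A_1$‑phase hook errors of the original $X$ stabilizers --- $Z$‑stabilizer measurements do not propagate $X$ errors onto data qubits, and $B_0$‑phase hooks, new‑$X$‑stabilizer hooks, and $B_0$ data faults contribute only to $E_B$ --- and, since the $A_1$‑phase replays $M$ on $\supp(s_X)$, the $A_1$‑contribution of each such hook is exactly a hook error of $s_X$ in $\calQ$ under $M$. Hence the faults that generate $E_A$ form a legitimate fault configuration for $\calQ$ under $M$, and since $E_A$ is a nontrivial $X$ logical of $\calQ$ at least $\overline{d}^M_X$ of them are required; combined with $\overline{d'}^{M'}_X\le d'_X=d_X$ this gives $\overline{d'}^{M'}_X=\overline{d}^M_X$ for $\Cone(\calQ,f)$, the adapted Theorem~\ref{thm:effective_distance_preservation_thickened_code} transports it with no loss (dual thickening boosts only $d_Z$), and choosing heights and cellulation only remove or refine stabilizers and so cannot enlarge the set of possible hook errors. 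The hard part will be the structural claim for the cone code --- ruling out that hook errors on the large‑weight ($w'_X=O(w_X^2 q_Z)$) cone‑code $X$ stabilizers, combined with errors on the many new qubits, build an $X$ logical more cheaply than any attack on $\calQ$ --- which hinges on the exact two‑fold pairing of the $0$‑cells and on $\calH^0(\overline{\calB}_i)=0$; keeping the schedule and logical‑operator bookkeeping consistent across coning, dual thickening (with choosing heights), and cellulation is the remaining technical burden.
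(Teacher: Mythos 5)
Your proposal is correct and follows the paper's three--layer plan (cone, dual thickening with choosing heights, cellulation) for the $X$ side, but it handles the $Z$ side by a genuinely different and cleaner route. The paper proves the cone--code $Z$ bound by carefully defining fault subsets $S$, $S'$ and bounding the ratio $|S|/|S'|$ via the soundness factor $\lambda_C$, then transports the bound through thickening and cellulation step by step. You instead observe that the \emph{final} reduced cone code already has $w'_Z \leq q_X + O(1)$, so a single elementary fault propagates to at most $\lfloor w'_Z/2\rfloor = O(q_X)$ data qubits (Lemma~\ref{lem:elementary_fault_propagation}), and since any nontrivial $Z$ logical has weight at least $d'_Z \geq d_Z\lambda\ell \geq \overline{d}^M_Z\lambda\ell$, any schedule needs $\Omega(\overline{d}^M_Z\lambda\ell/q_X)$ faults. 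This is schedule-independent, avoids tracking the intermediate quantities $\lambda_C$, $d_Z^{M_C}$, $d_Z^{M_{\text{thick}}}$, and in fact proves the slightly stronger bound $\Omega(d_Z\lambda\ell/q_X)$. The two approaches are logically independent; the paper's gives more insight into how faults localize around the cones, while yours is shorter and more robust to bookkeeping errors across the three layers.

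On the $X$ side your structural claim --- that the $A_1$-part $E_A$ of any nontrivial $X$ logical of $\Cone(\calQ,f)$ is itself a nontrivial $X$ logical of $\calQ$ --- is the same fact the paper uses, but your justification is more rigorous. The paper asserts that one can ``clean'' $\vec{x}$ to kill its $\calB$ support without altering its $\calA$ support, which is not obvious because multiplying by an $X$ stabilizer in $A_0$ also changes the $A_1$ component and can flip several $B_0$ qubits at once. Your two-step argument (summing the commutation relations over $Q_i$ and using the exact two-fold pairing to show $E_A\in\ker H_Z$; cleaning $E_A\in\rs H_X$ down to a $B_0$-only operator and invoking $\calH^0(\overline{\calB}_i)=0$ to show it is a stabilizer) is complete. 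The schedule argument ($A_1$-phase first in $M$-order, then $B_0$-phase; $Z$ measurements and $B_{-1}$ stabilizers cannot inject $X$ errors into $A_1$) matches the paper's $M_C$. Your closing caveat --- that the bookkeeping of $f_0$ and the $A_0$ stabilizers' $B_0$ support must be re-checked after cellulation --- is a real technical point that the paper also glosses over, so it is a shared, not new, burden.
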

\begin{proof}
	To prove the theorem statement, we break the proof down into three parts.
	The three different parts will correspond to the codes resulting from the three different steps in Hastings's construction of coning: cone code, thickened and height-chosen cone code, and reduced cone code.
	In the first part, we adapt the measurement schedule $M$ for $\calQ$ to a measurement schedule $M_C$ for the cone code $\Cone\left(\calQ, f\right)$.
	In the second part, we adapt the measurement schedule $M_C$ for the cone code to a measurement schedule $M_{\text{thick}}$ for the thickened and height-chosen cone code $\Cone\left(\calQ, f, \ell\right)$.
	In the final part, we adapt the measurement schedule $M_{\text{thick}}$ for the thickened and height-chosen cone code to a measurement schedule $M'$ for the reduced cone code post-cellulation $\calQ' = \Cone_R\left(\calQ, f, \ell\right)$.

	\textbf{Part 1: Adaptation of $M$ to $M_C$ for the Cone Code.}
	Consider the following stabilizer measurement schedule $M_C$ for the cone code $\Cone\left(\calQ, f\right)$:
	\begin{enumerate}
		\item Iterate through the given schedule $M$ for $\calQ$.
		\begin{enumerate}
			\item If the current stabilizer generator being iterated through is some $s_X$ that corresponds to some row of $H_X$, then entangle the qubits in the support of $s_X$ in the same order as how it would be measured in $M$.
				If the $s_X$ corresponds to a $X$ stabilizer generator in the cone code that now has support on a qubit in the cone, i.e., a 0-cell in some $\overline{\calB}_i$, then entangle that additional qubit in the cone code at the end after entangling all the original qubits.
			\item If the current stabilizer generator being iterated through is some $s_Z$ that corresponds to some row of $H_Z$ and has weight lesser than or equal to 5, entangle the qubits in the support of $s_Z$ in the same order as how it would be measured in $M$.
			\item If the current stabilizer generator being iterated through is some $s_Z$ that corresponds to some row of $H_Z$ and has weight greater than 5, perform the following:
			\begin{enumerate}
				\item Let $\left\{s_{Z,j}\right\}_j$ be the set of new $Z$ stabilizer generators that correspond to the 1-cells in some $\overline{\calB}_i$ that corresponds to $s_Z$.
				\item Order the $s_{Z,j}$s in the same order as how the qubits in the support of $s_Z$ would have been measured in $M$.\footnote{Because each $s_{Z,j}$ has a direct correspondence to a qubit that $s_Z$ acts on, there is an obvious well-defined mapping and ordering.}
				\item Measure each of the $s_{Z,j}$ in the order that they were ordered in the previous step such that the qubits in the support of each $s_{Z,j}$ are entangled in any order.
			\end{enumerate}
		\end{enumerate}
		\item Iterate the new $X$ stabilizer generators that corresponds to the -1-cells in the $\overline{\calB}_i$s in any order and entangle the qubits in each of these $X$ stabilizer generators in any order.
	\end{enumerate}

	We now show that the effective distances of the cone code are preserved under the measurement schedule $M_C$.
	We start by showing that the effective $X$ distance is preserved, i.e. $\overline{d'}^{M_C}_X = \overline{d}^M_X$.
 	Let $\vec{x}$ be a non-trivial logical $X$ operator of the cone code.
	In addition, let $\vec{x}|_{\calA}$ be the restriction of $\vec{x}$ to the qubits in $\calA$ and $\vec{x}|_{\calB}$ be the restriction of $\vec{x}$ to the qubits in the 0-cells of $\calB = \bigoplus_i \overline{\calB}_i$.
	
	We argue that it is possible to clean $\vec{x}$ such that $\vec{x}|_{\calB} = \vec{0}$.
	To see this, consider the following process: for every qubit in the support of $\vec{x}$ that corresponds to a 0-cell of some $\overline{\calB}_i$, we can apply the $X$ stabilizer generator that is associated with that 0-cell via the chain map $f$ such that we obtain a new $\vec{x}'$ that has the same support as $\vec{x}$ but $\vec{x'}|_{\calB} = \vec{0}$.
	Since $\vec{x}'$ only has support on the qubits in $\calA$ and commutes with all the $Z$ stabilizer generators, it is a non-trivial logical operator of $\calQ$ and has weight at least $d_X$.
	In other words, any non-trivial logical operator of the cone code has weight at least $d_X$ since its existing support on the qubits in $\calA$ has to be a non-trivial logical operator of $\calQ$.
	By our assumption in the theorem statement, we require at least $\overline{d}^M_X$ elementary faults to generate $\vec{x}'$ if we measure the different $s_X$s according to $M$.
	In addition, the new $X$ stabilizer generators that correspond to the -1-cells of the $\overline{\calB}_i$s only have support on the qubits in the 0-cells of the $\overline{\calB}_i$s so the hook errors from their single-ancilla stabilizer measurement do not affect the qubits in $\calA$.
	Thus, the measurement order of the new $X$ stabilizer generators does not affect the effective $X$ distance of $\Cone\left(\calQ, f\right)$.
	Therefore, our construction of $M_C$ perfectly preserves the effective $X$ distance of the cone code.

	Next, we prove the effective $Z$ distance of the cone code.
	We claim that the effective $Z$ distance of the cone code	is given by $\overline{d'}^{M_C}_Z \geq \Omega\left(1/q_X\right)\overline{d}^M_Z \lambda_C$ where $\lambda_C$ is the soundness factor defined as the following for the chain complex obtained for $\Cone\left(\calQ, f\right)$:
	\[\lambda_C = \min\left(1, \min_i\left(\min_{u \in \left(\overline{\calB}_i\right)_0, \partial u = 0, u \neq 0}\left(\max_{v \in \left(\overline{\calB}_i\right)_1, u = \partial v}\frac{|u|}{|v|}\right)\right)\right).\]
	Note that $|v|$ can be also be interpreted as the number of qubits in $\calA$ that are in the support of the $Z$ stabilizer generators in $\calS_{Z,\calA^\ast}^c$ that are associated to the 1-chain $v$ by the chain map $f$. 
	Let $\vec{z}$ be a non-trivial logical $Z$ operator of the cone code.
	We argue that it is possible to clean $\vec{z}$ such that $\vec{z}|_{\calB} = \vec{0}$.
	To see this, consider the following process: for the qubits in the support of $\vec{z}|_{\calB}$, we can apply a set of $Z$ stabilizer generators that is associated to some 1-chain in $\calB$ such that we obtain a new $\vec{z}'$ where $\vec{z}'|_{\overline{\calB}} = \vec{0}$.
	This is because the zeroth homology group in $\calB$ is trivial which implies that every 0-cycle is a boundary of some 1-chain, i.e., any $Z$ pauli operator on the qubits in $\calB$ that commutes with the $X$ stabilizer generators is generated by the $Z$ stabilizer generators in $\calB$.
	Using the soundness factor defined as well as the fact that each 1-cell in the 1-chain in $\calB$ can be mapped to a qubit in $\calA$ by the chain map $f$, the ratio of the number of qubits in $\calB$ cleaned to the number of new qubits introduced in $\calA$ is at least $\lambda_C$.
	Therefore, the $Z$ distance of $\Cone\left(\calQ, f\right)$ is at least $\lambda_C d_Z$ since every $Z$ logical operator of $\calQ$ can be shortened by a factor of at most $\lambda_C$ by ``re-routing'' the 1-chain through the ``cones''.

	Using the $\lambda_C$ factor, we now attempt to bound a similar ratio but for the elementary faults instead of explicit qubit errors.
	Let $f'$ be an arbitrary minimal length fault path that generates $\vec{z}'$.
	By assumption, $f'$ contains at least $\overline{d}^M_Z$ elementary faults.
	Let $S'$ be the subset of $f'$ that contains elementary faults that result in $Z$ errors on the qubits in the support of $\calS_{Z,\calA^\ast}^c$.
	In other words, $S'$ is a set of elementary faults that are either data qubit errors or gate error / ancilla qubit errors that propagate to data qubits that lie in the boundary of the $Z$ stabilizer generators that have been removed from $\calA$.
	Similarly, we can define $f$ to be the fault path that generates $\vec{z}$.
	We define $S$ to be the subset of $f$ that contains elementary faults that result in the 0-cycles in $\calB$ in $\vec{z}$.
	Thus, the ratio that we hope to bound is $|S|/|S'|$.
	Then, we claim the following:
	\[\frac{|S|}{|S'|} \geq \Omega(1/q_X)\lambda_C.\]
	To see this, note that the number of elementary faults to generate a Pauli $Z$ operator on some 0-cycle $u$ in $\overline{\calB}_i$ is at least $\Omega(1/q_X) |u|$.
	This is easy to see from Lemma~\ref{lem:elementary_fault_propagation} and the fact that the $Z$ stabilizer generators in $\calS_{Z, \calB}$ have weight at most $q_X + O(1)$.
	In addition, the number of elementary faults to generate $v'$, i.e., a Pauli $Z$ operator on the qubits in the support of the $Z$ stabilizer generators in $\calS_{Z, \calA^\ast}^c$, is at most $|v'|$.
	Therefore, our claim naturally follows from the two observations above.
	Thus, we can ``re-route'' the fault path $f'$ that generates $\vec{z}'$ such that the ratio of the number of elementary faults in $S$ to the number of elementary faults in $S'$ is at least $\Omega(1/q_X)\lambda_C$.
	This gives us a fault path $f$ that generates $\vec{z}$ and is at most a factor of $\Omega(1/q_X)\lambda_C$ shorter than $f'$.
	Therefore, the effective $Z$ distance of the cone code is at least $\Omega(1/q_X)\lambda_C d_Z^M$.

	\textbf{Part 2: Adaptation of $M_C$ to $M_{\text{thick}}$ for the Thickened and Height-Chosen Coned Code.}
	By using Theorem~\ref{thm:effective_distance_preservation_thickened_code}, we can construct a single-ancilla stabilizer measurement schedule $M_{\text{thick}}$ for the thickened and height-chosen cone code $\Cone\left(\calQ, f, \ell\right)$ such that $d^{M_{\text{thick}}}_X = d^{M_C}_X$ and $d^{M_{\text{thick}}}_Z = d^{M_C}_Z \ell \geq \Omega\left(1/q_X\right) d^M_Z \ell \lambda_C$.

	\textbf{Part 3: Adaptation of $M_{\text{thick}}$ to $M'$ for the Reduced Cone Code.}
	Having cellulated the different $\overline{\calB}_i$s, we define a new soundness factor $\lambda$ in the same way as $\lambda_C$ but with respect to the new chain complexes after cellulation.
	To adapt $M_{\text{thick}}$ to $M'$, we can use the same schedule construction as in Part 1 for the $X$ stabilizer generators in $\calA$ and all the $Z$ stabilizer generators.
	For the $X$ stabilizer generators in $\overline{\calB}$, we can measure them in any order and entangle the qubits in their support in any order.
	Using the proof argument from Part 1 with the new soundness factor $\lambda$, we can show that the effective distances of $\calQ' = \Cone_R\left(\calQ, f, \ell\right)$ under the measurement schedule $M'$ are $d^{M'}_X = d^{M_{\text{thick}}}_X = d^M_X$ and $d^{M'}_Z \geq \Omega\left(1/q_X\right) d^M_Z \ell \lambda$.
\end{proof}

\section{Discussion and Outlook}\label{sec:discussion}
In this paper, we made progress towards understanding the fault tolerance of Hastings's weight reduction techniques and proved that the effective distance of the weight-reduced code is largely preserved by adapting known single-ancilla syndrome extraction circuits.
We also showed that the distance balancing technique in Ref.~\cite{evra2022decodable} preserves effective distance and generalize the result to higher-dimensional hypergraph product (HGP) codes.
Crucially, our results show that higher-dimensional HGP codes are great candidates for fault-tolerant quantum computation because of their robustness against hook errors in addition to their amenability to single-shot decoding and parallel logical gates.
In Hastings's work, the weight reduction techniques--\emph{copying}, \emph{gauging}, \emph{thickening and choosing heights}, and \emph{coning}--are applied in this specific order to reduce the weights of an arbitrary CSS quantum code to at most 5.
A naive application of Lemma~\ref{lem:elementary_fault_propagation} would allow us to conclude that we can preserve the effective distances up to at most a factor of 1/2 without having to be given some single-ancilla syndrome extraction circuit that preserves the effective distances of the original code up to some extent.
However, it is not always the case where we want to reduce the weights of our codes to the minimum weight possible and there might also be independent interest to utilize individual weight reduction techniques.
Our results thus further our understanding on how specific and independent applications of the techniques influence the effective distances of the quantum code.
While we do not expect Hastings's weight reduction techniques to produce QECCs with low weights that necessarily have competitive code parameters when compared to bivariate bicycle codes and lifted product codes, our work makes progress in advancing our understanding on the tradeoffs between quantum weights and effective distances.

While we have constructed several single-ancilla stabilizer measurement circuits for the different weight reduction techniques, it would be interesting to consider the optimality of the adapted syndrome measurement schedule and whether there exists schedules that are more efficient for specific quantum codes that are weight reduced.
Because weight reduction is crucial for practical implementation of quantum error correction, it would also be interesting to investigate how we might be able to preserve other interesting code properties beyond effective distance like transversal logical gates after weight reduction.
With regards to possible future work for weight reduction, it would be interesting to consider weight reduction for qudit subsystem CSS codes.
It may also be interesting for the community to explore a more general homological framework for weight reduction.

\section*{Acknowledgments}
\label{sec:acknowledgments}
We are grateful for the helpful discussions and feedback from Eric Sabo, Benjamin Ide, Anirudh Krishna, Qian Xu, Argyris Giannisis Manes, Xiaozhen Fu, Guo (Jerry) Zheng, Ruixi Huang, Victor Albert, and Daniel Gottesman.
We especially thank Christopher Pattison for initial discussions and introducing quantum weight reduction to us.
We also thank the anonymous QIP reviewers for encouraging us to address the effective distance preservation for coning and pointing out a mistake in our proof for copying and gauging.
We thank Matthew B. Hastings for allowing us to use the coning figure from Ref.~\cite{hastings2021quantum} in Fig.~\ref{fig:coning}.
SJST acknowledges funding and support from Joint Center for Quantum Information and Computer Science (QuICS) Lanczos Graduate Fellowship, MathQuantum Graduate Fellowship, and the National University of Singapore (NUS) Development Grant.
LS acknowledges funding and support from the NSF Graduate Research Fellowship Program.

\bibliographystyle{alphaurl}
\bibliography{bib/ref}

\end{document}